\newtheorem{thm}{Theorem}[section]
\newtheorem{remark}{Remark}[section]
\newtheorem{exmp}{Example}[section]
\begin{document}


\title{Scrambling and decoding the charged quantum information}

\author[a,b]{Junyu Liu}

\affiliation[a]{Walter Burke Institute for Theoretical Physics, California Institute of Technology,\\ Pasadena, California 91125, USA}
\affiliation[b]{Institute for Quantum Information and Matter, California Institute of Technology,\\ Pasadena, California 91125, USA}

\abstract{Some deep conjectures about quantum gravity are closely related to the role of symmetries in the gravitational background, especially for quantum black holes. In this paper, we systematically study the theory of quantum information for a charged, chaotic system. We show how the quantum information in the whole system has been represented by its charge sectors, using the theory of quantum chaos and quantum error correction, with concrete examples in the context of the complex SYK model. We discuss possible implications for black hole thought experiments and conjectures about quantum gravity in the dynamical setup. We believe this work will have potential applications from theories of quantum gravity to quantum simulation in quantum devices.}
\maketitle

\section{Overview}
As C.N.Yang has stated, \emph{Symmetry dictates interaction}. Symmetry often plays a fundamental role in modern quantum physics. However, symmetry is pretty hard to understand when we discuss theory with semiclassical diffeomorphism invariance, namely, gravity, especially with the existence of black holes.

We will discuss two famous conjectures about symmetry in quantum gravity. Firstly, people believe that in a consistent quantum gravitational theory, there is no precise definition of global symmetry (see some early arguments, for instance, \cite{Misner:1957mt,Banks:2010zn}). In fact, since black holes cannot distinguish global symmetry according to the no-hair theorem, global symmetry in quantum gravity will lead to an infinite number of indistinguishable states and trouble for black hole remnants. This is called the \emph{no-global-symmetry conjecture}.

Secondly, it is conjectured that gravity is the weakest force for allowed quantum gravity theories. Roughly speaking, for quantum gravitational theory associated with U(1) gauge symmetry, there always exist states whose charge-to-mass ratios are larger than a universal lower bound, which is equal to $1/M_\text{planck}$. This is called the \emph{weak gravity conjecture}  \cite{ArkaniHamed:2006dz,Kats:2006xp,Cheung:2014vva,Harlow:2015lma,Cheung:2018cwt,Montero:2018fns,Cheung:2019cwi}. The original argument for this conjecture is also related to black holes: if all black hole states have small charge-to-mass ratios, those black holes are hard to decay, again causing a large number of states.  

Those two conjectures put significant constraints on the space of effective field theories that are allowed by rules of quantum gravity, sharpening our understanding about the boundary of the string theory landscape \cite{Susskind:1995da,Susskind:2003kw,Vafa:2005ui,ArkaniHamed:2006dz,Adams:2006sv,Ooguri:2006in}. Furthermore, symmetry might also be important towards a resolution of the black hole information paradox \cite{Hawking:1974sw}. For instance, There are proposals suggesting that supertranslation symmetry breaking may provide soft gravitons, and tracing out soft modes may provide a thermal spectrum during the Hawking radiation process \cite{Hawking:2016msc}.

Recently, important progress has been made about symmetry in quantum gravity. Combining technologies from holography and quantum information science, people formulate a precise notion of global and gauge symmetries in gravitational theory, and moreover, give a physical proof of the no-global-symmetry conjecture in holographic theories \cite{Harlow:2018jwu,Harlow:2018tng}. The proof is based on the quantum error correction theory of AdS/CFT: in AdS/CFT, the holographic dictionary is understood as an error correction code, where the code subspace corresponds to the low energy sector in CFT and effective field theory in the bulk \cite{Dong:2016eik,Almheiri:2014lwa,Harlow:2016vwg}. From the quantum information point of view, the no global symmetry statement is shown to be closely related to the Eastin-Knill theorem \cite{EK,EK2,EK3} in quantum error correction \cite{Hayden:2017jjm,Harlowtalk,Faist:2019ahr}: there is no exact covariant code associated with continuous global symmetry. 

The novel proof, given in \cite{Harlow:2018jwu,Harlow:2018tng}, shows the power of quantum information science when applying to contexts of quantum gravity. However, the proof itself cannot manifestly make use of black holes and their radiation, which are closely related to early intuitions about both conjectures. Thus, it is natural to ask the following question: is it possible to make statements about symmetry in quantum gravity, manifestly following early intuitions about the black hole radiation (decay) process?

In fact, there are toy models in quantum information science about black hole radiation. A simple model made by quantum circuits, proposed by Hayden and Preskill \cite {Hayden:2007cs} suggested that black hole scrambles information quickly during radiation, opening up the study of quantum chaos in the theory of quantum gravity. After realizing that black holes are fastest information scrambler in the universe \cite{Sekino:2008he,Lashkari:2011yi}, people use out-of-time-ordered correlators (OTOCs) \cite{Shenker:2013pqa,Shenker:2014cwa} to quantify the speed of scrambling. In fact, black holes are shown to be maximally chaotic and saturate the early-time chaos bound for OTOCs \cite{Maldacena:2015waa}. The Sachdev-Ye-Kitaev model (SYK) \cite{Sachdev:1992fk,Kitaev}, has been recently discussed as a simple model for quantum gravity, successfully reproducing maximal chaotic dynamics for black holes \cite{Maldacena:2016hyu}.

Those amazing developments motivate us to set up a toy model\footnote{Toy models are proven to be very useful in these years. For instance, quantum error correction \cite{Pastawski:2015qua}, quantum chaos \cite{Hayden:2007cs,Maldacena:2016hyu} and black hole information paradox \cite{Penington:2019npb,Almheiri:2019psf}.} of random unitary \cite{Hosur:2015ylk} with symmetry. We will consider the following simple model: consider a given Hamiltonian $H$ and a (global) charge operator $Q$, since 
\begin{align}
\left[ {H,Q} \right] = 0~,
\end{align}
we could decompose the whole Hilbert space into charge sectors, and they will separately evolve during time evolution. Namely, for a unitary operation $U=e^{iHt}$, we have\footnote{It differs from usual unitary time evolution by a sign in this convention, where we could simply redefine the energy eigenvalues with a minus sign to flip it.}
\begin{align}
&U = {e^{i( \oplus _{i = 1}^D{H_i})}}~,
\end{align}
where $H_i$s are Hamiltonian blocks for each charge sector, and the whole Hilbert space dimension is $L=2^D$ (for qubit systems). For simplicity, we discuss the U(1) symmetry, the simplest continuous symmetry in the quantum system. Then, the charge operator is defined as a sum of Pauli $Z$s
\begin{align}
Q = \sum\limits_{i = 1}^D {\frac{{1 + {Z_i}}}{2}}~,
\end{align}
and the U(1) action is given by a representation from the U(1) group, $e^{i \theta Q}$, where $q$ is a real number corresponding to the charge. 

The above settings are, of course, far from real black holes. The main problem with this setup is that it is not easy to define gauge symmetry and justify its difference from global symmetry. However, here we are going to make the simplest setup: since in the context of the Hayden-Preskill experiment, the unitary is completely random, there is no definition of the locality. Thus we may not need to distinguish global and gauge symmetries. This type of model is allowed by the Eastin-Knill theorem, since quantum error correction is approximate for those random codes. Moreover, when discussing gauge symmetry, we could set the experiment with global symmetry in the boundary, which is dual to gauge symmetry in the bulk. When discussing global symmetry, we could directly put the experiment in the bulk. A very similar construction is made in \cite{Yoshida:2018ybz}, but with a slightly different motivation. Usually, people use random circuits with U(1) symmetry to construct models with energy conservation \cite{C1,C2,Yoshida:2018ybz,thesis,Hunter-Jones:2018otn}, while here we are mostly discussing charges. We also hope that our work is not only helpful for high energy physics and black holes, but also for quantum information science itself, for understanding the role of charge conservation in quantum information processing.

This paper is organized in the following order.
\begin{itemize}
\item In Section \ref{PR}, we will give an introduction to some basic concepts of quantum information theory, especially the theory of quantum chaos and quantum error correction. We will also set up some notations used for this work. 
\item In Section \ref{CH}, we present a technical discussion on the theory of quantum chaos with U(1) symmetry, based on quantum information technology built mostly from \cite{Roberts:2016hpo}. We will present theorems and computations about spectral form factors, frame potentials, OTOCs, and decoupling properties. They are chaotic variables capturing scrambling properties of the system.  
\item In Section \ref{EG}, we give an explicit example, the complex SYK model, to support theories developed in Section \ref{CH} by numerical computations.
\item In Section \ref{CO}, we discuss quantum error correction theory for random unitaries with U(1) symmetry. Based on discussions about the Hayden-Preskill experiment, we present some arguments about the no-global-symmetry conjecture and the weak gravity conjecture. 
\item In Section \ref{DS}, we comment on the role of discrete symmetries. We discuss fundamental differences between discrete and continuous symmetry in terms of quantum chaos, quantum error correction, and quantum gravity. 
\item In Section \ref{CON}, we give an overview of potential research directions we are interested in related to this work. We mention further research in quantum gravity, conformal bootstrap, and quantum simulation/experiments in the real platform. 
\item In Appendices, we present a simple introduction of \texttt{Mathematica} package \texttt{RTNI} we are using for technical results in this work, and the alternative representation of form factors.
\end{itemize}

Some discussions in this paper, including Section \ref{CH} and part of Section \ref{CO}, are written in the mathematical form with theorems and proofs. These mathematical results are mostly prepared in order to support some physical claims about quantum chaos and quantum error correction of charged systems. For physics-oriented readers, one might consider jumping out of technical details and direct going to the physical conclusions and concrete examples, for instance, examples in Section \ref{EG} and physical discussions in Section \ref{CO} and \ref{DS}.

\section{Preliminaries}\label{PR}
In this section, we will set up our notations and give some simple introductions to basic quantum information concepts for completeness. 
\\
\\
\textbf{Notations}: We consider a qubit system with dimensions $L=2^D$. When discussing the black hole thought experiment, we also use $n$ to denote the number of qubits and $d=2^n$ to denote the dimension of the Hilbert space.

We use $(q,i)$ to denote the matrix elements, where $q$ means the charge sector $q$, and $i$ means the indices in the charge sector $q$. We sometimes ignore $i$ and directly use $q$ as a shorthand notation, which means the matrix block $q$. For instance,
\begin{align}
A_{{p_2}}^{{p_1}}B_{{p_1}}^{{p_2}} \equiv A_{({p_2},j)}^{({p_1},i)}B_{({p_1},i)}^{({p_2},j)}~.
\end{align}
Sometimes we also directly write $A_q$ as a shorthand notation of $A^{q}_q$. 

We use $\mathcal{H}$ to represent the Haar ensemble. $\tilde{B}$ means $UBU^\dagger$, and $\tilde{B}_q$ means $U_q B_q U^\dagger_q$.

We define
\begin{align}
{d_q} = \left( \begin{array}{l}
D\\
q
\end{array} \right)~,~~~~~~d_q^{(n)} = \left( {\begin{array}{*{20}{l}}
n\\
q
\end{array}} \right)~.
\end{align}
In the main text, the expectation value means 
\begin{align}
\left\langle {A(U)} \right\rangle  = \frac{1}{L}\int {dU{\rm{Tr}}\left( {A(U)} \right)}~,
\end{align}
and we also use the notation
\begin{align}
{\left\langle {{A_q}} \right\rangle _q} = \frac{1}{{{d_q}}}\int {d{U_q}{\rm{Tr}}\left( {{A_q}({U_q})} \right)}~.
\end{align}
Sometimes we use the subscript $\left\langle {A} \right\rangle_\mathcal{E} $ to denote that we are studying the average over ensemble $\mathcal{E}$. Thus we might also write the above notation as ${\left\langle {{A_q}} \right\rangle _{\mathcal{E}_q}}$.
\\
\\
\textbf{Basics about the Haar unitary}: The Haar ensemble is defined as a uniform measure for the unitary group. Formally, a Haar ensemble $\mathcal{H}$ defines a measure such that for every possible function $f$ on the unitary group, we have 
\begin{align}
\int_{\cal H} {dUf(U)}  = \int_{\cal H} {dUf(UV)}= \int_{\cal H} {dUf(VU)}~.
\end{align}
Namely, the distribution is both left and right invariant, where the measure is normalized 
\begin{align}
\int_\mathcal{H} {dU}  = 1~.
\end{align}
Using the Haar randomness, we could compute the Haar integral
\begin{align}
&\int_\mathcal{H} {U_{{j_1}}^{{i_1}} \ldots U_{{j_p}}^{{i_p}}U_{{j_1}'}^{\dagger,{i_1}'} \ldots U_{{j_p}'}^{{\dagger,i_p}'}dU} \nonumber\\
&= \sum\limits_{\alpha ,\beta  \in {S_p}} {\delta _{{j_{\alpha (1)}}'}^{{i_1}} \ldots \delta _{{j_{\alpha (p)}}'}^{{i_p}}\delta _{{j_1}}^{{i_{\beta (1)}}'} \ldots \delta _{{j_p}}^{{i_{\beta (p)}}'}{\rm{Wg}}({\alpha ^{ - 1}}\beta )}~,
\end{align}
where $\alpha,\beta$ are elements of permutation group $S_p$ over $1,2,\cdots,p$. If the numbers of $U$ and $U^\dagger$ in the integrand are not equal, the result of the integral is zero. The function $\text{Wg}$ is called the (unitary) Weingarten function, which could be computed in the group theory. For instance, for $S_1$ we have
\begin{align}
{\rm{Wg}}(1) = \frac{1}{L}~,
\end{align}
while for $S_2$ and $L\ge 2$ we have
\begin{align}
&{\rm{Wg}}(1,1) = \frac{1}{{{L^2} - 1}}~,\nonumber\\
&{\rm{Wg}}(2) = \frac{{ - 1}}{{L({L^2} - 1)}}~.
\end{align}
So we could derive the two most widely used the Haar integral formulas
\begin{align}
&\int {dU} U_j^iU_l^{\dagger k} = \frac{1}{L}\delta _l^i\delta _j^k~,\nonumber\\
&\int {dU} U_j^iU_l^{k}U_n^{\dagger m}U_p^{\dagger o} = \frac{1}{{{L^2} - 1}}\left( {\delta _n^i\delta _p^k\delta _j^m\delta _l^o + \delta _p^i\delta _n^k\delta _l^m\delta _j^o} \right)\nonumber\\
&- \frac{1}{{L({L^2} - 1)}}\left( {\delta _n^i\delta _p^k\delta _l^m\delta _j^o + \delta _p^i\delta _n^k\delta _j^m\delta _l^o} \right)~.
\end{align}
For more detailed information, see \cite{Roberts:2016hpo}. We also give a brief introduction to the symbolic computation of the Haar integrals in Appendix \ref{gra}.
\\
\\
\textbf{U(1)-symmetric Haar unitary}: We consider a direct sum of the Haar ensembles over charge sectors. Namely, we define the ensemble 
\begin{align}
{ \oplus _p}{\mathcal{H}_p} = \left\{ {{ \oplus _p}{U_p}:{U_p} \in {\mathcal{H}_p}} \right\}~,
\end{align}
where each $\mathcal{H}_p$ is a Haar random ensemble with dimension $d_p$ and charge sectors are independent. Since it is a direct sum, when performing the Haar integral, we need to be careful about which charge sectors the indices are in. Here we give some examples. 

\begin{exmp}
When computing 
\begin{align}
\int {dU} U_{(q,j)}^{(q,i)}U_{(p,l)}^{\dag (p,k)}~,
\end{align}
we just need to discuss two cases. When $q=p$, namely charge sectors are equal, then, naively, we get the same formula we have before
\begin{align}
\int {dU} U_{(q,j)}^{(q,i)}U_{(q,l)}^{\dag (q,k)} = \frac{1}{{{d_q}}}\delta _l^i\delta _j^k~.
\end{align}
When $q\ne p$, the integral has been factorized by two independent Haar integrals in different charge sectors, where each of them is zero. So we get
\begin{align}
\int {dU} U_{(q,j)}^{(q,i)}U_{(p,l)}^{\dag (p,k)} = \frac{1}{{{d_q}}}{\delta _{qp}}\delta _l^i\delta _j^k~.
\end{align}
Similarly, we could compute the higher moments
\begin{align}
\int {dU} U_{({q_1},{j_1})}^{({q_1},{i_1})}U_{({q_2},{l_1})}^{\dag ({q_2},{k_1})}U_{({p_1},{j_2})}^{({p_1},{i_2})}U_{({p_2},{l_2})}^{\dag ({p_2},{k_2})}~.
\end{align}
There are the following non-vanishing situations. $q_1=q_2=p_1=p_2=q$,
\begin{align}
&\int {dU} U_{(q,{j_1})}^{(q,{i_1})}U_{(q,{l_1})}^{\dag (q,{k_1})}U_{(q,{j_2})}^{(q,{i_2})}U_{(p,{l_2})}^{\dag (p,{k_2})} = \frac{1}{{d_q^2 - 1}}\left( {\delta _{{l_1}}^{{i_1}}\delta _{{j_1}}^{{k_1}}\delta _{{l_2}}^{{i_2}}\delta _{{j_2}}^{{k_2}} + \delta _{{l_2}}^{{i_1}}\delta _{{j_2}}^{{k_1}}\delta _{{l_1}}^{{i_2}}\delta _{{j_1}}^{{k_2}}} \right)\nonumber\\
&- \frac{1}{{(d_q^2 - 1){d_q}}}\left( {\delta _{{l_1}}^{{i_1}}\delta _{{j_2}}^{{k_1}}\delta _{{l_2}}^{{i_2}}\delta _{{j_1}}^{{k_2}} + \delta _{{l_2}}^{{i_1}}\delta _{{j_1}}^{{k_1}}\delta _{{l_1}}^{{i_2}}\delta _{{j_2}}^{{k_2}}} \right)~,
\end{align}
for $d_q>1$ ($d_q=1$ is trivial, it is just 1); $q_1=q_2=q$ and $p_1=p_2=p$ but $q\ne p$,
\begin{align}
\int {dU} U_{(q,{j_1})}^{(q,{i_1})}U_{(q,{l_1})}^{\dag (q,{k_1})}U_{(p,{j_2})}^{(p,{i_2})}U_{(p,{l_2})}^{\dag (p,{k_2})} = \frac{1}{{{d_p}{d_q}}}\delta _{{l_1}}^{{i_1}}\delta _{{j_1}}^{{k_1}}\delta _{{l_2}}^{{i_2}}\delta _{{j_2}}^{{k_2}}~,
\end{align}
and $q_1=p_2=q$, $q_2=p_1=p$ but $q\ne p$,
\begin{align}
\int {dU} U_{(q,{j_1})}^{(q,{i_1})}U_{(q,{l_2})}^{\dag (q,{k_2})}U_{(p,{j_2})}^{(p,{i_2})}U_{(p,{l_1})}^{\dag (p,{k_1})} = \frac{1}{{{d_p}{d_q}}}\delta _{{l_2}}^{{i_1}}\delta _{{j_1}}^{{k_2}}\delta _{{l_1}}^{{i_2}}\delta _{{j_2}}^{{k_1}}~.
\end{align}
\end{exmp}
Similar techniques could be generalized to other cases we are interested in.
\\
\\
\textbf{Form factor and frame potential}: For a given random unitary ensemble $\mathcal{E}$, we introduce the following two quantities. (Spectral) form factor $R_{2k}^{{\mathcal{E}}}$ and frame potential $F_\mathcal{E}^{(k)}$. 

Spectral form factors are widely used in random matrix theory, which are defined as the Fourier transform of spectral data for a given Hamiltonian ensemble. It is defined by
\begin{align}
R_{2k}^{{\mathcal{E}}} = \int _\mathcal{E}{dU{{\left| {{\rm{Tr(}}U{\rm{)}}} \right|}^{2k}}}~.
\end{align}
Since it is only related to the trace, it only cares about the eigenvalue distribution of the system. Thus, if we diagonalize the unitary operator as\footnote{Here in this notation, there might be ambiguities for defining $\lambda$ by a phase shift $2\pi$. However, we could imagine that all unitary ensembles we talk about here are generated by some Hamiltonian ensembles, and here the diagonalization means that we are diagonalizing the Hamiltonian.} 
\begin{align}
U = {\rm{diag}}\left( {{e^{i{\lambda _a}}}} \right)~,
\end{align}
and the eigenvalue measure of the given ensemble is given by $D\lambda $. Then the spectral form factor is given by
\begin{align}
R_{2k}^{\cal E} = \sum\limits_{a,b} {\int {D\lambda {e^{i({\lambda _{{a_1}}} +  \ldots  + {\lambda _{{a_k}}} - {\lambda _{{b_1}}} -  \ldots  - {\lambda _{{b_k}}})}}} }~.
\end{align}
In the discussions later, we will switch the eigenvalue basis and usual matrix basis freely. 

The spectral form factor could successfully capture the spectrum distribution in the Fourier space. For instance, for $k=1$ the form factor is just a Fourier transform of the spectrum distribution $\rho(\lambda)$,
\begin{align}
R_{2k}^{\cal E} = \sum\limits_a {\int {D\lambda {e^{i({\lambda _a})}}} }~.
\end{align}

For the Haar randomness, the spectral form factor is given by the following theorem \cite{sub,rain},
\begin{thm}
The $2k$ point form factor $R_{2k}^{\cal H}(L) $ counts for the number of permutations of $\{1,2,\cdots,k\}$ whose longest increasing subsequences are smaller or equal to $L$, where for a given permutation $\pi$, the increasing subsequence means that $i_1<i_2<i_3<\cdots$ such that $\pi(i_1)<\pi(i_2)<\pi(i_3)<\cdots$.
\end{thm}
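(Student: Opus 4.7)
The plan is to prove this via Schur function orthogonality and the Robinson--Schensted--Knuth (RSK) correspondence. First, I would rewrite $|\text{Tr}(U)|^{2k} = p_1(U)^k \, \overline{p_1(U)^k}$, where $p_1(U)=\text{Tr}(U)$ is the first power-sum symmetric polynomial in the eigenvalues of $U$. The representation-theoretic identity $p_1^k = \sum_{\lambda \vdash k} f^\lambda s_\lambda$, with $f^\lambda$ the dimension of the $S_k$-irreducible representation indexed by the partition $\lambda$ and $s_\lambda$ the Schur polynomial, expands both factors in the Schur basis.

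Next, I would invoke the orthogonality of Schur functions under the Haar measure on $U(L)$:
\[
\int_{\mathcal{H}} s_\lambda(U)\, \overline{s_\mu(U)}\, dU = \delta_{\lambda\mu}\, \mathbf{1}[\ell(\lambda) \leq L],
\]
where $\ell(\lambda)$ is the number of nonzero parts and $s_\lambda \equiv 0$ on $U(L)$ whenever $\ell(\lambda) > L$. Multiplying the two Schur expansions and integrating term by term gives the closed form
\[
R_{2k}^{\mathcal{H}}(L) = \sum_{\lambda \vdash k,\; \ell(\lambda) \leq L} (f^\lambda)^2.
\]

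The final step applies RSK, which is a bijection between $S_k$ and pairs $(P,Q)$ of standard Young tableaux of common shape $\lambda \vdash k$; for each $\lambda$, exactly $(f^\lambda)^2$ permutations have insertion shape $\lambda$. By Schensted's theorem, $\ell(\lambda)$ equals the length of the longest decreasing subsequence of the associated permutation $\pi$. Hence $R_{2k}^{\mathcal{H}}(L)$ counts $\pi \in S_k$ with longest decreasing subsequence at most $L$, and the elementary reversal bijection $\pi \mapsto (i \mapsto \pi(k+1-i))$ exchanges decreasing and increasing subsequences, turning this into the count of permutations with longest increasing subsequence $\leq L$, as claimed.

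The hard part is establishing the Schur orthogonality rigorously on $U(L)$ --- in particular, the vanishing $s_\lambda \equiv 0$ when $\ell(\lambda) > L$, which follows from the bialternant formula $s_\lambda(x_1,\ldots,x_L) = \det(x_i^{\lambda_j + L - j})/\det(x_i^{L-j})$, since having more parts than variables forces repeated columns in the numerator. Everything else is bookkeeping, but one should be careful with the LIS-vs-LDS distinction: Schensted ties $\ell(\lambda)$ to the LDS, not the LIS, so the reversal step is essential in order to match the statement of the theorem. An alternative route via Weingarten calculus is possible --- expanding $(\text{Tr}\, U)^k(\text{Tr}\, U^\dagger)^k$ with the formula from the Preliminaries gives $R_{2k}^{\mathcal{H}}(L) = k!\sum_{\tau\in S_k} L^{c(\tau)}\text{Wg}(\tau)$ --- but extracting the combinatorial content of longest subsequences from this form is itself equivalent to the Schur/RSK argument above.
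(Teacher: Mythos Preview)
Your proof is correct and follows the classical route (Schur orthogonality on $U(L)$, then RSK plus Schensted's theorem). Note, however, that the paper does not actually supply a proof of this theorem: it is stated as a known result with citations to the literature (the references \texttt{sub} and \texttt{rain}, which are the Rains/Diaconis--Shahshahani line of work where precisely this Schur-function argument originates). So there is no ``paper's own proof'' to compare against; you have reconstructed the standard argument that the cited references contain.

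One minor remark: your reversal step is fine, but you could also avoid it entirely by observing that the sum $\sum_{\ell(\lambda)\le L}(f^\lambda)^2$ is unchanged under conjugation of partitions (since $f^{\lambda'}=f^\lambda$), which turns the constraint $\ell(\lambda)\le L$ into $\lambda_1\le L$, and Schensted directly identifies $\lambda_1$ with the longest increasing subsequence. This is equivalent to what you wrote but saves the explicit bijection on permutations.
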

Then we immediately know that 
\begin{thm}
For $k\le L$, 
\begin{align}
R_{2k}^{{\mathcal{H}}}(L) = \int {dU{{\left| {{\rm{Tr(}}U{\rm{)}}} \right|}^{2k}}} =k!~.
\end{align}
\end{thm}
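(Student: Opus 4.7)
The plan is to deduce the statement as an immediate corollary of the preceding theorem, which expresses $R_{2k}^{\mathcal{H}}(L)$ as the number of permutations $\pi \in S_k$ whose longest increasing subsequence has length at most $L$. The key observation is elementary: for any $\pi \in S_k$, the length $\ell(\pi)$ of its longest increasing subsequence is bounded above by $k$, since any subsequence of a length-$k$ sequence has length at most $k$. Thus whenever $k \le L$, the constraint $\ell(\pi) \le L$ is automatically satisfied by every $\pi \in S_k$, so the count equals $|S_k| = k!$. The only data I would actually spell out is this trivial bound; no case analysis or combinatorial identity is required.

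There is essentially no nontrivial obstacle beyond invoking the previous theorem. If a more self-contained derivation were preferred, one could instead expand
\begin{align*}
|\mathrm{Tr}(U)|^{2k} = \sum_{i_1,\ldots,i_k,\, j_1,\ldots,j_k} U^{i_1}_{i_1} \cdots U^{i_k}_{i_k}\, U^{\dagger j_1}_{j_1} \cdots U^{\dagger j_k}_{j_k}
\end{align*}
and apply the Weingarten formula from Section~\ref{PR}. The index contractions collapse to a sum of the shape $\sum_{\sigma,\tau \in S_k} L^{c(\sigma \tau^{-1})}\, \mathrm{Wg}(\sigma^{-1}\tau)$, where $c(\cdot)$ counts cycles, and the harder step would be invoking the classical identity $\sum_{\tau \in S_k} L^{c(\tau)}\, \mathrm{Wg}(\tau) = 1$ (valid precisely when $k \le L$) to reduce the double sum back to $k!$. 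The RSK-based proof of the previous theorem bypasses these Weingarten manipulations entirely, which is why citing it is the cleanest route and why I would present the one-line corollary rather than the direct moment expansion.
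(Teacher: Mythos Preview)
Your proposal is correct and matches the paper's approach exactly: the paper writes ``Then we immediately know that'' before stating the theorem, treating it as a direct corollary of the preceding longest-increasing-subsequence characterization, which is precisely your one-line argument that $\ell(\pi)\le k\le L$ for all $\pi\in S_k$. Your parenthetical Weingarten alternative is a valid independent route, but the paper does not pursue it here.
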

Some alternative expressions are summarized in \cite{rain}. 

Now we introduce frame potential. Frame potential characterizes the 2-norm distance between a given ensemble and the Haar random unitary. It is defined as
\begin{align}
F_\mathcal{E}^{(k)} = \int_\mathcal{E} {dUdV\left| {{\rm{Tr}}(U{V^\dag })} \right|^{2k}}~.
\end{align}
We have the following simple observations 
\begin{thm}
\begin{align}
F_\mathcal{E}^{(k)} \ge F_\mathcal{H}^{(k)}~.
\end{align}
\end{thm}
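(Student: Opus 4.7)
The plan is to recognize the frame potential as a squared Hilbert--Schmidt norm of a $k$-th moment operator, and then exploit the fact that the Haar moment operator is a Hermitian projector onto the invariant subspace of the diagonal $U^{\otimes k}\otimes (U^*)^{\otimes k}$ action.

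First I would introduce the moment operator
\begin{align}
M_\mathcal{E}^{(k)} = \int_\mathcal{E} dU \, U^{\otimes k} \otimes (U^*)^{\otimes k},
\end{align}
acting on $\mathcal{H}^{\otimes k}\otimes \bar{\mathcal{H}}^{\otimes k}$, and observe the elementary identity $|{\rm Tr}(UV^\dagger)|^{2k} = {\rm Tr}\bigl(U^{\otimes k}\otimes(U^*)^{\otimes k}\bigr)^\dagger \bigl(V^{\otimes k}\otimes(V^*)^{\otimes k}\bigr)$ obtained by writing the two scalar factors ${\rm Tr}(UV^\dagger)^k$ and ${\rm Tr}(VU^\dagger)^k$ in index notation. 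Integrating both factors independently over $\mathcal{E}$ gives
\begin{align}
F_\mathcal{E}^{(k)} \;=\; {\rm Tr}\bigl( (M_\mathcal{E}^{(k)})^\dagger M_\mathcal{E}^{(k)} \bigr) \;=\; \| M_\mathcal{E}^{(k)} \|_{\rm HS}^2 .
\end{align}

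Next I would establish three structural properties of $M_\mathcal{H}^{(k)}$, all of which are direct consequences of the left/right/inverse invariance of the Haar measure stated earlier in the paper. (i) $M_\mathcal{H}^{(k)}$ is Hermitian: sending $U\mapsto U^\dagger$ in the integral. (ii) $M_\mathcal{H}^{(k)}$ is idempotent: multiplying two copies and using left invariance to collapse one Haar integral. (iii) The ``absorbing'' property
\begin{align}
(M_\mathcal{E}^{(k)})^\dagger \, M_\mathcal{H}^{(k)} \;=\; M_\mathcal{H}^{(k)} \;=\; M_\mathcal{H}^{(k)} \, M_\mathcal{E}^{(k)} ,
\end{align}
proved by fixing $U$ in the $\mathcal{E}$-integral and using left invariance of the inner Haar integral to replace $U^\dagger V$ by $V$. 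Combining (ii) and (iii) yields ${\rm Tr}\bigl( (M_\mathcal{E}^{(k)})^\dagger M_\mathcal{H}^{(k)} \bigr) = {\rm Tr}\, M_\mathcal{H}^{(k)} = F_\mathcal{H}^{(k)}$.

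The conclusion is then a one-line expansion of the Hilbert--Schmidt square:
\begin{align}
0 \;\le\; \| M_\mathcal{E}^{(k)} - M_\mathcal{H}^{(k)} \|_{\rm HS}^2
\;=\; F_\mathcal{E}^{(k)} - 2 F_\mathcal{H}^{(k)} + F_\mathcal{H}^{(k)}
\;=\; F_\mathcal{E}^{(k)} - F_\mathcal{H}^{(k)} ,
\end{align}
with equality exactly when $M_\mathcal{E}^{(k)} = M_\mathcal{H}^{(k)}$, i.e.\ when $\mathcal{E}$ is a unitary $k$-design. The only thing that requires care is the index/transpose bookkeeping in the opening identity expressing $|{\rm Tr}(UV^\dagger)|^{2k}$ as an inner product of $M$-operators; everything else follows mechanically from Haar invariance, so I do not anticipate a genuine obstacle.
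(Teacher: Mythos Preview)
Your proposal is correct and follows essentially the same route as the paper: both define the moment operator (the paper uses $U^{\otimes k}\otimes(U^\dagger)^{\otimes k}$ rather than your $U^{\otimes k}\otimes(U^*)^{\otimes k}$, an immaterial difference), take the Hilbert--Schmidt square of the difference between the $\mathcal{E}$- and Haar-averaged moment operators, and invoke Haar invariance to reduce the cross terms to $F_\mathcal{H}^{(k)}$. Your write-up is slightly more explicit in naming $M_\mathcal{H}^{(k)}$ as a Hermitian projector with the absorbing property, but the argument is the same.
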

\begin{proof}
Define 
\begin{align}
S = \int_{\cal E} {dU{U^{ \otimes k}} \otimes {{({U^\dag })}^{ \otimes k}}}  - \int_{\cal H} {dU{U^{ \otimes k}} \otimes {{({U^\dag })}^{ \otimes k}}}~.
\end{align}
We have
\begin{align}
&0 \le {\rm{Tr}}\left( {{S^\dag }S} \right) = \int_{\cal E} {dU} \int_{\cal E} {dV} {\left| {{\rm{Tr}}(U{V^\dag })} \right|^{2k}}\nonumber\\
&- 2\int_{\cal E} {dU} \int_{\cal H} {dV} {\left| {{\rm{Tr}}(U{V^\dag })} \right|^{2k}} + \int_{\cal H} {dU} \int_{\cal H} {dV} {\left| {{\rm{Tr}}(U{V^\dag })} \right|^{2k}}\nonumber\\
&= F_{\cal E}^{(k)} - 2F_{\cal H}^{(k)} + F_{\cal H}^{(k)} = F_{\cal E}^{(k)} - F_{\cal H}^{(k)}~,
\end{align}
where we have used the property of the Haar invariance.
\end{proof}
Moreover, for the Haar system, by the Haar invariance, we could simply observe that
\begin{thm}
\begin{align}
F_{\mathcal{H}}^{(k)} = R_{2k}^{\mathcal{H}}~.
\end{align}
\end{thm}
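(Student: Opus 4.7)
The plan is to exploit the defining property of the Haar measure, namely its bi-invariance, to decouple the two integrations in the frame potential and reduce to a single integral matching the form factor.

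First I would fix $V$ and consider the inner integral $\int_{\mathcal{H}} dU\, |\mathrm{Tr}(UV^\dagger)|^{2k}$. Performing the change of variable $U \mapsto W = UV^\dagger$, which by right-invariance of the Haar measure leaves $dU$ unchanged (so $dW = dU$), this integral becomes $\int_{\mathcal{H}} dW\, |\mathrm{Tr}(W)|^{2k}$. By the definition of the form factor stated above, this equals $R_{2k}^{\mathcal{H}}$, independently of $V$.

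Next I would plug this back into the double integral:
\begin{align}
F_{\mathcal{H}}^{(k)} = \int_{\mathcal{H}} dV \left( \int_{\mathcal{H}} dU\, |\mathrm{Tr}(UV^\dagger)|^{2k} \right) = \int_{\mathcal{H}} dV\, R_{2k}^{\mathcal{H}} = R_{2k}^{\mathcal{H}},
\end{align}
using that the Haar measure is normalized, $\int_{\mathcal{H}} dV = 1$. This completes the proof.

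There is no serious obstacle here; the only point requiring care is confirming that the substitution $U \mapsto UV^\dagger$ is a bijection on the unitary group with unchanged measure, which is precisely the right-invariance property of the Haar measure already stated in the preliminaries. If one wanted to be extra cautious about the ordering, left-invariance gives the analogous identity via $U \mapsto V^\dagger U$ together with cyclicity of the trace, yielding the same conclusion.
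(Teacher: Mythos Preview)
Your proof is correct and matches the paper's approach: the paper simply states that ``by the Haar invariance, we could simply observe'' the identity without spelling out the substitution, and your argument via the change of variable $U\mapsto UV^\dagger$ is precisely the content of that observation.
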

For further knowledge about form factor and frame potential, see \cite{Roberts:2016hpo,Cotler:2017jue}. 
\\
\\
\textbf{$k$-invariance}: $k$-invariance, introduced in \cite{Cotler:2017jue}, is a quantity that characterizes how \emph{invariant} it is under the Haar random unitary. For a given ensemble, $k$-invariance $I_\mathcal{E}^{(k)}$ is defined by
\begin{align}
I_{\cal E}^{(k)} = F_{\cal E}^{(k)} - F_{\tilde {\cal E}}^{(k)}~,
\end{align}
where $\mathcal{E}$ is from averaging ensemble $\mathcal{E}$ over the Haar measure,
\begin{align}
\tilde {\cal E} = \left\{ {\int_{\cal H} {dW} \left( {WU{W^\dag }} \right):U \in {\cal E}} \right\}~.
\end{align}

We know the following properties.
\begin{thm}
$k$-invariance is non-negative:
\begin{align}
I_{\cal E}^{(k)} \ge 0~.
\end{align}
\end{thm}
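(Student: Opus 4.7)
The plan is to mimic the proof of Theorem 2.3 but with the Haar average of ${\cal E}$ playing the role of the Haar ensemble itself. First I would rewrite each frame potential as the squared $2$-norm of a moment operator. Setting $M_{\cal E}^{(k)} = \int_{\cal E} dU\, U^{\otimes k}\otimes (U^\dag)^{\otimes k}$ and using the identity $|{\rm Tr}(UV^\dag)|^{2k} = {\rm Tr}\bigl(U^{\otimes k}(V^\dag)^{\otimes k}\bigr)\,{\rm Tr}\bigl(V^{\otimes k}(U^\dag)^{\otimes k}\bigr)$, one checks that
\begin{align}
F_{\cal E}^{(k)} = {\rm Tr}\!\left(M_{\cal E}^{(k)\dag}M_{\cal E}^{(k)}\right).
\end{align}
The same formula, applied to $\tilde{\cal E}$, yields $F_{\tilde{\cal E}}^{(k)} = {\rm Tr}\bigl(M_{\tilde{\cal E}}^{(k)\dag}M_{\tilde{\cal E}}^{(k)}\bigr)$, where, by the definition of $\tilde{\cal E}$ and the fact that $(WUW^\dag)^{\otimes k}\otimes(WU^\dag W^\dag)^{\otimes k} = R(W) \bigl(U^{\otimes k}\otimes (U^\dag)^{\otimes k}\bigr) R(W)^\dag$ with $R(W) = W^{\otimes k}\otimes (W^\dag)^{\otimes k}$, the twirled moment is
\begin{align}
M_{\tilde{\cal E}}^{(k)} = \Phi\!\left(M_{\cal E}^{(k)}\right), \qquad \Phi(X) \equiv \int_{\cal H} dW\, R(W)\, X\, R(W)^\dag .
\end{align}

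Next I would exploit two elementary properties of the super-operator $\Phi$: it is self-adjoint in the Hilbert--Schmidt inner product, ${\rm Tr}(A^\dag \Phi(B)) = {\rm Tr}(\Phi(A)^\dag B)$, because conjugation by $R(W)$ is unitary and the Haar measure is invariant under $W\mapsto W^\dag$; and it is idempotent, $\Phi\circ\Phi=\Phi$, by left invariance of the Haar measure. These two facts together give
\begin{align}
{\rm Tr}\!\left(M_{\cal E}^{(k)\dag}\, M_{\tilde{\cal E}}^{(k)}\right)
= {\rm Tr}\!\left(M_{\cal E}^{(k)\dag}\, \Phi(M_{\cal E}^{(k)})\right)
= {\rm Tr}\!\left(\Phi(M_{\cal E}^{(k)})^\dag\, \Phi(M_{\cal E}^{(k)})\right)
= F_{\tilde{\cal E}}^{(k)}.
\end{align}

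With this in hand I would finish by the same Cauchy--Schwarz-style trick as in Theorem 2.3: set $S = M_{\cal E}^{(k)} - M_{\tilde{\cal E}}^{(k)}$ and expand
\begin{align}
0 \le {\rm Tr}(S^\dag S) = F_{\cal E}^{(k)} - 2\,{\rm Re}\,{\rm Tr}\!\left(M_{\cal E}^{(k)\dag} M_{\tilde{\cal E}}^{(k)}\right) + F_{\tilde{\cal E}}^{(k)} = F_{\cal E}^{(k)} - F_{\tilde{\cal E}}^{(k)} = I_{\cal E}^{(k)} .
\end{align}

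The only subtle point, and the step I expect to be the main obstacle, is interpreting $F_{\tilde{\cal E}}^{(k)}$ unambiguously so that the identification $M_{\tilde{\cal E}}^{(k)} = \Phi(M_{\cal E}^{(k)})$ really holds; once one agrees that $\tilde{\cal E}$ is the ensemble obtained by sampling $U\in{\cal E}$ and independently twirling by a Haar-random $W$, the rest is a transcription of the earlier argument, and the self-adjoint-idempotent structure of the twirl does all the work.
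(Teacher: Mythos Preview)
Your proof is correct and is essentially the same argument as the paper's: both set $T = M_{\cal E}^{(k)} - M_{\tilde{\cal E}}^{(k)}$ and expand ${\rm Tr}(T^\dag T)\ge 0$, reducing the cross terms to $F_{\tilde{\cal E}}^{(k)}$ via Haar invariance. The only cosmetic difference is that the paper writes out the four integrals explicitly and simplifies them by direct substitution, whereas you package the same manipulation as ``the twirl $\Phi$ is a self-adjoint idempotent on Hilbert--Schmidt space, hence a projector,'' which makes the equality of the cross term with $F_{\tilde{\cal E}}^{(k)}$ a one-liner; the underlying computation is identical.
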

\begin{proof}
Introduce 
\begin{align}
T = \int_{\cal E} {dU{U^{ \otimes k}} \otimes {{({U^\dag })}^{ \otimes k}}}  - \int_{\tilde {\cal E}} {dU{U^{ \otimes k}} \otimes {{({U^\dag })}^{ \otimes k}}}~.
\end{align}
We have
\begin{align}
&0 \le {\rm{Tr}}({T^\dag }T) = \int_{\cal E} {dU} \int_{\cal E} {dV} {\left| {{\rm{Tr}}(U{V^\dag })} \right|^{2k}}\nonumber\\
&- \int_{\cal E} {dU} \int_{\cal E} {dV} \int_{\cal H} {dW} {\left| {{\rm{Tr}}({U^\dag }WV{W^\dag })} \right|^{2k}}\nonumber\\
&- \int_{\cal E} {dU} \int_{\cal E} {dV} \int_{\cal H} {dW} {\left| {{\rm{Tr}}(W{U^\dag }{W^\dag }V)} \right|^{2k}}\nonumber\\
&+ \int_{\cal E} {dU} \int_{\cal E} {dV} \int_{\cal H} {dW} \int_{\cal H} {dX} {\left| {{\rm{Tr}}(W{U^\dag }{W^\dag }XV{X^\dag })} \right|^{2k}}\nonumber\\
&= F_{\cal E}^{(k)} - F_{\tilde {\cal E}}^{(k)} = I_{\cal E}^{(k)}~.
\end{align}
\end{proof}
\begin{thm}
The Haar measure has zero $k$-invariance:
\begin{align}
I_{\cal H}^{(k)} = 0~.
\end{align}
\end{thm}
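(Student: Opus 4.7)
The plan is to exploit the bi-invariance of the Haar measure to show that the ensemble $\tilde{\mathcal{H}}$ and $\mathcal{H}$ induce the same distribution on the unitary group, so that the two frame potentials in $I_{\mathcal{H}}^{(k)} = F_{\mathcal{H}}^{(k)} - F_{\tilde{\mathcal{H}}}^{(k)}$ coincide.

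First, I would recall that if $U$ is Haar distributed and $W$ is any (deterministic or Haar) unitary independent of $U$, then $WUW^{\dagger}$ is again Haar distributed. For fixed $W$, this follows because the map $U \mapsto WUW^{\dagger}$ is a bijection of the unitary group whose pushforward of Haar measure equals Haar, by combining left invariance (under $W \cdot$) and right invariance (under $\cdot \, W^{\dagger}$). Averaging over $W$ against the Haar measure then leaves the Haar ensemble invariant, so $\tilde{\mathcal{H}} = \mathcal{H}$ as induced measures.

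Having identified $\tilde{\mathcal{H}}$ with $\mathcal{H}$, I would conclude by plugging into the definition of the frame potential, noting that $F_{\mathcal{E}}^{(k)}$ depends on $\mathcal{E}$ only through the induced measure. Equivalently, at the level of moment operators one can write
\begin{align}
M_{\tilde{\mathcal{H}}}^{(k)} = \int_{\mathcal{H}} dU \int_{\mathcal{H}} dW \, (WUW^{\dagger})^{\otimes k} \otimes (WU^{\dagger}W^{\dagger})^{\otimes k}~,
\end{align}
and change variables $U \mapsto W^{\dagger}UW$ in the inner integral; the $W$ dependence drops out, leaving $M_{\mathcal{H}}^{(k)}$. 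Since $F_{\mathcal{E}}^{(k)} = \mathrm{Tr}\bigl[(M_{\mathcal{E}}^{(k)})^{\dagger} M_{\mathcal{E}}^{(k)}\bigr]$, equality of moment operators gives equality of frame potentials and hence $I_{\mathcal{H}}^{(k)} = 0$.

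There is essentially no obstacle: the statement follows immediately from bi-invariance of Haar. The only mild care is to interpret $\tilde{\mathcal{H}} = \mathcal{H}$ as equality of probability measures rather than a pointwise claim about individual elements, so that the frame potential, which is a functional of the distribution, evaluates identically on both sides.
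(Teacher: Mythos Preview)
Your proposal is correct and takes essentially the same approach as the paper: both arguments use bi-invariance of the Haar measure to show that conjugation $U \mapsto WUW^{\dagger}$ leaves the Haar distribution unchanged, so $\tilde{\mathcal{H}} = \mathcal{H}$ and the two frame potentials coincide. The paper carries this out as an explicit integral identity $\int_{\mathcal{H}^2} dU\, dV\, f(UVU^{\dagger}) = \int_{\mathcal{H}} dU\, f(U)$ via a substitution $g(V)=f(UVU^{\dagger})$ and right-invariance, while you phrase it as equality of pushforward measures; these are equivalent formulations of the same idea.
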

\begin{proof}
In fact we could prove a stronger statement than $I_{\cal H}^{(k)} = 0$\footnote{The difference between the statement $\int_{\mathcal{H}} {dU {\int_\mathcal{E} dVf(UV{U^\dag })} }  = \int _\mathcal{E}{dUf(U)} $ and $I_{\cal E}^{(k)} = 0$ is like the difference between left and right invariance and $F_{\cal \mathcal{E}}^{(k)}=k!$. Just like the fact that if $F_{\cal \mathcal{E}}^{(k)}=k!$, $\mathcal{E}$ is not necessarily left and right invariant, namely not necessarily Haar (it might be generically a $k$-design), $\int_{\mathcal{H}} {dU {\int_\mathcal{E} dVf(UV{U^\dag })} }  = \int _\mathcal{E}{dUf(U)} $ could imply $I_{\cal E}^{(k)} = 0$ but it is not easy to prove the reverse statement at least obviously.}. We show that for $\mathcal{E}=\mathcal{H}$ we have
\begin{align}
\int_{\mathcal{H}} {dU {\int_\mathcal{E} dVf(UV{U^\dag })} }  = \int _\mathcal{E}{dUf(U)}~.
\end{align}
In fact we define 
\begin{align}
g(V) = f(UV{U^\dag })~,
\end{align}
for given $f$. Then we have
\begin{align}
\int _\mathcal{H}{dVf(UV{U^\dag })}  = \int_\mathcal{H} {dVg(V)}  = \int_\mathcal{H} {dVg(VU)}  = \int_\mathcal{H} {dVf(UV)}~.
\end{align}
Thus
\begin{align}
\int_{\mathcal{H}^2} {dU {dVf(UV{U^\dag })} }  = \int_{\mathcal{H}^2}  {dU{dVf(UV)} }  = \int_{\mathcal{H}}  {dUf(U)}~.
\end{align}
\end{proof}
Thus, $k$-invariance $I_{\mathcal{E}}^{(k)}$ could measure the invariant property of ensemble $\mathcal{E}$ under the Haar average $\tilde{\mathcal{E}}$, which is similar to the fact that $F_{\mathcal{E}}^{(k)}$ could measure the difference between $\mathcal{E}$ and the Haar randomness $\mathcal{H}$. If an ensemble $\mathcal{E}$ satisfies $I_{\mathcal{E}}^{(k)}=0$, we say that the ensemble is $k$-invariant. A typical $k$-invariant system is the Gaussian Unitary Ensemble (GUE) \cite{Cotler:2017jue}.
\\
\\
\textbf{Decoupling} \cite{Page:1993df}: Consider a pure state $\rho_0=\ket{\psi_0}\bra{\psi_0}$, we make a Haar average $U\rho_0U^\dagger$. Assume that the whole system is decomposed as $A$ and $B$ where $d_A=\dim A$ and $d_B=\dim B$, we could define
\begin{align}
\Delta {\rho _A} = {\rho _A} - \frac{{{I_A}}}{{{d_A}}}~,
\end{align}
where
\begin{align}
{\rho _A} = {\rm{T}}{{\rm{r}}_B}U{\rho _0}{U^\dag }~,
\end{align}
and $I_A$ is the identity operator on $A$. From the Haar integral calculations, we know that
\begin{align}
\int_\mathcal{H} {dU\left\| {\Delta {\rho _A}} \right\|_1^2}  \le {d_A}\int_\mathcal{H} {dU\left\| {\Delta {\rho _A}} \right\|_2^2}  = \frac{{d_A^2 - 1}}{{{d_A}{d_B} + 1}} \approx \frac{{{d_A}}}{{{d_B}}}~,
\end{align}
where $\approx$ means we take the limit where $d=d_A d_B$ is large. This statement means that for the Haar random pure state when taking a small subsystem, we will obtain a nearly maximally mixed state. This is called the \emph{Page theorem}.
\\
\\
\textbf{Hayden-Preskill experiment}: The decoupling inequality (see for instance \cite{JP,Yoshida:2019kyp}) is famously used in the Hayden-Preskill experiment \cite{Hayden:2007cs}. We consider an initial product state in $A$ and $B$, where $A$ is sharing a Bell pair with $\bar{A}$, and $B$ is sharing a Bell pair with $\bar{B}$ (We assume that the dimension of the Hilbert space is the same for $A$ and $\bar{A}$, or $B$ and $\bar{B}$ respectively). Then we apply random unitary for the system $A$, $B$. After the unitary operation, see Figure \ref{fig1}, we find that the state $\rho_{\bar{A}C}$ is nearly decoupled to $\rho_{\bar{A}}$ and a maximally mixed state on $C$, namely,
\begin{align}
\int_\mathcal{H} {dU\left\| {\Delta {\rho _{\bar AC}}} \right\|_1^2}  \le \frac{{{d_A}{d_C}}}{{{d_B}{d_D}}}~,
\end{align}
where
\begin{align}
\Delta {\rho _{\bar AC}} = {\rho _{\bar AC}} - {\rho _{\bar A}} \otimes \frac{{{I_C}}}{{{d_C}}}~,
\end{align}
where we assume that $d_A\ll d_D$. One can show that decoupling between $\bar{A}$ and $C$ ensures that one can reconstruct the original state in $A$ with high fidelity. 
\begin{figure}[htbp]
  \centering
  \includegraphics[width=0.4\textwidth]{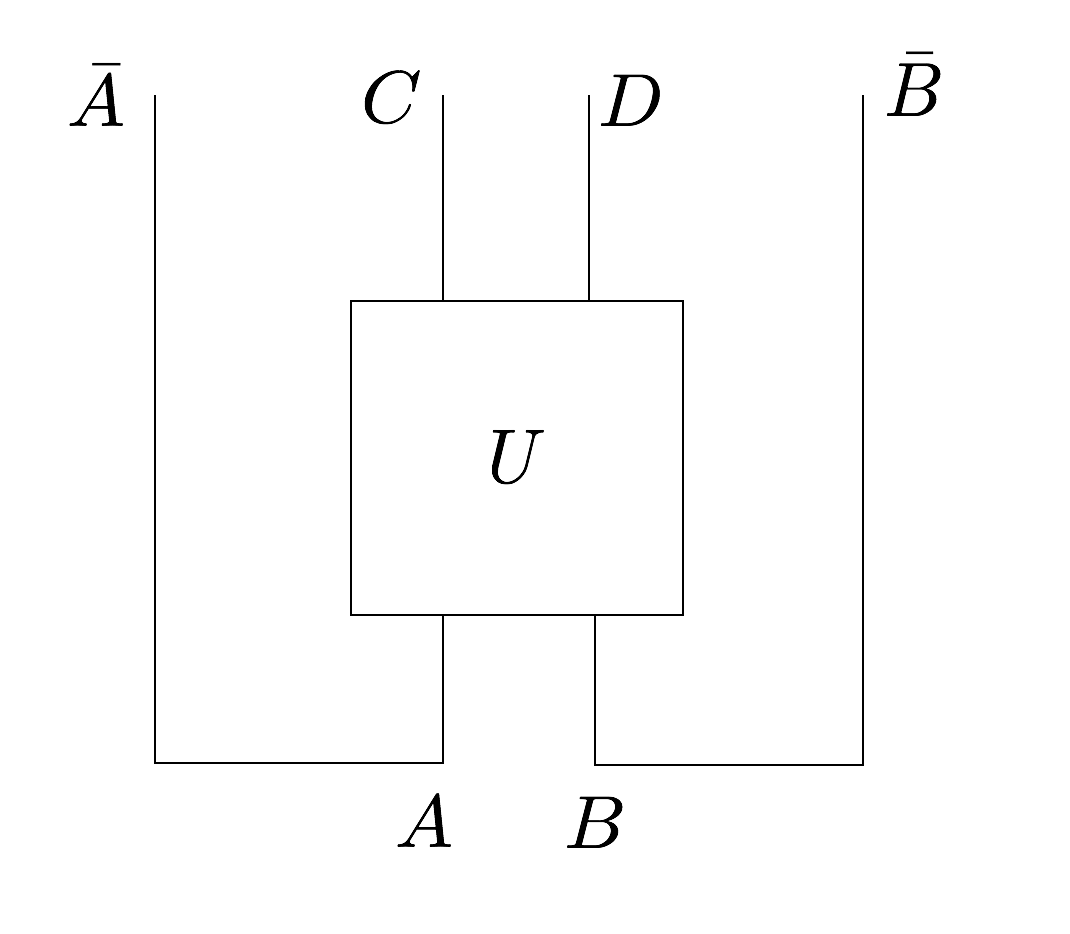}
  \caption{\label{fig1} Decoupling inequality/Hayden-Preskill experiment.}
\end{figure}
\\
\\
\textbf{Quantum error correction}: Quantum error correction is a well-established field in quantum information theory, focusing on studying how possible a quantum system could be protected against random errors. For a generic introduction, see \cite{JP}.  

We briefly review the basic ingredients that might be used in this paper. An error correction code is made by a Hilbert space $\mathcal{A}$, a noise channel (or noise combined with encoding) $\mathcal{N}:S(\mathcal{H}_A)\to S(\mathcal{H}_B)$, and a decoding map $\mathcal{D}:S(\mathcal{H}_B) \to S(\mathcal{H}_A)$. Here $S(\mathcal{H})$ means the space of density matrix on the Hilbert space $\mathcal{H}$. We usually have $\dim \mathcal{H}_A\le \dim \mathcal{H}_B$. The channel $\mathcal{N}$ could generically preserve the trace, 
\begin{align}
{\cal N}(\rho ) = \sum\limits_a {{{\cal N}_a}\rho {\cal N}_a^\dag }~,
\end{align}
with 
\begin{align}
\sum\limits_a {{\cal N}_a^\dag {{\cal N}_a}}  = 1~.
\end{align}
In this definition, $\mathcal{N}_a$s are Hermitian. This expansion is called Kraus representation, and $\mathcal{N}_a$s, called Kraus operators, are some subset of Pauli chains, specified by the error we consider. The basis vectors in $\mathcal{H}_A$ are called codewords, while $\mathcal{H}_A$ is called code subspace.

The requirement of error correction is that there exists $\mathcal{D}$ such that
\begin{align}
({\cal D} \circ {\cal N}) = {\rm{I}}{{}_A}~,
\end{align}
where $I_A$ means the identity operator on $\mathcal{H}_A$. Namely, the construction of the error correction code ensures that the information on code subspace $\mathcal{H}_A$ has been protected. 

A useful, necessary, and sufficient condition for quantum error correction is given by Knill and Laflamme \cite{KL}.
\begin{thm} 
Say that we have a code defined above. The necessary and sufficient condition for quantum error correction is given by
\begin{align}
\left\langle i \right|{\cal N}_a^\dag {{\cal N}_b}\left| j \right\rangle  = {C_{ab}}{\delta _{ij}}~,
\end{align}
where $\ket{i}$ and $\ket{j}$ are codewords, and $C_{ab}$ could be an arbitrary Hermitian matrix. This condition is required for every $a,b$ and $i,j$. 
\end{thm}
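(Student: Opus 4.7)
The plan is to prove both directions by first putting the condition into a cleaner diagonal form. Since $C_{ab}$ is Hermitian, I would unitarily diagonalize it as $C_{ab}=\sum_c u_{ac}d_c u^{*}_{bc}$ with $d_c\ge 0$ (positivity follows from applying the condition to $i=j$ and noting that $C$ is a Gram matrix of the vectors $\mathcal{N}_a|i\rangle$). Then I would define rotated Kraus operators $\tilde{\mathcal{N}}_c=\sum_a u^{*}_{ac}\mathcal{N}_a$, which implement the same channel $\mathcal{N}$ because the rotation is unitary, and which satisfy
\begin{align}
\langle i|\tilde{\mathcal{N}}_c^\dagger \tilde{\mathcal{N}}_{c'}|j\rangle = d_c\,\delta_{cc'}\delta_{ij}.
\end{align}
This reduces the theorem to the case of a diagonal $C$, which is much easier to reason about.

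For sufficiency, I would read off from the displayed equation that the states $|\phi_c^i\rangle := \tilde{\mathcal{N}}_c|i\rangle/\sqrt{d_c}$ (defined whenever $d_c>0$) form an orthonormal family in $\mathcal{H}_B$ labelled by $(c,i)$. Thus for each surviving $c$ the map $V_c:|i\rangle\mapsto|\phi_c^i\rangle$ is an isometry from $\mathcal{H}_A$ into $\mathcal{H}_B$, and the ranges $V_c\mathcal{H}_A$ for distinct $c$ are mutually orthogonal subspaces. This is exactly the syndrome structure needed to define recovery: let $P_c$ denote the projector onto $V_c\mathcal{H}_A$, and define
\begin{align}
\mathcal{D}(\sigma)=\sum_c V_c^\dagger P_c\,\sigma\,P_c V_c + \mathcal{D}_\perp(\sigma),
\end{align}
where $\mathcal{D}_\perp$ is any completion that makes $\mathcal{D}$ trace preserving on the orthogonal complement. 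Acting on $\mathcal{N}(|i\rangle\langle j|)=\sum_c d_c\,|\phi_c^i\rangle\langle\phi_c^j|$ gives $\sum_c d_c\,|i\rangle\langle j|=|i\rangle\langle j|$, since $\sum_c d_c=\sum_a\langle i|\mathcal{N}_a^\dagger\mathcal{N}_a|i\rangle=1$ by trace preservation of $\mathcal{N}$; so $\mathcal{D}\circ\mathcal{N}=I_A$ on the code.

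For necessity, I would start from the hypothesis that some $\mathcal{D}$ recovers $\mathcal{N}$ exactly on the code, pass to Stinespring form $\mathcal{N}(\rho)=\mathrm{Tr}_E\,U(\rho\otimes|0\rangle\langle 0|)U^\dagger$ with Kraus operators encoded as $\mathcal{N}_a|i\rangle=\langle a|_E U|i\rangle|0\rangle$, and invoke the standard fact that perfect recoverability is equivalent to the complementary channel $\mathcal{N}^c$ mapping every codeword to the same environment state. Equivalently, I can use the operational argument that $\mathcal{D}\circ\mathcal{N}$ acts as identity on the maximally entangled reference state $|\Phi\rangle_{RA}$; purifying and tracing through, the Kraus operators of $\mathcal{D}\circ\mathcal{N}$ must each act as a scalar on the code, which translates directly into the matrix $C_{ab}=\langle i|\mathcal{N}_a^\dagger\mathcal{N}_b|j\rangle/\delta_{ij}$ being $i,j$-independent. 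The main obstacle I anticipate is the bookkeeping in the necessity direction, in particular verifying that the independence of $C_{ab}$ from $i,j$ really follows from recoverability rather than from a weaker condition; the cleanest route is probably the complementary-channel argument, since it turns the question into the statement that no information about which codeword was sent leaks to the environment, which is manifestly symmetric in $i$ and $j$.
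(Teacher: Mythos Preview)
The paper does not actually prove this theorem: it is stated in the Preliminaries section as background, attributed to Knill and Laflamme via the citation \cite{KL}, and no argument is given. So there is no ``paper's own proof'' to compare against.

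Your proposal is the standard textbook proof and is correct in outline. The sufficiency direction is clean: diagonalizing $C$ to pass to Kraus operators with orthogonal syndrome subspaces, then building the recovery by projecting onto each syndrome block and undoing the corresponding isometry, is exactly the usual construction, and your normalization check $\sum_c d_c=1$ is right. The necessity direction is sketched rather than carried out; your instinct to use the complementary-channel formulation is the right one, since perfect recoverability on the code forces $\mathcal{N}^c$ to be a constant channel there, and the matrix elements of $\mathcal{N}^c(|i\rangle\langle j|)$ in the environment basis are precisely $\langle j|\mathcal{N}_b^\dagger\mathcal{N}_a|i\rangle$, from which the Knill--Laflamme form drops out. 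If you want to tighten it, spell out explicitly that ``constant channel on the code'' means $\mathcal{N}^c(\rho)=\mathrm{Tr}(\rho)\,\sigma_E$ for all $\rho$ supported on $\mathcal{H}_A$, and evaluate this on $|i\rangle\langle j|$ to get the $\delta_{ij}$ directly, rather than appealing to the maximally-entangled-state argument which requires a bit more unpacking.
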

For a given code, one could use several parameters to justify the capability of the code against given errors, for instance, the dimension of the code subspace, the \emph{distance} of the code characterizing at most how large Paulis the code could correct, or \emph{fidelities} that are often used for approximate quantum error correction.

When designing a code, we often need to specify the noise. For instance, in the AdS/CFT code, the code subspace is the bulk effective field theory, and the encoding map is the AdS/CFT dictionary. The noise is specifically, erasing part of Hilbert space in the boundary (erasure). The AdS/CFT code is then protecting the bulk data from erasure errors, based on the claim of the entanglement wedge reconstruction. 
\\
\\
\textbf{Charge decoupling: }Generically, although symmetries make the Hamiltonian decomposed, it does not mean that energy eigenvalues will be completely independent in different charge sectors. 

One of the simplest examples might be $H=\lambda I$, where $I$ here is the identity operator, and $\lambda$ is a disordered parameter follow a given distribution. In this example, obviously, the operator $H$ commutes with every possible charge operator $Q$. So any operation could define a global charge for such a system. However, in each random realization, all eigenvalues are highly correlated. 

In this paper, we only consider models whose eigenvalues are not correlated in different charge sectors. We call it \emph{charge decoupling}, if the corresponding Hamiltonian satisfies this property, or at least roughly satisfies for large $L$. We find it roughly holds for the complex SYK model in the numerical simulation we have done. It will be interesting to study charged systems with highly correlated charge eigenspaces in the future.

\section{Chaos}\label{CH}
This is a technical section about computing chaotic variables in some charged random unitary ensembles. We will systematically compute spectral form factor, frame potential, OTOCs, and decoupling property using the theory of quantum chaos and the Haar integral. We focus on three different types of models: generic charged systems with charge decoupling, U(1)-symmetric Haar randomness, and $k$-invariant systems in each charge subspace. Some technical results are obtained using the \texttt{Mathematica} package, \texttt{RTNI}, where we give a simple introduction in Appendix \ref{gra}. Computations of chaotic variables in general or specific charged systems have their own values, while the discussion of decoupling will directly lead to some quantum error correction interpretations about chaotic systems, where we will give a more detailed discussion in Section \ref{CO}.
\subsection{Form factor}
\subsubsection{General result}
To start, we make some general assumptions about charged systems with charge decoupling. We consider the unitary is given by many charge sectors, and each charge sector acts independently on the state. In each charge sector, the unitary is generated by a chaotic Hamiltonian. Furthermore, we assume that each subspace $H_i$ has the eigenvalues $\lambda_{p,a}$, where $p,q$s are denoting charges, and $a$ is denoting the index of the eigenvalue inside the charge sector.

We will show the following theorem,
\begin{thm}
For U(1)-charged systems with charge decoupling, the $2k$-form factor could be represented by lower form factors in each charge subspace.
\end{thm}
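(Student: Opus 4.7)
The plan is to reduce $|{\rm Tr}(U)|^{2k}$ to a polynomial in the traces of the individual charge blocks and then exploit charge decoupling to factorize the ensemble average over sectors. Since $U=\oplus_p U_p$ we have the trivial but crucial identity
\begin{align}
{\rm Tr}(U) = \sum_p {\rm Tr}(U_p)~,
\end{align}
so that ${\rm Tr}(U)^k$ and ${\rm Tr}(U^\dagger)^k$ each admit a multinomial expansion indexed by nonnegative integer tuples $\{n_p\}$ and $\{m_p\}$ with $\sum_p n_p = \sum_p m_p = k$. Combining these expansions gives
\begin{align}
|{\rm Tr}(U)|^{2k} = \sum_{\{n_p\},\{m_p\}} \binom{k}{n_1,n_2,\ldots}\binom{k}{m_1,m_2,\ldots}\, \prod_p {\rm Tr}(U_p)^{n_p}\,{\rm Tr}(U_p^\dagger)^{m_p}~,
\end{align}
which is a finite sum of products of monomials, each supported in a single charge sector.

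Next I would invoke charge decoupling. By hypothesis the eigenvalue distributions in different charge sectors are statistically independent, so the expectation of a product of monomials factorizes over $p$:
\begin{align}
R_{2k}^{\mathcal E} = \sum_{\{n_p\},\{m_p\}} \binom{k}{n_1,\ldots}\binom{k}{m_1,\ldots}\, \prod_p \bigl\langle {\rm Tr}(U_p)^{n_p}\,{\rm Tr}(U_p^\dagger)^{m_p} \bigr\rangle_{\mathcal E_p}~.
\end{align}
Each factor in the product depends only on the spectrum of one block. For chaotic Hamiltonians whose eigenphase distribution in each block is invariant under a simultaneous global phase rotation (as is the case for any generic unitary ensemble obtained from a Hamiltonian with random energies), the mixed moment vanishes unless $n_p=m_p$, in which case it is exactly the $2n_p$-form factor of the subensemble $\mathcal E_p$. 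Collecting these observations collapses the sum to tuples with $n_p=m_p$ and expresses $R_{2k}^{\mathcal E}$ as a finite combinatorial sum of products $\prod_p R_{2n_p}^{\mathcal E_p}$ with $n_p \le k$.

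I would then tidy the combinatorics: write the result as
\begin{align}
R_{2k}^{\mathcal E} = \sum_{\substack{\{n_p\}\\ \sum_p n_p = k}} \binom{k}{n_1,n_2,\ldots}^{2} \prod_p R_{2n_p}^{\mathcal E_p}~,
\end{align}
which makes manifest the claim that $R_{2k}^{\mathcal E}$ is built from the lower form factors $R_{2n_p}^{\mathcal E_p}$ in each sector. Finally, as a sanity check, I would verify the two natural limits: the single-sector case (where the sum collapses to $R_{2k}^{\mathcal E}$ itself), and the Haar-in-each-sector case (where I can plug in Theorem~2.2 to obtain a closed formula that should reproduce the subsequent explicit U(1)-symmetric Haar computation in the paper).

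The main obstacle I expect is justifying the vanishing of the off-diagonal $n_p\neq m_p$ contributions without overreaching beyond what ``charge decoupling'' strictly supplies. Charge decoupling as defined gives independence of spectra across sectors, but the per-sector phase invariance used to kill mixed moments is an additional, implicit assumption that needs to be stated cleanly; if it is relaxed, the theorem should be phrased as $R_{2k}^{\mathcal E}$ being a polynomial in the generalized moments $\langle {\rm Tr}(U_p)^{n}{\rm Tr}(U_p^\dagger)^{m}\rangle_{\mathcal E_p}$ rather than strictly in the $R_{2n_p}^{\mathcal E_p}$.
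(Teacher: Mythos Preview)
Your multinomial expansion of $|{\rm Tr}(U)|^{2k}$ followed by factorization via charge decoupling is exactly the mechanism the paper uses (the paper phrases it in the eigenvalue basis and organizes the resulting terms by partitions of $2k$ objects into charge sectors, but the content is the same). Where you diverge is in the second half: the paper does \emph{not} invoke any per-sector phase invariance to kill the $n_p\neq m_p$ contributions. It keeps them, and this is not merely cosmetic. Already at $k=1$ the paper obtains
\begin{align}
R_2^{\oplus_p \mathcal{E}_p} = \sum_p R_2^{\mathcal{E}_p} + \sum_{p\neq q} R_1^{\mathcal{E}_p} R_1^{\mathcal{E}_q\,*}~,
\end{align}
whereas your clean formula $\sum_{\{n_p\}}\binom{k}{n_1,\ldots}^2\prod_p R_{2n_p}^{\mathcal{E}_p}$ would give only $\sum_p R_2^{\mathcal{E}_p}$. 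The cross term is real in the paper's sense of the theorem: ``lower form factors'' there includes objects like $R_1$, $R_{21}$, $R_3$, $R_{22}$, etc., which are precisely the generalized mixed moments $\langle {\rm Tr}(U_p)^n {\rm Tr}(U_p^\dagger)^m\rangle_{\mathcal{E}_p}$ you flag in your final paragraph. In other words, the fallback version you sketch at the end \emph{is} the paper's theorem, and your collapsed formula is an overclaim unless you add the phase-invariance hypothesis explicitly.

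So the gap is not in the method but in the scope of the conclusion: drop the phase-invariance step, retain the mixed moments, and your argument matches the paper. If you want the cleaner identity as a corollary, state the extra hypothesis (it holds for Haar in each sector, where indeed $R_1=R_3=R_{21}=\cdots=0$, and then your formula reproduces the paper's later U(1)-symmetric Haar results).
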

\begin{proof}
We start by looking at lower point examples. For $R_2$, we have
\begin{align}
&R_2^{{ \oplus _p}{\mathcal{E}_p}}(L) = L + \sum\limits_{(p,a) \ne (q,b)} {\int {D\lambda {e^{i({\lambda _{p,a}} - {\lambda _{q,b}})}}} } \nonumber\\
&= L + \sum\limits_{p = q,a\ne b} {\int {D\lambda {e^{i({\lambda _{{a_p}}} - {\lambda _{{b_p}}})}}} }  + \sum\limits_{p \ne q} {\int {D\lambda {e^{i{\lambda _{{a_p}}}}}} \int {D\lambda {e^{ - i{\lambda _{{b_q}}}}}} }  \nonumber\\
&= \sum_{p } {R_2^{{\mathcal{E}_p}}({d_p})}  + \sum_{p \ne q} {R_1^{{\mathcal{E}_p}}\left( {{d_p}} \right)R_1^{{\mathcal{E}_q}*}\left( {{d_q}} \right)}~.
\end{align}
For future convenience, we could also define
\begin{align}
&R_2^{{{\cal E}_p}}({d_p}) = \sum\limits_{a,b} {\int {D\lambda } {e^{i({\lambda _{p,a}} - {\lambda _{p,b}})}}}  = \int {dU{\rm{Tr(}}U{\rm{)Tr(}}{U^\dag }{\rm{)}}} ~,\nonumber\\
&R_{21}^{{{\cal E}_p}}({d_p}) = \sum\limits_{a,b} {\int {D\lambda } {e^{i({\lambda _{p,a}} + {\lambda _{p,b}})}}}  = \int {dU{\rm{Tr(}}U{{\rm{)}}^2}} ~,\nonumber\\
&R_{22}^{{{\cal E}_p}}({d_p}) = \sum\limits_{a,b} {\int {D\lambda } {e^{i(2{\lambda _{p,a}} - {\lambda _{p,b}})}}}  = \int {dU{\rm{Tr(}}{U^2}{\rm{)Tr(}}{U^\dag }{\rm{)}}}~,\nonumber\\
&R_3^{{{\cal E}_p}}({d_p}) = \sum\limits_{a,b,c} {\int {D\lambda } {e^{i({\lambda _{p,a}} + {\lambda _{p,b}} - {\lambda _{p,c}})}}}  = \int {dU{\rm{Tr(}}U{\rm{)Tr(}}U{\rm{)Tr(}}{U^\dag }{\rm{)}}}~,\nonumber\\
&R_{31}^{{{\cal E}_p}}({d_p}) = \sum\limits_{a,b,c} {\int {D\lambda } {e^{i(2{\lambda _{p,a}} - {\lambda _{p,b}} - {\lambda _{p,c}})}}}  = \int {dU{\rm{Tr(}}{U^2}{\rm{)Tr(}}{U^\dag }{\rm{)Tr(}}{U^\dag }{\rm{)}}}~,\nonumber\\
&R_4^{{{\cal E}_p}}({d_p}) = \sum\limits_{a,b,c,d} {\int {D\lambda } {e^{i({\lambda _{p,a}} + {\lambda _{p,b}} - {\lambda _{p,c}} - {\lambda _{p,d}})}}}  = \int {dU{\rm{Tr(}}U{\rm{)Tr(}}U{\rm{)Tr(}}{U^\dag }{\rm{)Tr(}}{U^\dag }{\rm{)}}}~.
\end{align}
Furthermore, we could consider $R_4$,
\begin{align}
R_4^{{ \oplus _p}{\mathcal{E}_p}}(L) =\sum\limits_{\alpha,\beta,\gamma,\theta}{\int{D\lambda}{{e}^{i({{\lambda }_{\alpha}}+{{\lambda }_{\beta}}-{{\lambda }_{\gamma}}-{{\lambda }_{\theta}})}}}~.
\end{align}
We have 
\begin{align}
&\sum\limits_{\alpha ,\beta ,\gamma ,\theta } {\int {D\lambda } {e^{i({\lambda _\alpha } + {\lambda _\beta } - {\lambda _\gamma } - {\lambda _\theta })}}} \nonumber\\
&= \sum\limits_p {R_4^{{{\cal E}_p}}({d_p})}  + 4{\rm{Re}}\sum\limits_{p \ne q} {R_3^{{{\cal E}_p}}({d_p})R_1^{*{{\cal E}_q}}({d_q})}  + 4{\rm{Re}}\sum\limits_{p \ne q \ne u} {R_2^{{{\cal E}_p}}({d_p})R_1^{{{\cal E}_q}}({d_q})R_1^{*{{\cal E}_u}}({d_u})} \nonumber\\
&+ 2{\rm{Re}}\sum\limits_{p \ne q \ne u} {R_{21}^{{{\cal E}_p}}({d_p})R_1^{*{{\cal E}_q}}({d_q})R_1^{*{{\cal E}_u}}({d_u})}  + 2{\rm{Re}}\sum\limits_{p \ne q} {R_2^{{{\cal E}_p}}({d_p})R_2^{{{\cal E}_q}}({d_q})} \nonumber\\
&+ {\rm{Re}}\sum\limits_{p \ne q} {R_{21}^{{{\cal E}_p}}({d_p})R_{21}^{*{{\cal E}_q}}({d_q})}  + \sum\limits_{p \ne q \ne u \ne v} {R_1^{{{\cal E}_p}}({d_p})R_1^{{{\cal E}_q}}({d_q})R_1^{{{\cal E}_u}*}({d_u})R_1^{{{\cal E}_v}*}({d_v})}~.
\end{align}
We write an identical but alternative expression for this form factor in Appendix \ref{alter}.

Now we study the general case. Generically, for $2k$-point form factors, let $\sigma$ be a partition of $2k$ different objects, while term $i$ in the partition is specified as $\sigma_i$, and $i$ is ranging from 1 to $\ell(\sigma)$ (which contains $|\sigma_i|$ objects), the number of terms in the partition. Then the $2k$ point form factor is given by
\begin{align}
R_{2k}^{{ \oplus _p}{{\cal E}_p}}(L) = \sum\limits_{{\alpha _1},{\alpha _2}, \ldots ,{\alpha _k},{\beta _1},{\beta _2} \ldots ,{\beta _k}} {\int {D\lambda } {e^{i\sum\limits_{j = 1}^k {({\lambda _{{\alpha _j}}} - {\lambda _{{\beta _j}}})} }}}  = \sum\limits_\sigma  {\sum\limits_{{p_1} \ne {p_2} \ldots  \ne {p_{\ell (\sigma )}}} {\prod\nolimits_{i = 1}^{\ell (\sigma )} {{R^{\mathcal{E}_{p_i}}_{\left| {{\sigma _i}} \right|}}({\sigma _i},{d_{{p_i}}})} } }~,
\end{align}
where ${{R^{\mathcal{E}_{p_i}}_{\left| {{\sigma _i}} \right|}}({\sigma _i},{d_{{p_i}}})} $ means that we assign ${\alpha _1},{\alpha _2}, \ldots ,{\alpha _k},{\beta _1},{\beta _2} \ldots ,{\beta _k}$ to those $2k$ different objects, and compute form factors specified by $\sigma_i$, in the $p_i$ charge sector $\mathcal{E}_{p_i}$.
\end{proof}

\subsubsection{Haar randomness}
Now we specify the system to be U(1)-symmetric Haar $\mathcal{H}$. Firstly we could prove an asymptotic formula.
\begin{thm}
For large $D$ ($D\gg k\gg 1$), we have
\begin{align}
R_{2k}\sim k! D^k~.
\end{align}
\end{thm}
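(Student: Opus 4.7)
The plan is to apply the general form factor decomposition from the preceding theorem to the U(1)-symmetric Haar ensemble and isolate the partitions that dominate when $D\gg k \gg 1$. Recall that $R_{2k}^{\oplus_p \mathcal{H}_p}(L) = \sum_\sigma \sum_{p_1 \neq p_2 \neq \ldots \neq p_{\ell(\sigma)}} \prod_{i=1}^{\ell(\sigma)} R^{\mathcal{H}_{p_i}}_{|\sigma_i|}(\sigma_i, d_{p_i})$, where $\sigma$ ranges over partitions of the $2k$ labeled indices $\alpha_1, \ldots, \alpha_k, \beta_1, \ldots, \beta_k$. Since the single-sector Haar integral vanishes whenever the number of $U$'s differs from the number of $U^\dagger$'s in the integrand, only partitions in which every part $\sigma_i$ contains equal numbers $m_i$ of $\alpha$'s and $\beta$'s, with $\sum_i m_i = k$, can contribute.

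First I would identify the dominant partitions. The charge index $p$ ranges over the $D+1$ values $\{0,1,\ldots,D\}$, so the sum $\sum_{p_1\neq\ldots\neq p_\ell}$ scales as $D^\ell$ to leading order in $D$. The contribution is therefore maximized at $\ell=k$, which forces every part to be a balanced pair containing exactly one $\alpha$ and one $\beta$. The number of such partitions equals the number of bijections from $\{\alpha_j\}$ to $\{\beta_j\}$, namely $k!$. In each such part the single-sector form factor is $R^{\mathcal{H}_{p_i}}_2(d_{p_i}) = \int dU_{d_{p_i}}\,|\text{Tr}(U)|^2 = 1$, valid for any $d_{p_i}\geq 1$ by the Haar $k=1$ form factor theorem. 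Summing over $k$ distinct charge sectors yields $(D+1)(D)\cdots(D+2-k) = D^k + O(D^{k-1})$, so these leading partitions together contribute $k!\,D^k + O(D^{k-1})$.

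Next I would show that all partitions with $\ell<k$ are subleading. Under $D\gg k$, every sector except the two extremal ones $p=0,D$ satisfies $d_p=\binom{D}{p}\geq k\geq m_i$, so the Haar form factor theorem gives $R^{\mathcal{H}_{p_i}}_{2m_i}(d_{p_i}) = m_i!$, which is a $k$-bounded constant. The sum over distinct sectors is at most $(D+1)^\ell = O(D^{k-1})$ for $\ell\leq k-1$, and the total number of admissible partitions is bounded by a Bell-number-type constant $B_{2k}$ independent of $D$. Thus the aggregate contribution of all $\ell<k$ partitions is $O(D^{k-1})$ with a coefficient depending only on $k$, which is negligible compared to $k!\,D^k$ when $D\gg k$.

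The main obstacle is the careful treatment of the extremal sectors with $d_p=1$, where the clean identity $R^{\mathcal{H}_p}_{2m}=m!$ fails for $m>1$. However only $O(1)$ such sectors exist and the corresponding form factors are trivially bounded by constants (since $|\text{Tr}(U)|\leq 1$ for $d_p=1$), so their contribution is absorbed into the same $O(D^{k-1})$ subleading remainder. Combining the leading and subleading pieces yields $R_{2k} = k!\,D^k + O(D^{k-1})$, which is the asserted asymptotic $R_{2k}\sim k!\,D^k$ in the regime $D\gg k\gg 1$.
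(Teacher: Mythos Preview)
Your proof is correct and follows the same approach as the paper: identifying the $k!$ partitions into balanced pairs as the dominant contribution $\sim k!\,D^k$ via the general form factor decomposition and using $R_2^{\mathcal{H}}(d_p)=1$ in each factor. The paper's own proof is in fact just the one-line observation that $R_{2k}^{\oplus_p\mathcal{H}_p}(L)\sim k!\sum_{p_1\ne\ldots\ne p_k}R_2^{\mathcal{H}}(d_{p_1})\cdots R_2^{\mathcal{H}}(d_{p_k})\sim k!\,D^k$, so your version is more complete --- you explicitly verify that unbalanced parts vanish for Haar, that the $\ell<k$ balanced partitions contribute only $O(D^{k-1})$, and that the extremal $d_p=1$ sectors are harmless.
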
 
\begin{proof}
In fact, in large $D$, $R_{2k}$ is dominated by 
\begin{align}
R_{2k}^{{ \oplus _p}{\mathcal{H}_p}}(L) \sim k!{\rm{Re}}\sum\limits_{{p_1} \ne {p_2} \ldots  \ne {p_k}} {R_2^{{\mathcal{H}}}({d_{{p_1}}})R_2^{{\mathcal{H}}}({d_{{p_2}}}) \ldots R_2^{{\mathcal{H}}}({d_{{p_k}}})}  \sim k!{D^k}~.
\end{align}
\end{proof}

\subsection{OTOCs}
\subsubsection{General result}
In general, the problem is highly simplified when operators commute with charge. 
\begin{thm}
For $A_i$, $B_i$ commuting with charge $Q$, we have
\begin{align}
{\left\langle {{A_1}{{\tilde B}_1}{A_2}{{\tilde B}_2} \ldots {A_k}{{\tilde B}_k}} \right\rangle _{\cal E}} = \sum\limits_p {\frac{{{d_p}}}{L}{{\left\langle {{A_{1,p}}{{\tilde B}_{1,p}}{A_{2,p}}{{\tilde B}_{2,p}} \ldots {A_{k,p}}{{\tilde B}_{k,p}}} \right\rangle }_{{{\cal E}_p}}}}~.
\end{align}
\end{thm}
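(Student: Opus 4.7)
The plan is to exploit the block structure that the symmetry forces on every object inside the correlator, reduce the expression to a sum over charge blocks, and then match normalizations to recover the stated factor $d_p/L$.

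First I would record what $[A_i,Q]=0$ and $[B_i,Q]=0$ buy us: in the charge eigenbasis $A_i$ and $B_i$ are block diagonal, $A_i=\oplus_p A_{i,p}$ and $B_i=\oplus_p B_{i,p}$, because each operator preserves every eigenspace of $Q$. By the very definition of the ensemble $\mathcal{E}=\oplus_p\mathcal{E}_p$ the unitary is also block diagonal, $U=\oplus_p U_p$, and hence $\tilde B_i=UB_iU^\dagger=\oplus_p U_pB_{i,p}U_p^\dagger=\oplus_p\tilde B_{i,p}$.

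Next I would use the elementary fact that a product of block-diagonal matrices is block diagonal with blocks equal to the products of the corresponding blocks, so
\begin{align}
A_1\tilde B_1 A_2\tilde B_2\cdots A_k\tilde B_k \;=\; \bigoplus_p\bigl(A_{1,p}\tilde B_{1,p}A_{2,p}\tilde B_{2,p}\cdots A_{k,p}\tilde B_{k,p}\bigr),
\end{align}
and the trace of a block-diagonal operator is the sum of the traces of its blocks. This converts the full trace over the $L$-dimensional Hilbert space into a sum $\sum_p \mathrm{Tr}_p(\cdots)$ over charge sectors.

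Finally, I would use that the measure on $\oplus_p\mathcal{H}_p$ factorizes as the product measure $\prod_p dU_p$, and that each block $A_{i,p}\tilde B_{i,p}\cdots$ depends only on $U_p$; the integrals over the other sectors collapse because $\int dU_{p'}=1$. Matching the two normalization conventions
\begin{align}
\langle X\rangle_\mathcal{E}=\frac{1}{L}\int dU\,\mathrm{Tr}(X),\qquad \langle X_p\rangle_{\mathcal{E}_p}=\frac{1}{d_p}\int dU_p\,\mathrm{Tr}_p(X_p),
\end{align}
one reads off
\begin{align}
\langle A_1\tilde B_1\cdots A_k\tilde B_k\rangle_\mathcal{E}
=\frac{1}{L}\sum_p\int dU_p\,\mathrm{Tr}_p\bigl(A_{1,p}\tilde B_{1,p}\cdots A_{k,p}\tilde B_{k,p}\bigr)
=\sum_p\frac{d_p}{L}\,\langle A_{1,p}\tilde B_{1,p}\cdots A_{k,p}\tilde B_{k,p}\rangle_{\mathcal{E}_p},
\end{align}
which is exactly the claimed identity. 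There is no real obstacle here beyond notational bookkeeping; the only small point to be careful about is that ``$A$ commutes with $Q$'' means genuine block diagonality (not merely that $A$ acts within $\oplus_p\mathcal{H}_p$ in some looser sense), so the product and trace really do split sector by sector.
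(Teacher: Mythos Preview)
Your proof is correct and follows essentially the same approach as the paper: both exploit that $U$, $A_i$, and $B_i$ are simultaneously block diagonal in the charge basis, so the product and trace split sector by sector, with the factor $d_p/L$ arising from the normalization conventions. The only difference is presentational---the paper carries out the argument by explicit index chasing (forcing all charge labels to coincide one constraint at a time), while you phrase it more abstractly via the direct-sum structure; the logical content is identical.
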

\begin{proof}
We could firstly look at two-point examples. We have
\begin{align}
&\left\langle {A\tilde B} \right\rangle_\mathcal{E} = \frac{1}{L}\sum\limits_p {A_{(p,j)}^{(p,i)}B_{(p,l)}^{(p,k)}\int {dU} U_{(p,k)}^{\dag (p,j)}U_{(p,i)}^{(p,l)}} \nonumber\\
&= \frac{1}{L}\sum\limits_p {{d_p}\left\langle {{A_p}{{\tilde B}_p}} \right\rangle }_{\mathcal{E}_p}~.
\end{align}
Thus, in general, we have
\begin{align}
&\frac{1}{L}\int {dU{\rm{Tr}}\left( {{A_1}{U^\dag }{B_1}U{A_2}{U^\dag }{B_2}U \ldots {A_k}{U^\dag }{B_k}U} \right)}\nonumber\\
&= \frac{1}{L}\sum\limits_{p,q} \begin{array}{l}
A_{1,({p_2},{i_2})}^{({p_1},{i_1})}B_{1,({q_2},{j_2})}^{({q_1},{j_1})}A_{2,({p_4},{i_4})}^{({p_3},{i_3})}B_{2,({q_4},{j_4})}^{({q_3},{j_3})} \ldots A_{k,({p_{2k - 1}},{i_{2k - 1}})}^{({p_{2k - 1}},{i_{2k - 1}})}B_{k,({q_{2k - 1}},{j_{2k - 1}})}^{({q_{2k - 1}},{j_{2k - 1}})}\\
\int {dU} U_{({q_1},{j_1})}^{\dag ({p_2},{i_2})}U_{({p_3},{i_3})}^{({q_2},{j_2})}U_{({q_3},{j_3})}^{\dag ({p_4},{i_4})}U_{({p_5},{i_5})}^{({q_4},{j_4})} \ldots U_{({q_{2k - 1}},{j_{2k - 1}})}^{\dag ({p_{2k - 1}},{i_{2k - 1}})}U_{({p_1},{i_1})}^{({q_{2k - 1}},{j_{2k - 1}})}~.
\end{array} 
\end{align}
Since we know that $U$ is block-diagonal, so we have to force 
\begin{align}
&{p_2} = {q_1}~,\nonumber\\
&{p_3} = {q_2}~,\nonumber\\
&{p_4} = {q_3}~,\nonumber\\
&\ldots ~.
\end{align}
Since $A$ and $B$s are also block-diagonal, we have
\begin{align}
&{p_1} = {p_2}~,\nonumber\\
&{q_1} = {q_2}~,\nonumber\\
&{p_3} = {p_4}~,\nonumber\\
&\ldots ~.
\end{align}
So every index should be equal, and we have
\begin{align}
&{\left\langle {{A_1}{{\tilde B}_1}{A_2}{{\tilde B}_2} \ldots {A_k}{{\tilde B}_k}} \right\rangle _{\mathcal E}} \nonumber\\
&= \frac{1}{L}\sum\limits_p \begin{array}{l}
A_{1,(p,{i_2})}^{(p,{i_1})}B_{1,(p,{j_2})}^{(p,{j_1})}A_{2,(p,{i_4})}^{(p,{i_3})}B_{2,(p,{j_4})}^{(p,{j_3})} \ldots A_{k,(p,{i_{2k - 1}})}^{(p,{i_{2k - 1}})}B_{k,(p,{j_{2k - 1}})}^{(p,{j_{2k - 1}})}\\
\int {dU} U_{(p,{j_1})}^{\dag (p,{i_2})}U_{(p,{i_3})}^{(p,{j_2})}U_{(p,{j_3})}^{\dag (p,{i_4})}U_{(p,{i_5})}^{(p,{j_4})} \ldots U_{(p,{j_{2k - 1}})}^{\dag (p,{i_{2k - 1}})}U_{(p,{i_1})}^{(p,{j_{2k - 1}})}
\end{array} \nonumber\\
&= \sum\limits_p {\frac{{{d_p}}}{L}{{\left\langle {{A_{1,p}}{{\tilde B}_{1,p}}{A_{2,p}}{{\tilde B}_{2,p}} \ldots {A_{k,p}}{{\tilde B}_{k,p}}} \right\rangle }_{{{\cal E}_p}}}}~.
\end{align}
as desired.
\end{proof}
This theorem is simply expected since operators are also decoupled to different charge sectors. If we remove such assumptions, cases are a little harder. We will give the following simple example.
\begin{exmp}
We consider a generic two-point OTOC. We have
\begin{align}
&\left\langle {A\tilde B} \right\rangle  = \frac{1}{L}A_{(q,j)}^{(p,i)}B_{(p,l)}^{(q,k)}\int {dU} U_{(q,k)}^{\dag (q,j)}U_{(p,i)}^{(p,l)}\nonumber\\
&= \frac{1}{L}\sum\limits_p {A_{(p,j)}^{(p,i)}B_{(p,l)}^{(p,k)}\int {dU} U_{(p,k)}^{\dag (p,j)}U_{(p,i)}^{(p,l)}} \nonumber\\
&+ \frac{1}{L}\sum\limits_{p \ne q} {A_{(q,j)}^{(p,i)}B_{(p,l)}^{(q,k)}\int {dUU_{(q,k)}^{\dag (q,j)}} \int {dUU_{(p,i)}^{(p,l)}} }~.
\end{align}
The first term is the sum of all separate charge sectors. The second term is due to the non-vanishing of mixing blocks $A^{q}_p$ in matrix $A$ or $B$.
\end{exmp}
This mechanism is easy to obtain in the more general case, while similar to form factor calculation, we could compute partitions of $2k$ objects, and then assign each partition with known OTOCs in charge sectors. We will leave those exercises to curious readers.

\subsubsection{Haar randomness}
Now we consider the case where the ensemble is the U(1)-symmetric Haar randomness. To avoid triviality, we could consider cases where operators are randomly assigned instead of block-diagonal in charge eigenspaces. We proceed with this analysis by examples.
\begin{exmp}
We start from the two-point function. We have 
\begin{align}
{\left\langle {A\tilde B} \right\rangle _{{ \oplus _q}{\mathcal{H}_q}}}= \frac{1}{L}A_{({q_2},j)}^{({q_1},i)}B_{({q_4},l)}^{({q_3},k)}\int {dU} U_{({q_3},k)}^{\dag ({q_2},j)}U_{({q_1},i)}^{({q_4},l)}~.
\end{align}
Since $U$ is block-diagonal, we have to force $q_2=q_3$ and $q_1=q_4$ and to obtain a nontrivial Haar integral, we have to force every charge index to be equal. Thus we obtain 
\begin{align}
{\left\langle {A\tilde B} \right\rangle _{{ \oplus _p}{\mathcal{H}_p}}} = \sum\limits_p {\frac{{{d_p}}}{L}{{\left\langle {{A_p}{{\tilde B}_p}} \right\rangle }_{{\mathcal{H}_p}}}}~.
\end{align}
Using the Haar results we could obtain
\begin{align}
{\left\langle {A\tilde B} \right\rangle _{{ \oplus _p}{\mathcal{H}_p}}} = \sum\limits_p {\frac{{{d_p}}}{L}{{\left\langle {{A_p}} \right\rangle }_{{\mathcal{H}_p}}}{{\left\langle {{B_p}} \right\rangle }_{{\mathcal{H}_p}}}}~.
\end{align}
\end{exmp}
Specifically, we could consider $A$ and $B$ are Paulis. So we have the following example.
\begin{exmp}
Assuming $A$ and $B$ are Paulis. We know that the charge operator is generated by $Z$. For a given Pauli chain $\sigma$, we denote 
\begin{align}
&z(\sigma)=\#\text{ of }Z\text{s in the chain }\sigma~,\nonumber\\
&i(\sigma)=\#\text{ of }I\text{s in the chain }\sigma~.
\end{align}
Then we find 
\begin{align}
\left\langle {A\tilde B} \right\rangle_{{ \oplus _p}{\mathcal{H}_p}} = \left\{ \begin{array}{l}
\frac{1}{{{d_q}}}~~~z(A) = z(B) = q = D - i(A) - i(B)\\
0~~~{\rm{other cases}}
\end{array} \right.~.
\end{align}
\end{exmp}
Now we discuss higher-point functions.
\begin{exmp}
For the four-point function we have
\begin{align}
&{\left\langle {{A_1}{{\tilde B}_1}{A_2}{{\tilde B}_2}} \right\rangle _{{ \oplus _p}{\mathcal{H}_p}}}\nonumber\\
&=\frac{1}{L} A_{1,({q_2},j)}^{({q_1},i)}B_{1,({q_4},l)}^{({q_3},k)}A_{2,({q_6},n)}^{({q_5},m)}B_{2,({q_8},r)}^{({q_7},o)}\int {dU} U_{({q_3},k)}^{\dag ({q_2},j)}U_{({q_5},m)}^{({q_4},l)}U_{({q_7},o)}^{\dag ({q_6},n)}U_{({q_1},i)}^{({q_8},r)}\nonumber\\
&=\sum\limits_p {\frac{{{d_p}}}{L}{{\left\langle {{A_{1,p}}{{\tilde B}_{1,p}}{A_{2,p}}{{\tilde B}_{2,p}}} \right\rangle }_{{\mathcal{H}_p}}}} \nonumber\\
&+ \sum\limits_{p \ne q} {A_{1,(q,j)}^{(p,i)}A_{2,(p,i)}^{(q,j)}{{\left\langle {{B_1}} \right\rangle }_q}{{\left\langle {{B_2}} \right\rangle }_p}} {\rm{ }}\nonumber\\
&+ \sum\limits_{p \ne q} {{{\left\langle {{A_1}} \right\rangle }_q}{{\left\langle {{A_2}} \right\rangle }_p}B_{1,(p,j)}^{(q,i)}B_{2,(q,i)}^{(p,j)}}~,
\end{align}
where
\begin{align}
&{\left\langle {{A_{1,p}}{{\tilde B}_{1,p}}{A_{2,p}}{{\tilde B}_{2,p}}} \right\rangle _{{{\cal H}_p}}} = {\left\langle {{A_{1,p}}{A_{2,p}}} \right\rangle _{{{\cal H}_p}}}{\left\langle {{B_{1,p}}} \right\rangle _{{{\cal H}_p}}}{\left\langle {{B_{2,p}}} \right\rangle _{{{\cal H}_p}}} + {\left\langle {{A_{1,p}}} \right\rangle _{{{\cal H}_p}}}{\left\langle {{A_{2,p}}} \right\rangle _{{{\cal H}_p}}}{\left\langle {{B_{1,p}}{B_{2,p}}} \right\rangle _{{{\cal H}_p}}}\nonumber\\
&- {\left\langle {{A_{1,p}}} \right\rangle _{{{\cal H}_p}}}{\left\langle {{A_{2,p}}} \right\rangle _{{{\cal H}_p}}}{\left\langle {{B_{1,p}}} \right\rangle _{{{\cal H}_p}}}{\left\langle {{B_{2,p}}} \right\rangle _{{{\cal H}_p}}} - \frac{1}{{d_p^2 - 1}}{\left\langle {\left\langle {{A_{1,p}}{A_{2,p}}} \right\rangle } \right\rangle _{{{\cal H}_p}}}{\left\langle {\left\langle {{B_{1,p}}{B_{2,p}}} \right\rangle } \right\rangle _{{{\cal H}_p}}}~,
\end{align}
for $d_p> 2$. For $d_p=1$ we have 
\begin{align}
{\left\langle {{A_{1,p}}{{\tilde B}_{1,p}}{A_{2,p}}{{\tilde B}_{2,p}}} \right\rangle _{{{\cal H}_p}}} = {A_{1,p}}{B_{1,p}}{A_{2,p}}{B_{2,p}}~.
\end{align}
\end{exmp}

\subsubsection{$k$-invariant subspace}
Now, we start to study charged systems with $k$-invariance. What is the practical quantum information model with conserved charge, following the spirit of $k$-invariance for a general random unitary system? Considering a real system with charge decoupling, we expect that each charge sector should be completely independent of other sectors. Thus, it is natural to assign $k$-invariance \emph{in each charge subspace}. A practical model of this type is the complex SYK model: In each charge sector, the system looks like GUE, which is known to be $k$-invariant. Furthermore, the model almost has the property of charge decoupling generically in all time scale. Thus, we consider $k$-invariance in each subspace as a generalization of $k$-invariance in the case of U(1) symmetry. Note that $k$-invariance in each subspace may not imply $k$-invariance for the whole random unitary.

We start from the simplest case, where we assume that operators themselves are also independent in different charge sectors.
\begin{thm}
For operators commuting with the charge operator, we have 
\begin{align}
{\left\langle {{A_1}{{\tilde B}_1}{A_2}{{\tilde B}_2} \ldots {A_k}{{\tilde B}_k}} \right\rangle _\mathcal{E}} \approx \sum\limits_p {\frac{{R_{2k}^{{\mathcal{E}_p}}}}{{d_p^{2k}L}}{\rm{Tr}}\left( {{A_{1,p}}{B_{1,p}}{A_{2,p}}{B_{2,p}} \ldots {A_{k,p}}{B_{k,p}}} \right)}~.
\end{align}
\end{thm}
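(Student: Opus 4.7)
The plan is to combine the preceding reduction with the $k$-invariance of $\mathcal{E}_p$ and then perform an auxiliary Haar integral to extract the leading contribution. First, the earlier theorem for operators commuting with $Q$ gives
\begin{align}
{\left\langle {{A_1}{{\tilde B}_1} \ldots {A_k}{{\tilde B}_k}} \right\rangle _{\mathcal{E}}} = \sum\limits_p {\frac{{{d_p}}}{L}{{\left\langle {{A_{1,p}}{{\tilde B}_{1,p}} \ldots {A_{k,p}}{{\tilde B}_{k,p}}} \right\rangle }_{{{\mathcal{E}}_p}}}}~,
\end{align}
so the task reduces to evaluating the OTOC inside each charge subspace.

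Second, because $\mathcal{E}_p$ is $k$-invariant in its charge subspace, the argument in the proof of Theorem~2.5 implies the equality of the $k$-th moment superoperators,
\begin{align}
\int_{\mathcal{E}_p} dU\,U^{\otimes k} \otimes (U^{\dag})^{\otimes k} = \int_{\tilde{\mathcal{E}}_p} dU\,U^{\otimes k} \otimes (U^{\dag})^{\otimes k}~.
\end{align}
Since the OTOC is a fixed linear functional of this superoperator contracted with the $A_{i,p}$ and $B_{i,p}$, I may replace $\langle\cdot\rangle_{\mathcal{E}_p}$ by $\langle\cdot\rangle_{\tilde{\mathcal{E}}_p}$, which inserts an extra Haar-random conjugation $U_p \mapsto W U_p W^{\dag}$ with $W \in \mathcal{H}_p$ and an integral over $W$. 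Performing the $W$-integral first, with $U_p$ held fixed, turns each operator insertion into a twirl $W^{\dag} X W$ while the interleaving $U_p$, $U_p^{\dag}$ factors carry no $W$-dependence, so the $W$-integral reduces to a standard Weingarten sum indexed by pairs of permutations in $S_{2k}$.

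Third, among these Weingarten terms, a particular pairing separates the $k$ copies of $U_p$ and of $U_p^{\dag}$ into distinct single-power traces while simultaneously concatenating all the operators into the single cycle $\text{Tr}(A_{1,p} B_{1,p} \ldots A_{k,p} B_{k,p})$; this contribution carries Weingarten weight $\sim 1/d_p^{2k}$ and leaves $(\text{Tr}\,U_p)^k (\text{Tr}\,U_p^{\dag})^k = |\text{Tr}\,U_p|^{2k}$. Integrating this over $U_p \in \mathcal{E}_p$ converts it to $R_{2k}^{\mathcal{E}_p}$, and combining with the $d_p/L$ prefactor from the first step and the $1/d_p$ from the normalization of $\langle\cdot\rangle_{\mathcal{E}_p}$ reproduces the claimed $R_{2k}^{\mathcal{E}_p}/(d_p^{2k} L)$ times the operator trace. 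The main obstacle I expect is establishing that this pairing truly dominates the full Weingarten sum: other permutations join some $U_p$'s and $U_p^{\dag}$'s into $\text{Tr}(U_p U_p^{\dag})=d_p$ or into higher-power traces that give lower form factors $R_m^{\mathcal{E}_p}$ with $m<2k$, and the $\approx$ sign in the statement is precisely what absorbs them; a rigorous justification requires a combinatorial matching between permutations in $S_{2k}$ and their contraction topologies, together with the large-$d_p$ asymptotics of the Weingarten function, so that the subleading contributions are controlled in a sufficiently chaotic $\mathcal{E}_p$.
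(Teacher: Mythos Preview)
Your proposal is correct and follows essentially the same route as the paper: reduce to charge sectors via the earlier theorem, then use $k$-invariance in each sector to extract the leading $R_{2k}^{\mathcal{E}_p}$ contribution. The paper's proof is terser---it simply quotes the $k$-invariant OTOC formula ${\left\langle A_{1,p}\tilde B_{1,p}\cdots A_{k,p}\tilde B_{k,p}\right\rangle_{\mathcal{E}_p}} \approx \mathrm{Tr}(A_{1,p}B_{1,p}\cdots A_{k,p}B_{k,p})\,R_{2k}^{\mathcal{E}_p}/d_p^{2k+1}$ (the result of \cite{Cotler:2017jue}) under the assumption that the form factor is large---whereas you unpack that step by inserting the auxiliary Haar conjugation $W$ and running the Weingarten expansion explicitly, which is exactly how that formula is derived in the cited reference.
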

\begin{proof}
As proved before, we have
\begin{align}
{\left\langle {{A_1}{{\tilde B}_1}{A_2}{{\tilde B}_2} \ldots {A_k}{{\tilde B}_k}} \right\rangle _{\cal E}} = \sum\limits_p {\frac{{{d_p}}}{L}{{\left\langle {{A_{1,p}}{{\tilde B}_{1,p}}{A_{2,p}}{{\tilde B}_{2,p}} \ldots {A_{k,p}}{{\tilde B}_{k,p}}} \right\rangle }_{{{\cal E}_p}}}}~.
\end{align}
Now let us assume that in each sector, it highly deviates from the Haar results, and then the spectral form factors are large. In this case, we have
\begin{align}
{\left\langle {{A_{1,p}}{{\tilde B}_{1,p}}{A_{2,p}}{{\tilde B}_{2,p}} \ldots {A_{k,p}}{{\tilde B}_{k,p}}} \right\rangle _{{\mathcal{E}_p}}} \approx {\rm{Tr}}\left( {{A_{1,p}}{B_{1,p}}{A_{2,p}}{B_{2,p}} \ldots {A_{k,p}}{B_{k,p}}} \right)\frac{{R_{2k}^{{\mathcal{E}_p}}}}{{d_p^{2k + 1}}}~.
\end{align}
So the result is given by
\begin{align}
{\left\langle {{A_1}{{\tilde B}_1}{A_2}{{\tilde B}_2} \ldots {A_k}{{\tilde B}_k}} \right\rangle _\mathcal{E}} \approx \sum\limits_p {\frac{{R_{2k}^{{\mathcal{E}_p}}}}{{d_p^{2k}L}}{\rm{Tr}}\left( {{A_{1,p}}{B_{1,p}}{A_{2,p}}{B_{2,p}} \ldots {A_{k,p}}{B_{k,p}}} \right)}~.
\end{align}
\end{proof}
Now we give a two-point function example.
\begin{exmp}
We start from two-point. It is given by
\begin{align}
&\left\langle {A\tilde B} \right\rangle  = \frac{1}{L}A_{(q,j)}^{(p,i)}B_{(p,l)}^{(q,k)}\int {dU} U_{(q,k)}^{\dag (q,j)}U_{(p,i)}^{(p,l)}\nonumber\\
&= \sum\limits_p {\frac{{{d_p}}}{L}} {\left\langle {{A_p}{{\tilde B}_p}} \right\rangle _{{{\cal E}_p}}}\nonumber\\
&+ \frac{1}{L}\sum\limits_{p \ne q} {{\rm{Tr}}\left( {\int {d{U_q}d{V_q}A_q^p{V_q}{U^{\dag q}}V_q^\dag } \int {d{U_p}d{V_p}B_p^q{V_p}{U^p}V_p^\dag } } \right)}~.
\end{align}
The first term is
\begin{align}
{\left\langle {{A_p}{{\tilde B}_p}} \right\rangle _p} = {\left\langle {{A_p}} \right\rangle _p}{\left\langle {{B_p}} \right\rangle _p} + \frac{{R_2^{{{\cal E}_p}}({d_p}) - 1}}{{d_p^2 - 1}}{\left\langle {\left\langle {{A_p}{B_p}} \right\rangle } \right\rangle _p}~,
\end{align}
for $d_p>1$, and ${\left\langle {A_p\tilde{B}_p} \right\rangle _p} = A_pB_p$ for $d_p=1$. For the second term, we have
\begin{align}
&\frac{1}{L}\sum\limits_{p \ne q} {{\rm{Tr}}\left( {\int {d{U_q}d{V_q}A_q^p{V_q}{U^{\dag q}}V_q^\dag } \int {d{U_p}d{V_p}B_p^q{V_p}{U^p}V_p^\dag } } \right)} \nonumber\\
&= \frac{1}{L}\sum\limits_{p \ne q} {\frac{1}{{{d_p}{d_q}}}A_{(q,j)}^{(p,i)}B_{(p,i)}^{(q,j)}\int {d{U_p}d{U_q}} U_{(q,k)}^{\dag (q,k)}U_{(p,l)}^{(p,l)}} \nonumber\\
&= \frac{1}{L}\sum\limits_{p \ne q} {\frac{1}{{{d_p}{d_q}}}A_q^pB_p^qR_1^{{{\cal E}_p}}({d_p})R_1^{{{\cal E}_q}}({d_q})}~.
\end{align}
Specifically, if each charge sector is just the Haar system, the $R_1$ part and the $R_2$ part are zero, so we recover the previous result for the Haar randomness.  
\end{exmp}

This is only the two-point function. For higher-point functions, the computation is harder but straightforward based on the above methodology. 

\subsection{Frame potential}
\subsubsection{General result}
In general, frame potential is a much more complicated object. We have the following theorem. 
\begin{thm}
Frame potential for a general charged system $\mathcal{E}$ with charge decoupling could be written as a sum of variables inside charge sectors, although many of them cannot be represented as frame potentials in charge sectors.
\end{thm}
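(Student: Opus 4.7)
The plan is to expand $|\mathrm{Tr}(UV^{\dagger})|^{2k}$ by exploiting the block-diagonal structure inherited from charge conservation. Since $U=\oplus_p U_p$ and $V=\oplus_p V_p$, the overlap trace factorizes as $\mathrm{Tr}(UV^{\dagger})=\sum_p \mathrm{Tr}(U_p V_p^{\dagger})$, so
\begin{align}
\left|\mathrm{Tr}(UV^{\dagger})\right|^{2k}
=\Bigl(\sum_p \mathrm{Tr}(U_p V_p^{\dagger})\Bigr)^{k}\Bigl(\sum_q \mathrm{Tr}(U_q^{\dagger} V_q)\Bigr)^{k}
=\sum_{\vec p,\vec q}\prod_{s=1}^{k}\mathrm{Tr}(U_{p_s}V_{p_s}^{\dagger})\,\mathrm{Tr}(U_{q_s}^{\dagger}V_{q_s}),
\end{align}
a multinomial sum over assignments $(\vec p,\vec q)=(p_1,\dots,p_k;q_1,\dots,q_k)$ of charge labels to the $k$ ``unbarred'' and $k$ ``barred'' trace slots.

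Next I would invoke charge decoupling: by assumption, the ensemble factorizes as $\mathcal{E}=\oplus_p\mathcal{E}_p$ with independent blocks, so the $dU\,dV$ integral over $\mathcal{E}$ breaks into a product of independent integrals over each $\mathcal{E}_p$. For a given assignment $(\vec p,\vec q)$, let $a_p=\#\{s:p_s=p\}$ and $b_p=\#\{s:q_s=p\}$; these satisfy $\sum_p a_p=\sum_p b_p=k$. Collecting the factors landing in sector $p$ gives the single-sector integral
\begin{align}
M^{(a_p,b_p)}_{\mathcal{E}_p}\;\equiv\;\int_{\mathcal{E}_p} dU_p\,dV_p\;\bigl(\mathrm{Tr}(U_p V_p^{\dagger})\bigr)^{a_p}\bigl(\mathrm{Tr}(U_p^{\dagger}V_p)\bigr)^{b_p},
\end{align}
so that $F_{\mathcal{E}}^{(k)}$ becomes a sum, weighted by multinomial coefficients, of products $\prod_p M^{(a_p,b_p)}_{\mathcal{E}_p}$ over all profiles $\{(a_p,b_p)\}$ with $\sum a_p=\sum b_p=k$. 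This is the claimed representation as ``a sum of variables inside charge sectors.''

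To establish the second half of the statement, I would isolate the ``diagonal'' profiles where $a_p=b_p$ for every $p$: in that case $M^{(a_p,a_p)}_{\mathcal{E}_p}=F^{(a_p)}_{\mathcal{E}_p}$ is genuinely a frame potential of order $a_p\le k$ in the sector $p$. By contrast, as soon as some sector has $a_p\ne b_p$ (which is forced whenever the partitions $\vec p$ and $\vec q$ are not a rearrangement of each other, e.g.\ any term with $a_{p_0}=2,b_{p_0}=0$ compensated by $a_{p_1}=0,b_{p_1}=2$), the integrand contains unequal numbers of $U_p$ and $U_p^{\dagger}$ factors in that block, and no frame potential $F^{(m)}_{\mathcal{E}_p}$ has this analytic form. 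Exhibiting a concrete such profile (which exists whenever there is more than one nontrivial charge sector) proves that the decomposition genuinely contains non-frame-potential moments.

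The bookkeeping of multinomial weights is routine; the main conceptual subtlety is the claim that $M^{(a,b)}_{\mathcal{E}_p}$ with $a\ne b$ cannot be relabeled as a frame potential. This is true on the nose for the Haar ensemble (the integral simply vanishes, since unequal powers of $U_p,U_p^{\dagger}$ kill the Haar average), but for general $\mathcal{E}_p$ these moments are nonzero and independent of the frame-potential family $\{F^{(m)}_{\mathcal{E}_p}\}_m$, so the honest statement is that the decomposition must in general be enlarged from frame potentials to the broader class of mixed moments $M^{(a,b)}_{\mathcal{E}_p}$. I expect that spelling out this inequivalence cleanly, rather than the algebraic expansion, will be the most delicate step.
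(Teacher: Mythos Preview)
Your proposal is correct and follows essentially the same route as the paper: expand $|\mathrm{Tr}(UV^{\dagger})|^{2k}$ via the block decomposition $\mathrm{Tr}(UV^{\dagger})=\sum_p\mathrm{Tr}(U_pV_p^{\dagger})$, then use charge decoupling to factor the ensemble average into a product over sectors, and finally identify the ``diagonal'' profiles $a_p=b_p$ as the terms that reduce to genuine frame potentials $F^{(a_p)}_{\mathcal{E}_p}$. The paper does exactly this, but proceeds by explicit low-$k$ examples ($k=1$, then the general $k$ with the $F^{(1)},F^{(2)},F^{(3)}$ subset written out) rather than introducing your $M^{(a,b)}_{\mathcal{E}_p}$ notation; your formulation is a bit more systematic, and your observation that the $a\ne b$ moments vanish for Haar but not in general is sharper than the paper's one-line remark that those terms ``cannot simply be written as frame potentials.''
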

\begin{proof}
We consider the first frame potential to start, we have
\begin{align}
&F_{\mathcal E}^{(1)} = \int {dUdV\left( {\sum\limits_{p \ne q} {{\rm{Tr}}\left( {{U_p}V_p^\dag } \right){\rm{Tr}}\left( {{V_q}U_q^\dag } \right)}  + \sum\limits_p {{\rm{Tr}}\left( {{U_p}V_p^\dag } \right){\rm{Tr}}\left( {{V_p}U_p^\dag } \right)} } \right)} \nonumber\\
&= \int {dUdV\sum\limits_{p \ne q} {{\rm{Tr}}\left( {{U_p}V_p^\dag } \right){\rm{Tr}}\left( {{V_q}U_q^\dag } \right)} }  + \sum\limits_p {\int {dUdV} {\rm{Tr}}\left( {{U_p}V_p^\dag } \right){\rm{Tr}}\left( {{V_p}U_p^\dag } \right)} \nonumber\\
&= \sum\limits_{p \ne q} {\int {d{U_p}d{V_p}{\rm{Tr}}\left( {{U_p}V_p^\dag } \right)} \int {d{U_q}d{V_q}{\rm{Tr}}\left( {{V_q}U_q^\dag } \right)} }  + \sum\limits_p {F_{\mathcal{E}_p}^{(1)}}~.
\end{align}
In general,
\begin{align}
&F_{\mathcal E}^{(k)}= \int {dUdV{{\left( {\sum\limits_{p,q} {{\rm{Tr}}\left( {{U_p}V_p^\dag } \right){\rm{Tr}}\left( {{V_q}U_q^\dag } \right)} } \right)}^k}}\nonumber\\
&= \int {dUdV\sum\limits_{p,q} {{\rm{Tr}}\left( {{U_{{p_1}}}V_{{p_1}}^\dag } \right){\rm{Tr}}\left( {{V_{{q_1}}}U_{{q_1}}^\dag } \right){\rm{Tr}}\left( {{U_{{p_2}}}V_{{p_2}}^\dag } \right){\rm{Tr}}\left( {{V_{{q_2}}}U_{{q_2}}^\dag } \right) \ldots {\rm{Tr}}\left( {{U_{{p_k}}}V_{{p_k}}^\dag } \right){\rm{Tr}}\left( {{V_{{q_k}}}U_{{q_k}}^\dag } \right)} }~.
\end{align}
There are many possible terms in those constructions. We could extract terms that could be written as frame potential for charge sectors, where each $UV$ terms are identified with $VU$. They look like
\begin{align}\label{sub}
{F^{(k)}} \supset \sum_p \#\int {dUdV {{\rm{Tr}}\left( {{U_{{p_1}}}V_{{p_1}}^\dag } \right){\rm{Tr}}\left( {{V_{{p_1}}}U_{{p_1}}^\dag } \right) \ldots {\rm{Tr}}\left( {{U_{{p_k}}}V_{{p_k}}^\dag } \right){\rm{Tr}}\left( {{V_{{p_k}}}U_{{p_k}}^\dag } \right)} }~.
\end{align}
Then the result is divided by partitions\footnote{Note that the expression for the spectral form factor also has a similar partitioning
\begin{align}
&R_{2k}^{\mathcal{E}} = \int {dU{{\left( {\sum\limits_{p,q} {{\rm{Tr}}\left( {{U_p}} \right){\rm{Tr}}\left( {U_q^\dag } \right)} } \right)}^k}} \nonumber\\
&= \int {dU\sum\limits_{p,q} {{\rm{Tr}}\left( {{U_{{p_1}}}} \right){\rm{Tr}}\left( {U_{{q_1}}^\dag } \right){\rm{Tr}}\left( {{U_{{p_2}}}} \right){\rm{Tr}}\left( {U_{{q_2}}^\dag } \right) \ldots {\rm{Tr}}\left( {{U_{{p_k}}}} \right){\rm{Tr}}\left( {U_{{q_k}}^\dag } \right)} }~. 
\end{align}
}, for instance, we have
\begin{align}
&F_{\cal E}^{(1)} \supset \sum\limits_p {F_{{{\cal E}_p}}^{(1)}}~,\nonumber\\
&F_{\cal E}^{(2)} \supset {\sum\limits_p {F_{{{\cal E}_p}}^{(2)}}  + 2\sum\limits_{p \ne q} {F_{{{\cal E}_p}}^{(1)}F_{{{\cal E}_q}}^{(1)}} }~,\nonumber\\
&F_{\cal E}^{(3)} \supset  {\sum\limits_p {F_{{{\cal E}_p}}^{(3)}}  + 9\sum\limits_{p \ne q} {F_{{{\cal E}_p}}^{(2)}F_{{{\cal E}_q}}^{(1)}}  + 6\sum\limits_{p \ne q \ne r} {F_{{{\cal E}_p}}^{(1)}F_{{{\cal E}_q}}^{(1)}F_{{{\cal E}_r}}^{(1)}} }~.\nonumber\\
&\cdots
\end{align}
For other terms, there is no naive conjugation that maintaining positivity in a single term, and cannot simply be written as frame potentials in charge sectors. 
\end{proof}
\subsubsection{Haar randomness}
For the U(1)-symmetric Haar system, frame potentials could simply be reduced to form factors.
\begin{thm}
\begin{align}
F_{{ \oplus _p}{{\cal H}_p}}^{(k)}=R_k^{{{ \oplus _p}{{\cal H}_p}}}~.
\end{align}
\end{thm}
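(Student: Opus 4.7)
The plan is to mirror the standard proof of $F_\mathcal{H}^{(k)} = R_{2k}^\mathcal{H}$, using the fact that the U(1)-symmetric Haar ensemble $\oplus_p \mathcal{H}_p$ inherits a block-diagonal version of Haar left/right invariance. The key observation is that if both $U = \oplus_p U_p$ and $V = \oplus_p V_p$ are drawn from $\oplus_p \mathcal{H}_p$, then the product $UV^\dagger = \oplus_p U_p V_p^\dagger$ is again block-diagonal, and its trace factorizes as $\mathrm{Tr}(UV^\dagger) = \sum_p \mathrm{Tr}(U_p V_p^\dagger)$, so the whole integrand $|\mathrm{Tr}(UV^\dagger)|^{2k}$ depends on $U$ and $V$ only through the block-diagonal data.

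Next I would perform the change of variable $U_p \to U_p V_p$ independently in each charge sector. Since each $\mathcal{H}_p$ is itself Haar on its sector, right-invariance gives $dU_p = d(U_p V_p)$, and the product measure $dU = \prod_p dU_p$ is therefore invariant under $U \to UV$ for any fixed $V \in \oplus_p \mathcal{H}_p$. Under this substitution, $U_p V_p^\dagger$ becomes $(U_p V_p) V_p^\dagger = U_p$, so the integrand collapses to $|\mathrm{Tr}(U)|^{2k} = \bigl|\sum_p \mathrm{Tr}(U_p)\bigr|^{2k}$, which is independent of $V$.

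Finally, the integral over $V$ is just $\int dV = 1$, so
\begin{align}
F_{\oplus_p \mathcal{H}_p}^{(k)} = \int dU\, dV\, \bigl|\mathrm{Tr}(UV^\dagger)\bigr|^{2k} = \int dU\, \bigl|\mathrm{Tr}(U)\bigr|^{2k} = R_{2k}^{\oplus_p \mathcal{H}_p},
\end{align}
which is the claim. The only thing to verify carefully is the block-wise invariance of the direct-sum measure, but since $\oplus_p \mathcal{H}_p$ is defined as the independent product of Haar measures on each sector, this is immediate from left/right invariance of each $\mathcal{H}_p$ applied sector-by-sector. There is no real obstacle: the statement is essentially a direct port of the single-block theorem $F_{\mathcal{H}}^{(k)} = R_{2k}^{\mathcal{H}}$, with the group $U(L)$ replaced by the block-diagonal subgroup $\prod_p U(d_p)$, and the argument goes through verbatim.
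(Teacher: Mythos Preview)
Your proof is correct and takes a genuinely different route from the paper. You argue directly from the group structure: since $\oplus_p \mathcal{H}_p$ is the Haar measure on the compact group $\prod_p U(d_p)$, the block-wise right-invariance lets you absorb $V$ into $U$ via the substitution $U_p \mapsto U_p V_p$, collapsing the double integral to a single one and yielding $R_{2k}$ immediately. The paper instead expands both $F^{(k)}$ and $R_{2k}$ combinatorially over charge sectors (using the general decomposition proved just before), observes that for the Haar blocks the ``extra'' terms outside the pairing structure vanish, and then matches the surviving partition sums term by term using $F_{\mathcal{H}_p}^{(k)} = R_{2k}^{\mathcal{H}_p}$ in each sector. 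Your argument is shorter, more conceptual, and generalizes verbatim to any compact subgroup equipped with its Haar measure; the paper's approach, while heavier, ties the result into the explicit charge-sector machinery developed for non-Haar ensembles, which is useful when one wants to track how individual sectors contribute or to compare with systems that are only approximately Haar in each block.
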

\begin{proof}
Firstly we use
\begin{align}
R_{2k}^\mathcal{H} = F_\mathcal{H}^{(k)}~.
\end{align}
Secondly, for the Haar system, all terms that are outside of formula \ref{sub} vanishes. Namely, we have
\begin{align}
F_{{ \oplus _p}{{\cal H}_p}}^{(k)} = \sum\limits_p \# \int {dUdV {{\rm{Tr}}\left( {{U_{{p_1}}}V_{{p_1}}^\dag } \right){\rm{Tr}}\left( {{V_{{p_1}}}U_{{p_1}}^\dag } \right) \ldots {\rm{Tr}}\left( {{U_{{p_k}}}V_{{p_k}}^\dag } \right){\rm{Tr}}\left( {{V_{{p_k}}}U_{{p_k}}^\dag } \right)} } ~.
\end{align}
Combining with the previous analysis, we observe that
\begin{align}
F_{{ \oplus _p}{{\cal H}_p}}^{(k)}=R_k^{{{ \oplus _p}{{\cal H}_p}}}~.
\end{align}
Thus we could directly use the previous form factor results to predict frame potentials in the Haar system. 
\end{proof}
\subsubsection{$k$-invariant subspace}
Now we consider the case for $k$-invariant subspace. For terms that are inside \ref{sub}, the problem will be reduced to simplifying frame potentials in the single charge sector, which has already been computed in \cite{Cotler:2017jue}. For other terms, we give $F^{(1)}$ here as an example.
\begin{exmp}
For $F^{(1)}$ we know that
\begin{align}
F_{\cal E}^{(1)} = \sum\limits_{p \ne q} {\int {d{U_p}d{V_p}{\rm{Tr}}\left( {{U_p}V_p^\dag } \right)} \int {d{U_q}d{V_q}{\rm{Tr}}\left( {{V_q}U_q^\dag } \right)} }  + \sum\limits_p {F_{{{\cal E}_p}}^{(1)}} ~.
\end{align}
We have \cite{Cotler:2017jue}:
\begin{align}
F_{{{\mathcal E}_p}}^{(1)} = \left\{ {\begin{array}{*{20}{l}}
{1:{d_p} = 1}\\
{\frac{{R_2^{2,{{\cal E}_p}} + d_p^2 - 2R_2^{{{\cal E}_p}}}}{{d_p^2 - 1}}:{d_p} > 1}
\end{array}} \right.~.
\end{align}
The remaining terms are given by the Haar invariance,
\begin{align}
\int {d{U_p}d{V_p}{\rm{Tr}}\left( {{U_p}V_p^\dag } \right)}  = \frac{{\int {d{U_p}{\rm{Tr}}\left( {{U_p}} \right)} \int {d{V_p}} {\rm{Tr}}\left( {V_p^\dag } \right)}}{{{d_p}}} = \frac{{{{\left| {R_1^{{{\cal E}_p}}} \right|}^2}}}{{{d_p}}}~.
\end{align}
Similarly,
\begin{align}
\int {d{U_q}d{V_q}{\rm{Tr}}\left( {{V_q}U_q^\dag } \right)}  = \frac{{{{\left| {R_1^{\mathcal{E}_q}} \right|}^2}}}{{{d_q}}}~.
\end{align}
So we obtain
\begin{align}
F_{\cal E}^{(1)} = \sum\nolimits_q {\left\{ {\begin{array}{*{20}{l}}
{1:{d_q} = 1}\\
{\frac{{R_2^{2,{{\cal E}_p}} + d_q^2 - 2R_2^{{{\cal E}_p}}}}{{d_q^2 - 1}}:{d_q} > 1}
\end{array}} \right.}  + \sum\limits_{p \ne q} {\frac{1}{{{d_p}{d_q}}}{{\left| {R_1^{{{\cal E}_p}}} \right|}^2}{{\left| {R_1^{{{\cal E}_q}}} \right|}^2}}~.
\end{align}
\end{exmp}
We end this discussion by introducing the following simple observation.
\begin{thm}
If we assume that in the expansion of frame potential into charge sectors, the contribution from the highest form factor dominates, we have
\begin{align}
F_{\cal E}^{(k)} \approx \sum\limits_p {F_{{{\cal E}_p}}^{(k)}}  \approx \sum\limits_p {\frac{{R_{2k}^{2,{{\cal E}_p}}}}{{d_p^{2k}}}}~.
\end{align}
\end{thm}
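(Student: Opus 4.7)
The plan is to start from the partition expansion of $F^{(k)}_{\mathcal{E}}$ already written out in the proof of the preceding general-result theorem, and then throw away every term that, under the stated dominance hypothesis, is subleading relative to a single insertion of the top form factor $R_{2k}^{\mathcal{E}_p}$. First I would recall that $F^{(k)}_{\mathcal{E}}$ splits into a sum over charge assignments $(p_1,\ldots,p_k;q_1,\ldots,q_k)$ to the $k$ pairs of traces $\operatorname{Tr}(U_{p_i}V_{p_i}^{\dagger})\operatorname{Tr}(V_{q_i}U_{q_i}^{\dagger})$. The assignments that sit inside the collection~\eqref{sub}, together with the further specialisation that puts all $k$ pairs in a single sector $p$, are exactly those reproducing the single-sector frame potential $F^{(k)}_{\mathcal{E}_p}$; coarser partitions of these $k$ pairs over several sectors give products of lower-order frame potentials in those sectors, and the remaining charge assignments outside~\eqref{sub} produce products of lower-order form factors across distinct sectors, as illustrated for $F^{(1)}_{\mathcal{E}}$ above.

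Next I would invoke the dominance assumption to discard everything except the diagonal term. A partition of the $k$ pairs into more than one sector, or an off-diagonal charge assignment, always factorises into objects each of which involves a form factor $R_{2j}^{\mathcal{E}_{p_i}}$ with $j<k$; by hypothesis these are all negligible against a single factor $R_{2k}^{\mathcal{E}_p}$ coming from the fully-in-sector-$p$ assignment. This immediately yields the intermediate claim
\begin{align}
F_{\mathcal{E}}^{(k)} \approx \sum_p F_{\mathcal{E}_p}^{(k)}~.
\end{align}

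Finally I would apply the single-sector expansion of $F^{(k)}_{\mathcal{E}_p}$ from \cite{Cotler:2017jue}, of which the $k=1$ case is the formula displayed in the preceding example. That expansion writes $F^{(k)}_{\mathcal{E}_p}$ as a sum of Weingarten-weighted products of form factors $R_{2j}^{\mathcal{E}_p}$ divided by polynomials in $d_p$; the unique term involving the top form factor twice is $R_{2k}^{2,\mathcal{E}_p}/d_p^{2k}$, with all other terms suppressed either by a lower form factor or by an additional power of $d_p$. Retaining only this term under the same dominance assumption produces the advertised estimate.

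The genuine difficulty is not any single step above but the justification of the dominance assumption itself: one needs to verify in each regime of interest that $R_{2k}^{\mathcal{E}_p}$ really is parametrically larger than all the products of lower form factors and cross-sector terms that appear in the combinatorial expansion, and that the combinatorial multiplicities of those subleading partitions (which grow factorially in $k$, as already visible from the $F^{(3)}_{\mathcal{E}}\supset \cdots$ example) do not overwhelm the nominal $1/d_p$ suppression. This is the main obstacle, and its resolution depends on the concrete chaotic model used in each charge sector.
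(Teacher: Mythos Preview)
Your proposal is correct and follows exactly the line of reasoning the paper has in mind: the paper states this theorem as a ``simple observation'' without an explicit proof, relying implicitly on the partition expansion of the preceding general-result theorem together with the $k$-invariant single-sector approximation from \cite{Cotler:2017jue} that you invoke. Your write-up simply spells out the two truncation steps that the paper leaves to the reader, and your closing caveat about the combinatorial multiplicities competing with the $1/d_p$ suppression is a fair caution that the paper does not raise.
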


\subsection{Decoupling}
Here we restrict our discussion to the decoupling property where the system is U(1)-symmetric Haar. 

We consider the system is factorized by subsystems $A$ and $B$. For a pure state $\rho_0$, we average over some unitary $U\rho_0U^\dagger$, and we consider the partial trace
\begin{align}
{\rho _A} = {\rm{T}}{{\rm{r}}_B}U{\rho _0}{U^\dag }~.
\end{align}
We compare the state $\rho_A$ and maximally mixed state on $A$,
\begin{align}
\Delta {\rho _A} = {\rho _A} - \frac{I}{{{d_A}}}~.
\end{align}
Generically, we have
\begin{align}
{\rm{Tr}}\left( {\Delta \rho _A^2} \right) = {\rm{Tr}}\left( {\rho _A^2} \right) - \frac{1}{{{d_A}}}~.
\end{align}
So the one-norm is bounded by
\begin{align}
\int_{} {dU\left\| {\Delta {\rho _A}} \right\|_1^2}  \le {d_A}\int_{} {dU\left\| {\Delta {\rho _A}} \right\|_2^2}  = \int {dU} {\rm{Tr}}\left( {\rho _A^2} \right) - 1~.
\end{align}
Thus, if we could bound $\int {dU} {\rm{Tr}}\left( {\rho _A^2} \right)$, we could then bound one-norm. 

Now, we prove the following theorem, which will be used for the discussion of the Hayden-Preskill experiment.
\begin{thm} [The U(1)-generalized Page theorem]
For large $d_q$, we have
\begin{align}
\int {dU} {\rm{Tr}}\left( {\rho _A^2} \right) \approx \frac{1}{{{d_q}({d_q} + 1)}}\left( {G({n_A},{n_B},q) + G({n_B},{n_A},q)} \right)~,
\end{align}
where
\begin{align}
G({n_A},{n_B},q) \equiv \sum\limits_{f = \max (0,q - {n_B})}^{\min ({n_A},q)} {d_f^{({n_A})}{{\left( {d_{q - f}^{({n_B})}} \right)}^2}}~.
\end{align}
\end{thm}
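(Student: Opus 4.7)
The plan is to reduce the Haar average to a second-moment calculation on a random unit vector inside the single charge-$q$ subspace. Since $U = \oplus_p U_p$ is block diagonal and $\rho_0 = |\psi_0\rangle\langle\psi_0|$ has definite total charge $q$ (as is natural in the Hayden-Preskill setup), the evolved state $U|\psi_0\rangle = U_q|\psi_0\rangle$ is Haar uniform on the unit sphere of $\mathcal{H}_q$, with all other sectors decoupled. Thus it suffices to compute the expectation of $\mathrm{Tr}(\rho_A^2)$ for a uniform random pure state in this fixed sector of dimension $d_q$.

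First, I would write the charge-compatible decomposition $\mathcal{H}_q = \oplus_f \mathcal{H}_{A,f}\otimes\mathcal{H}_{B,q-f}$, where $f$ runs over $\max(0,q-n_B)\le f\le\min(n_A,q)$ with block dimensions $d_f^{(n_A)}$ and $d_{q-f}^{(n_B)}$ (consistent with $d_q = \sum_f d_f^{(n_A)}d_{q-f}^{(n_B)}$). Expanding the random state as $|\psi\rangle = \sum_{f,i,j} c_{ij}^{(f)} |f,i\rangle_A|q-f,j\rangle_B$ and tracing out $B$ forces matching $B$-charges and matching $j$ labels, so $\rho_A$ is block diagonal in the $A$-charge $f$. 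A short calculation then gives
\begin{align*}
\mathrm{Tr}(\rho_A^2) = \sum_{f,i,i',j,j'} c_{ij}^{(f)}\, c_{i'j}^{(f)*}\, c_{i'j'}^{(f)}\, c_{ij'}^{(f)*}~,
\end{align*}
with all four amplitudes necessarily sharing a common $f$.

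Next, I would apply the second-moment formula for a Haar random unit vector in dimension $d_q$, a direct consequence of the order-two Haar identity stated in the Preliminaries:
\begin{align*}
\int dU\, c_\alpha c_\beta^* c_\gamma c_\delta^* = \frac{\delta_{\alpha\beta}\delta_{\gamma\delta} + \delta_{\alpha\delta}\delta_{\gamma\beta}}{d_q(d_q+1)}~.
\end{align*}
Reading off multi-indices $\alpha=(f,i,j)$, $\beta=(f,i',j)$, $\gamma=(f,i',j')$, $\delta=(f,i,j')$, the two Wick pairings collapse to $\delta_{ii'}$ and $\delta_{jj'}$ respectively. Summing $\delta_{ii'}$ over the free labels gives $\sum_f d_f^{(n_A)}(d_{q-f}^{(n_B)})^2 = G(n_A,n_B,q)$, while summing $\delta_{jj'}$ gives $\sum_f (d_f^{(n_A)})^2 d_{q-f}^{(n_B)}$; the relabeling $f\to q-f$, together with the corresponding endpoint swap $\max(0,q-n_B)\leftrightarrow\min(n_B,q)$ and $\min(n_A,q)\leftrightarrow\max(0,q-n_A)$, converts this into $G(n_B,n_A,q)$. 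Collecting the overall prefactor $1/(d_q(d_q+1))$ produces the stated formula.

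The main obstacle is really just bookkeeping: one must verify that $\rho_A$ genuinely has no off-diagonal blocks in the $A$-charge label (forcing all four $c$-amplitudes into a common $f$) and that the summation endpoints transform correctly under the $f\to q-f$ relabeling. Once these two structural facts are in hand the derivation is in fact exact, and the ``for large $d_q$'' qualifier in the statement merely reflects the irrelevance of the $+1$ in the denominator at large sector size.
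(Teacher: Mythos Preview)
Your argument is correct and, in fact, sharper than the paper's. The paper works directly with the order-two Weingarten formula for the block $U_q$: it writes $\int dU\,\mathrm{Tr}(\rho_A^2)$ as a four-$U$ contraction with the fixed-charge $\rho_0$ inserted, applies the full $\frac{1}{d_q^2-1}(\cdots)-\frac{1}{d_q(d_q^2-1)}(\cdots)$ identity, and then \emph{drops} the swap terms as subleading at large $d_q$ before doing the charge-index combinatorics that yields the $G$ sums. You instead observe that $U_q|\psi_0\rangle$ is a Haar random unit vector in $\mathcal{H}_q$, so the simpler and exact pure-state second-moment formula $\mathbb{E}[c_\alpha c_\beta^* c_\gamma c_\delta^*]=(\delta_{\alpha\beta}\delta_{\gamma\delta}+\delta_{\alpha\delta}\delta_{\gamma\beta})/(d_q(d_q+1))$ applies directly. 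This bypasses the Weingarten bookkeeping entirely and explains why the prefactor $1/(d_q(d_q+1))$ appears: in your derivation it is exact, whereas in the paper it emerges only after discarding the $-1/(d_q(d_q^2-1))$ swap contributions and identifying $1/(d_q^2-1)\approx 1/(d_q(d_q+1))$. Your remark that the ``large $d_q$'' qualifier is inessential is therefore justified; the approximation in the theorem statement is an artifact of the paper's method, not of the quantity itself. The only assumption you are tacitly importing---that $\rho_0$ has definite total charge $q$---is also made explicitly in the paper's proof.
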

\begin{proof}
We firstly try to denote the expression in the following form,
\begin{align}
\int {dU} {\rm{Tr}}\left( {\rho _A^2} \right) = \left( {\int {dU} U_{{a_2}{b_2}}^{{a_1}{b_1}}U_{{a_4}{b_1}}^{\dag ,{a_3}{b_3}}U_{{{\tilde a}_2}{{\tilde b}_2}}^{{a_4}{{\tilde b}_1}}U_{{a_1}{{\tilde b}_1}}^{\dag ,{{\tilde a}_3}{{\tilde b}_3}}} \right)({\rho _0})_{{a_3}{b_3}}^{{a_2}{b_2}}({\rho _0})_{{{\tilde a}_3}{{\tilde b}_3}}^{{{\tilde a}_2}{{\tilde b}_2}}~.
\end{align}
Here the pair $ab$ means the combined basis in the subsystem $A$ and $B$. In order to proceed with the computation, we introduce further notations. We write the indices $a=(q_a,j_a)$, where $q_a$ specifies the charge sector while $j_a$ specifies the indices with fixed charge sector $q_a$. So we have
\begin{align}
(a,b) = ({q_a} + {q_b},({j_a},{j_b}))~.
\end{align}
The above expression looks very complicated. Thus here we only discuss a simpler situation, where we assume that the original state has fixed charge for subsystems $A$ and $B$, $q_A$ and $q_B$. Right now, $(a_2,b_2)$, $(a_3,b_3)$, $(\tilde{a}_2,\tilde{b}_2)$, $(\tilde{a}_3,\tilde{b}_3)$ are in the same charge sector $q=q_A+q_B$, since $U$ is given by a direct sum of different charge sectors, $(a_1,b_1)$, $(a_4,b_1)$, $(a_4,\tilde{b}_1)$ and $({a}_1,\tilde{b}_1)$ are still in the charge sector $q$, so we are free to use the Haar random formula in the charge sector $q$. For simplicity, we also assume that $n_A\le n_B$.

Using the Haar randomness formula, we have four terms. Two of them are contractions between $U$ and $U^\dagger$, while the other two of them are swaps. We only write the derivation in detail for the first term as an example, where $U$s are contracting with the nearest $U^\dagger$s. The rest of them are easy to generalize.

For the first term we have
\begin{align}
&\left( {\int {dU} U_{{a_2}{b_2}}^{{a_1}{b_1}}U_{{a_4}{b_1}}^{\dag ,{a_3}{b_3}}U_{{{\tilde a}_2}{{\tilde b}_2}}^{{a_4}{{\tilde b}_1}}U_{{a_1}{{\tilde b}_1}}^{\dag ,{{\tilde a}_3}{{\tilde b}_3}}} \right)({\rho _0})_{{a_3}{b_3}}^{{a_2}{b_2}}({\rho _0})_{{{\tilde a}_3}{{\tilde b}_3}}^{{{\tilde a}_2}{{\tilde b}_2}}\nonumber\\
&\supset \frac{1}{{d_q^2 - 1}}\delta _{{a_4}{b_1}}^{{a_1}{b_1}}\delta _{{a_2}{b_2}}^{{a_3}{b_3}}\delta _{{a_1}{{\tilde b}_1}}^{{a_4}{{\tilde b}_1}}\delta _{{{\tilde a}_2}{{\tilde b}_2}}^{{{\tilde a}_3}{{\tilde b}_3}}({\rho _0})_{{a_3}{b_3}}^{{a_2}{b_2}}({\rho _0})_{{{\tilde a}_3}{{\tilde b}_3}}^{{{\tilde a}_2}{{\tilde b}_2}}\nonumber\\
&= \frac{1}{{d_q^2 - 1}}\delta _{{a_4}{b_1}}^{{a_1}{b_1}}\delta _{{a_1}{{\tilde b}_1}}^{{a_4}{{\tilde b}_1}}({\rho _0})_{{a_2}{b_2}}^{{a_2}{b_2}}({\rho _0})_{{{\tilde a}_2}{{\tilde b}_2}}^{{{\tilde a}_2}{{\tilde b}_2}}~.
\end{align}
For the $\rho$ part, we know that scanning over indices separately in $A$ and $B$ is equivalently scanning the indices for the whole system, and sum over them, we just get 1 because it is the trace. The remaining part is equal to 
\begin{align}
&\frac{1}{{d_q^2 - 1}}\delta _{{a_4}{b_1}}^{{a_1}{b_1}}\delta _{{a_1}{{\tilde b}_1}}^{{a_4}{{\tilde b}_1}}\nonumber\\
&= \frac{1}{{d_q^2 - 1}}\sum\limits_{{b_1}} \begin{array}{l}
{\text{possibility of  }}{b_1}\\
{\text{from charge 0 to }}q
\end{array} \sum\limits_{{{\tilde b}_1}} \begin{array}{l}
{\text{possibility of  }}{{\tilde b}_1}\\
{\text{for the same charge as }}{b_1}
\end{array} \sum\limits_{{a_1}} \begin{array}{l}
{\text{possibility of  }}{a_1}\\
q(a_1)=q - q({b_1})
\end{array}~,
\end{align}
where $q(a)$ will give the number of charge sector for indices $a$, and the result is
\begin{align}
&{\rm{If: }}q \le {n_A} \le {n_B}:\frac{1}{{d_q^2 - 1}}\sum\limits_{q\left( {{b_1}} \right) = 0}^q {d_{q - q\left( {{b_1}} \right)}^{({n_A})}{{\left( {d_{q\left( {{b_1}} \right)}^{({n_B})}} \right)}^2}} {\rm{ }}~,\nonumber\\
&{\rm{If: }}{n_A} \le q \le {n_B}:\frac{1}{{d_q^2 - 1}}\sum\limits_{q\left( {{b_1}} \right) = q - {n_A}}^q {d_{q - q\left( {{b_1}} \right)}^{({n_A})}{{\left( {d_{q\left( {{b_1}} \right)}^{({n_B})}} \right)}^2}} ~,\nonumber\\
&{\rm{If: }}{n_A} \le {n_B} \le q:\frac{1}{{d_q^2 - 1}}\sum\limits_{q\left( {{b_1}} \right) = q - {n_A}}^{{n_B}} {d_{q - q\left( {{b_1}} \right)}^{({n_A})}{{\left( {d_{q\left( {{b_1}} \right)}^{({n_B})}} \right)}^2}} ~.
\end{align}
One could show that for the swap terms, there is nothing but an extra smaller factor $ -\frac{1}{d_q({d_q^2 - 1})}$ instead of $\frac{1}{{d_q^2 - 1}}$. Thus swap terms are less dominated in the case of large $d_q$. Thus we conclude 
\begin{align}
\int {dU} {\rm{Tr}}\left( {\rho _A^2} \right) = \frac{1}{{{d_q}({d_q} + 1)}}\left( {G({n_A},{n_B},q) + G({n_B},{n_A},q)} \right)~,
\end{align}
where
\begin{align}
G({n_A},{n_B},q) \equiv \sum\limits_{f = \max (0,q - {n_B})}^{\min ({n_A},q)} {d_f^{({n_A})}{{\left( {d_{q - f}^{({n_B})}} \right)}^2}} ~.
\end{align}
\end{proof}
A similar expression is derived by \cite{Yoshida:2018ybz} in some cases of the U(1)-symmetric Hayden-Preskill experiment, which we will discuss later.

\section{Example: the complex SYK model}\label{EG}
Now we discuss a standard example, the complex SYK model, a very good candidate for approximate charge decoupling and $k$-invariant subspace. 

The complex SYK model is given by the following Hamiltonian
\begin{align}
H = \sum\limits_{i,j,k,l} {{J_{i,j,k,l}}f_i^\dag f_j^\dag {f_k}{f_l}}~,
\end{align}
where $f$s are Dirac fermions ($f$ and $f^\dagger$ are the annihilation and creation operators respectively). $J$ is given by independent complex Gaussian distribution with constraints:
\begin{align}
&{J_{ijkl}} =  - {J_{jikl}}~,~~~~~~{J_{ijkl}} =  - {J_{ijlk}}~,\nonumber\\
&{J_{ijkl}} = J_{klij}^*~,~~~~~~\left\langle {{{\left| {{J_{ijkl}}} \right|}^2}} \right\rangle  = \frac{{4{J^2}}}{{{N^3}}}~,
\end{align}
and with zero mean. Sometimes we also include a fermion mass term $\sum _i m_f f_i^\dagger f_i$ but here we set the mass $m_f=0$. By construction, in the fermion number zero and one sector of this model, we have zero energy eigenvalues. So the spectrum in those fermionic charge sectors is not random. 

We plot the density of states for some single charge sectors and the whole sector for the $N=14$ complex SYK model in Figure \ref{dssecs} and Figure \ref{dswhole}, respectively, where we shift the energy such that $E=0$ is the ground state in each sector. A clear feature of those plots is the edge near the ground state, where for the single charge sector, we get $\rho(E)\sim E^{1/2}$, and for the whole sector, we get $\rho(E)\sim E$. This is a feature that is pointed out by a series of works \cite{Sachdev:2019bjn,Davison:2016ngz,TP,Liu:2019niv}. The square root edge of $\rho(E)\sim E^{1/2}$ is consistent with the Gaussian random matrix theory and Schwarzian quantum mechanics, while the linear edge $\rho(E)\sim E$ is from an extra contribution of U(1) phase. In each charge sector (except fermionic number zero and one), the energy spectrum distribution, around the low energy, could be described by a Gaussian random matrix theory. From the classification in \cite{You:2016ldz}, the case $N=14$ corresponds to Gaussian random unitary GUE.
\begin{figure}[htbp]
  \centering
  \includegraphics[width=0.6\textwidth]{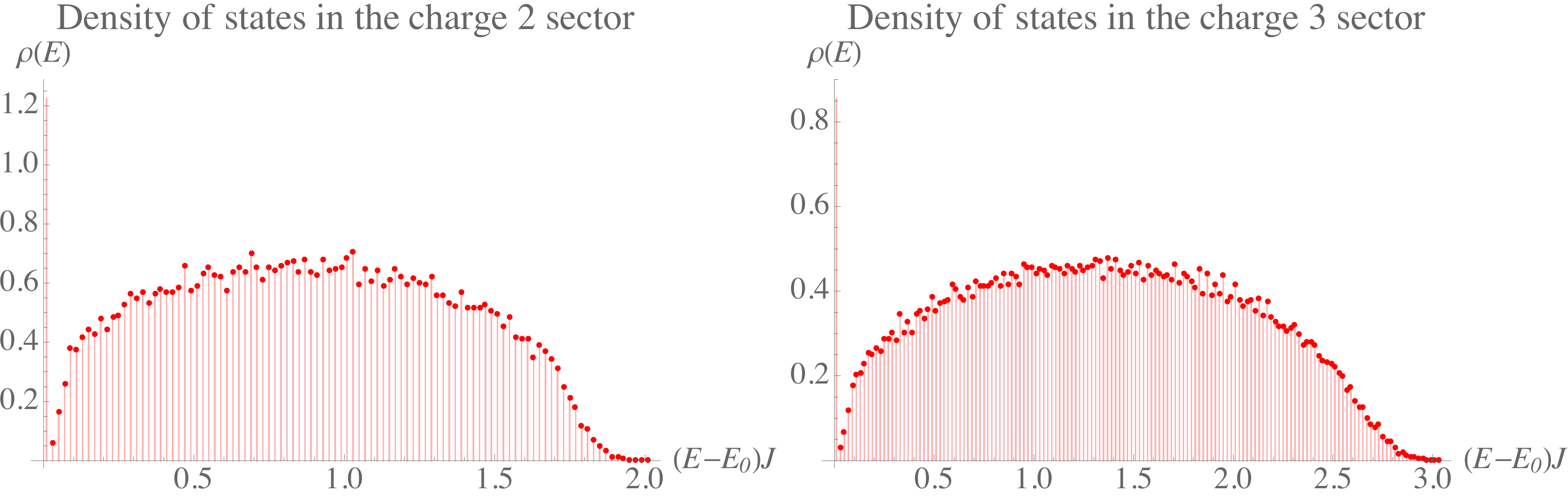}
  \includegraphics[width=1.0\textwidth]{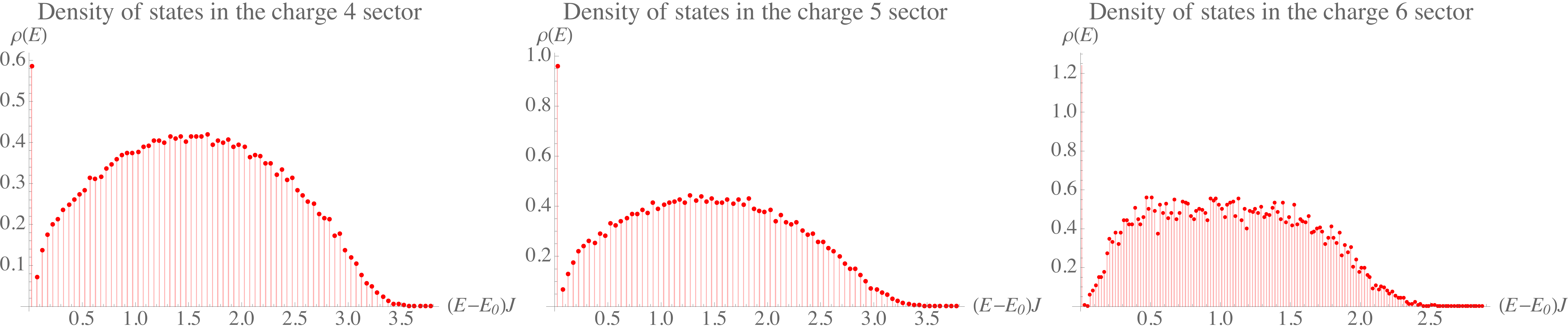}
  \caption{\label{dssecs} Density of states in different sectors for the $N=14$ complex SYK model. We use 2000 random realizations. Those plots clearly show a square root edge near the ground state, which is a standard consequence of random matrix theory and Schwarzian quantum mechanics.}
\end{figure}
\begin{figure}[htbp]
  \centering
  \includegraphics[width=0.5\textwidth]{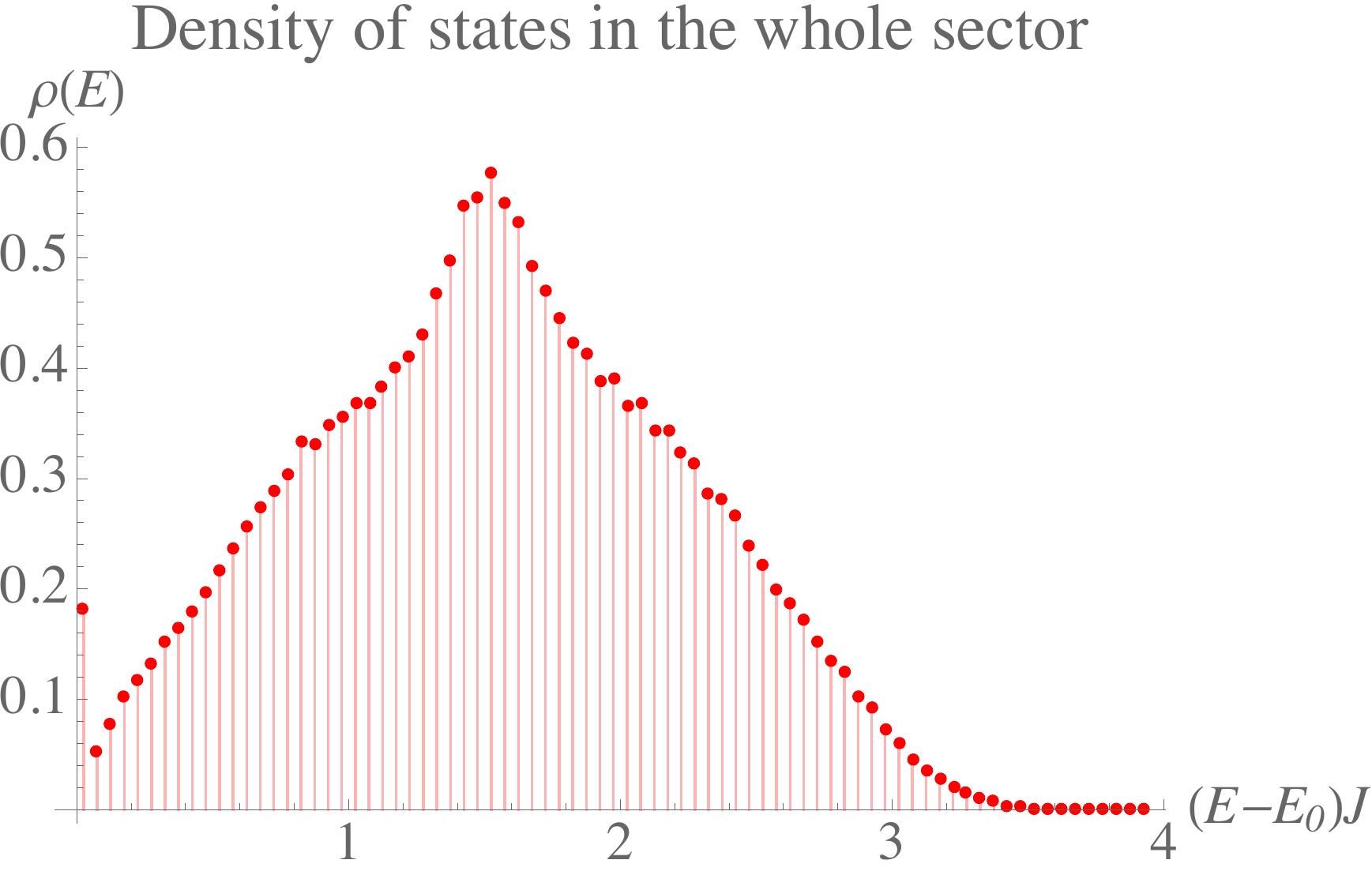}
  \caption{\label{dswhole} Density of states in the whole sector for the $N=14$ complex SYK model. We use 2000 random realizations. This plot clearly shows a linear edge around the ground state. This could be obtained by a usual Schwarzian contribution times an extra U(1) phase \cite{Sachdev:2019bjn,Davison:2016ngz,TP,Liu:2019niv}.}
\end{figure}

One could also study spectral form factor in such a theory in infinite temperature. Generically in the chaotic system, the spectral form factor starts from a slope down to a dip, and then a ramp towards a flat plateau. We plot the spectral form factor in Figure \ref{sffsecs} for each charge sector, in Figure \ref{sffwhole} for the whole charge sector. From these results, one could verify the formula 
\begin{align}\label{R2secs}
R_2^{{ \oplus _p}{{\cal E}_p}} = \sum\limits_p {R_2^{{{\cal E}_p}}({d_p})}  + \sum\limits_{p \ne q} {R_1^{{{\cal E}_p}}\left( {{d_p}} \right)R_1^{{{\cal E}_q}*}\left( {{d_q}} \right)}~,
\end{align}
approximately holds. The relative error is given in Figure \ref{reR2}. The small deviation from this formula is because of the hidden correlation between energy eigenvalues of different charge sectors. Thus, this indicates a small violation of charge decoupling. 

There is another feature that around the time where each sector is approaching the Haar randomness, there is a peak on the relative error. This feature shows that the correlation between charge sectors gets relatively large around the scrambling time.
\begin{figure}[htbp]
  \centering
  \includegraphics[width=0.6\textwidth]{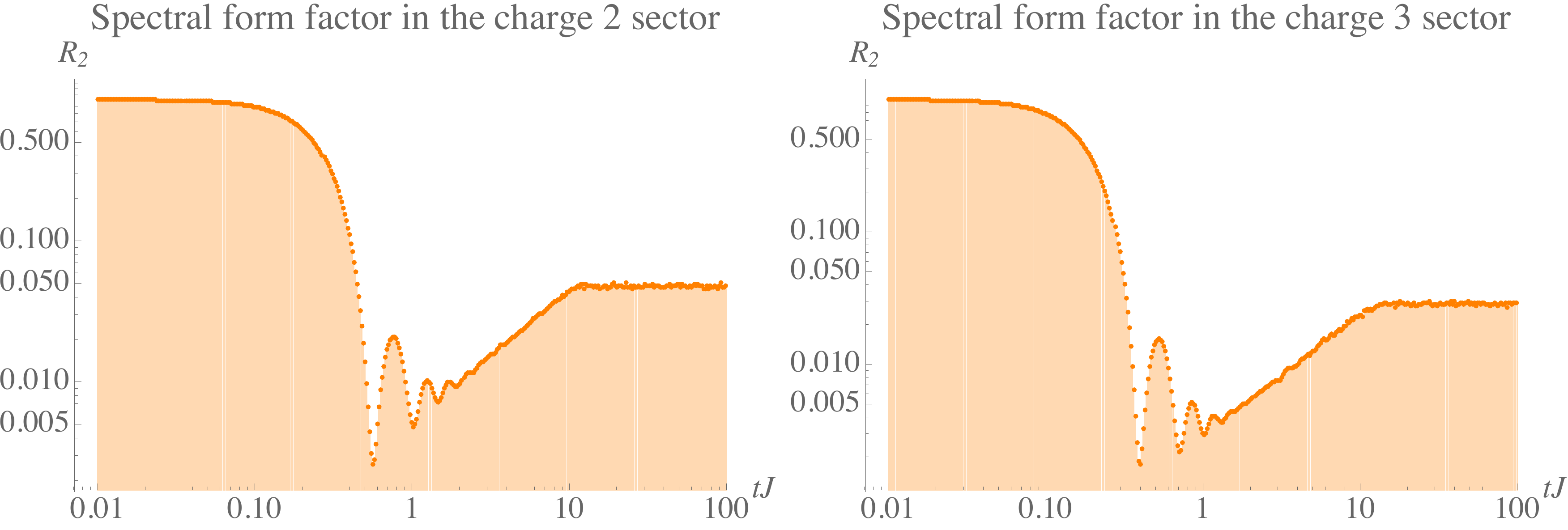}
  \includegraphics[width=1.0\textwidth]{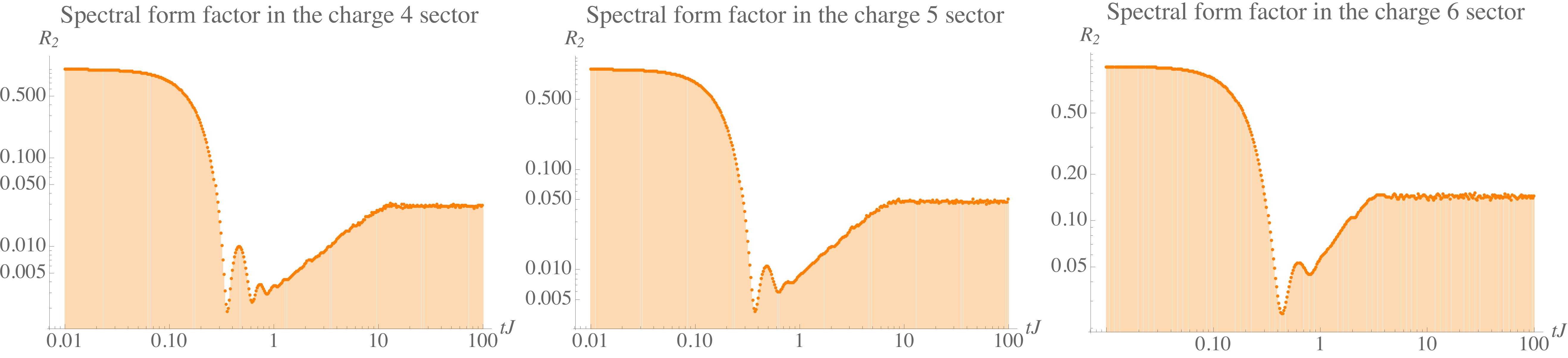}
  \caption{\label{sffsecs} Spectral form factor $R_2(t)$ in different sectors for the $N=14$ complex SYK model. We use 2000 random realizations.}
\end{figure}
\begin{figure}[htbp]
  \centering
  \includegraphics[width=0.6\textwidth]{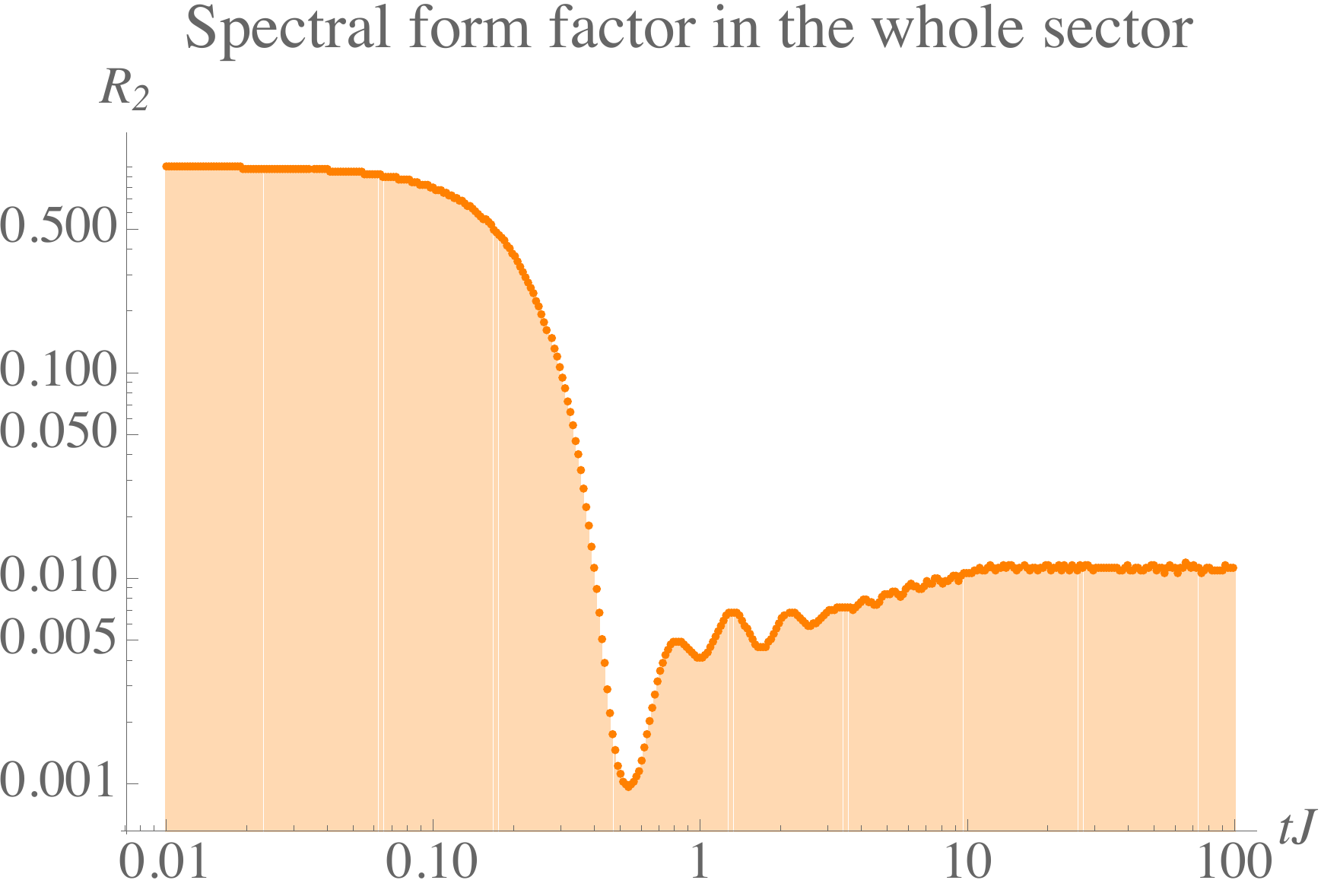}
  \caption{\label{sffwhole} Spectral form factor $R_2(t)$ in the whole sector for the $N=14$ complex SYK model. We use 2000 random realizations.}
\end{figure}
\begin{figure}[htbp]
  \centering
  \includegraphics[width=0.6\textwidth]{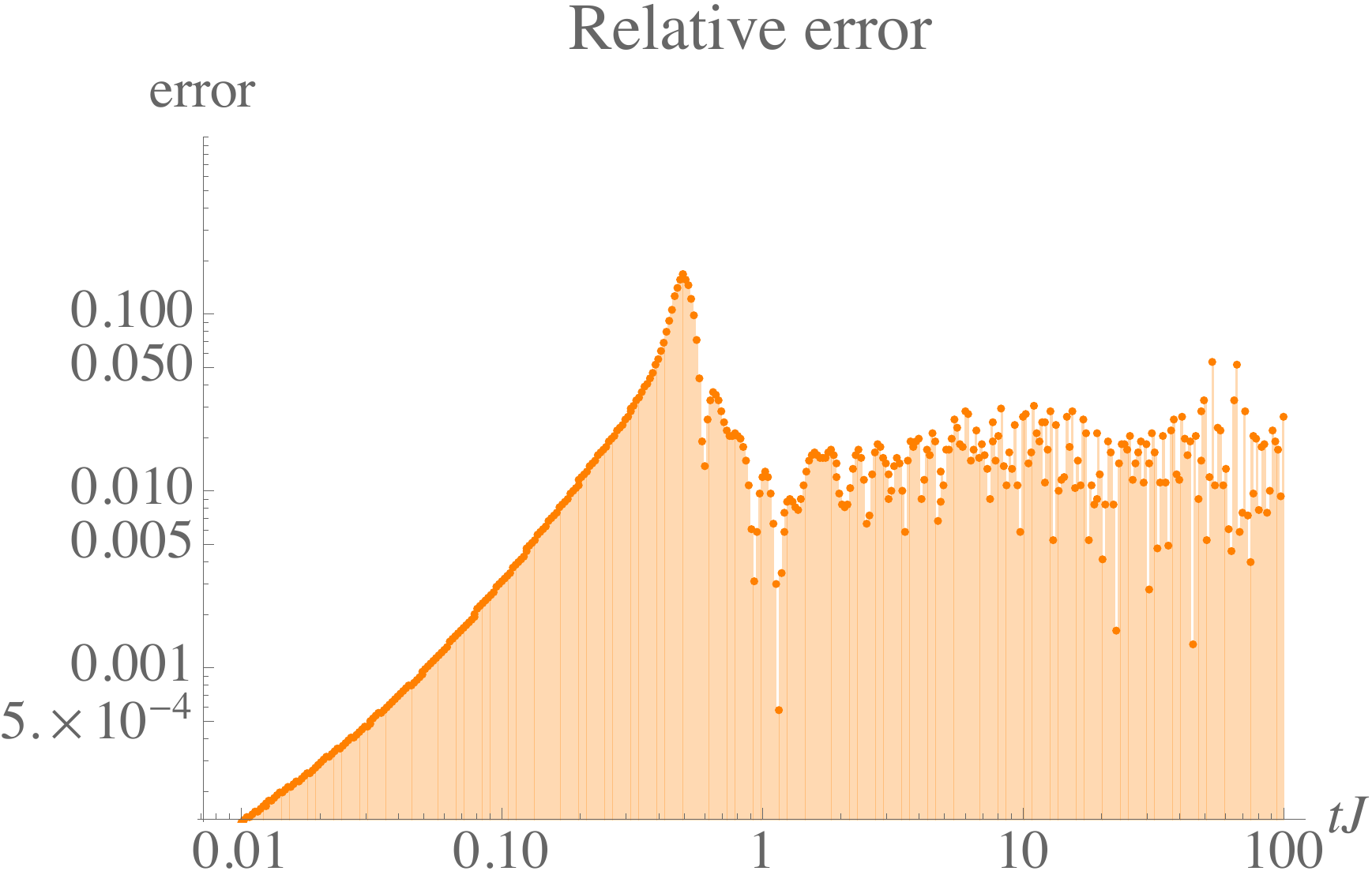}
  \caption{\label{reR2} The relative error comparing two sides of the formula \ref{R2secs} for the $N=14$ complex SYK model. We use 2000 random realizations. The relative error is always smaller than 0.2.}
\end{figure}

Similarly, one can study the frame potential. In the GUE and the Majonara SYK model, we expect frame potential should decay towards a dip, and stay in the dip for a while, and then grows through a ramp, finally towards a plateau. In Figure \ref{fpsecs} and Figure \ref{fpwhole} we give plots for $k=1$ frame potentials as example. Frame potentials in $q=2\sim 6$ sectors are very close to systems with the Haar invariance. (In fact, due to random matrix theory classification mentioned before, each sector is predicted by GUE level statistics, where GUE is an exact Haar-invariant system and its $k$-invariance is zero for all $k$). We compute $1$-invariance in $q=2\sim 6$ sectors in Figure \ref{1invsecs}. 

Moreover, we check the following formula
\begin{align}\label{F1secs}
F_{\cal E}^{(1)} =\sum\limits_p F_{\mathcal{E}_p}^{(1)}  + \sum\limits_{p \ne q} {\frac{1}{{{d_p}{d_q}}}{{\left| {R_1^{{{\cal E}_p}}} \right|}^2}{{\left| {R_1^{{{\cal E}_q}}} \right|}^2}}~,
\end{align}
by plotting the relative error between two sides of the equation in Figure \ref{reF1}. This analysis shows that the error from the prediction \ref{F1secs} is small (smaller than 0.1), while the mismatch is again, due to correlations between energy eigenvalues in different charge sectors, namely a small violation of charge decoupling. 
\begin{figure}[htbp]
  \centering
  \includegraphics[width=0.6\textwidth]{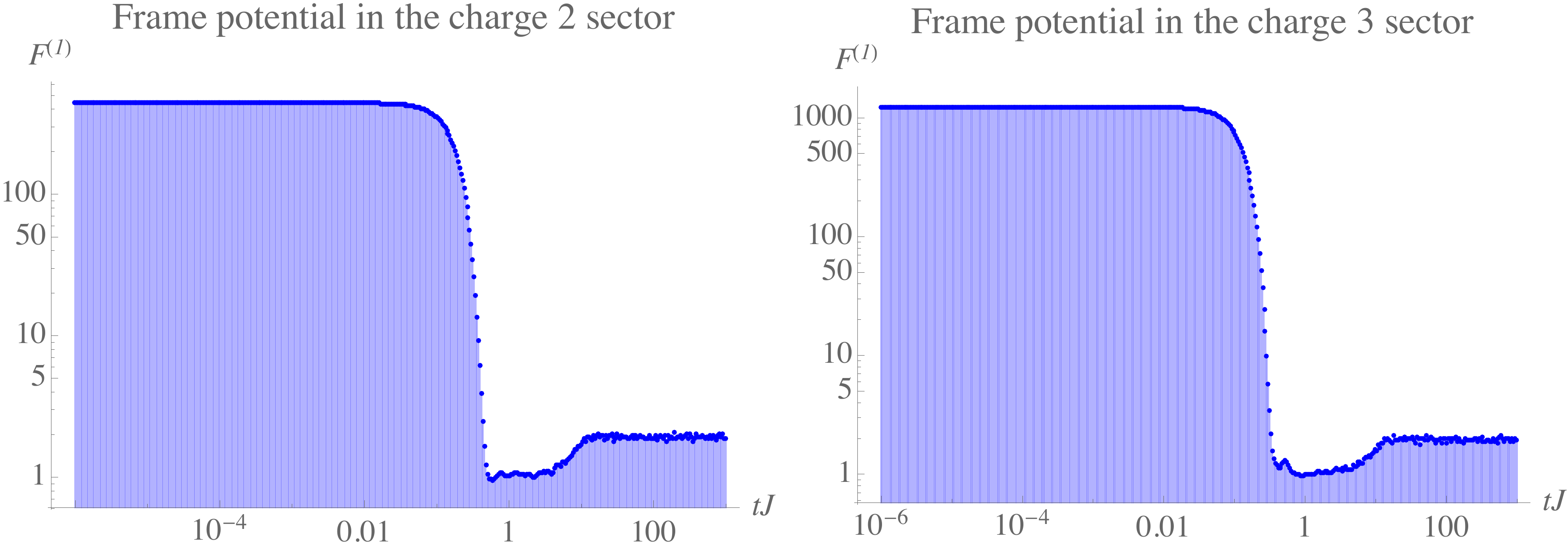}
  \includegraphics[width=1.0\textwidth]{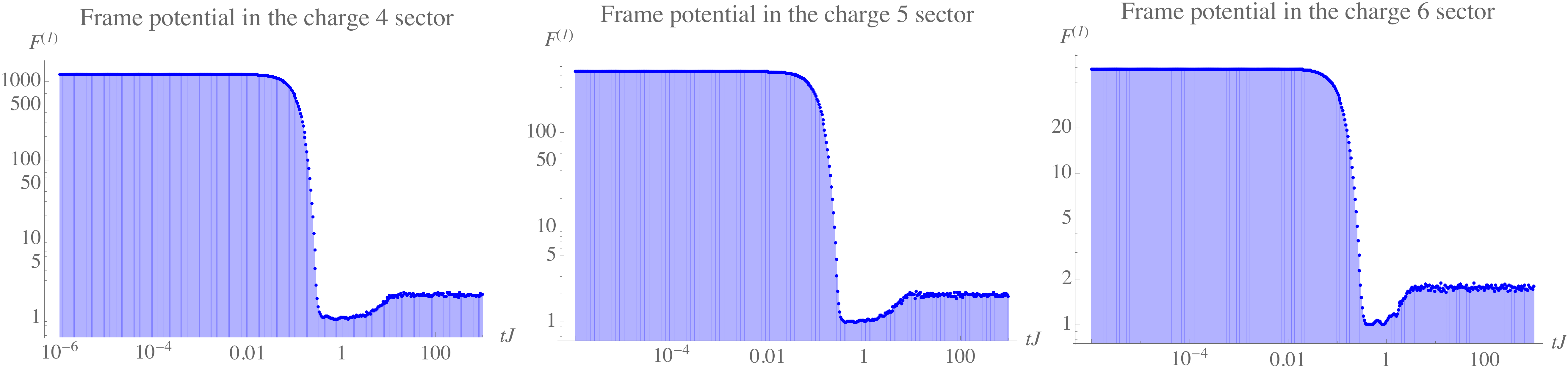}
  \caption{\label{fpsecs} Frame potential $F^{(1)}$ in different sectors for the $N=14$ complex SYK model. We use 1500 random realizations.}
\end{figure}
\begin{figure}[htbp]
  \centering
  \includegraphics[width=0.6\textwidth]{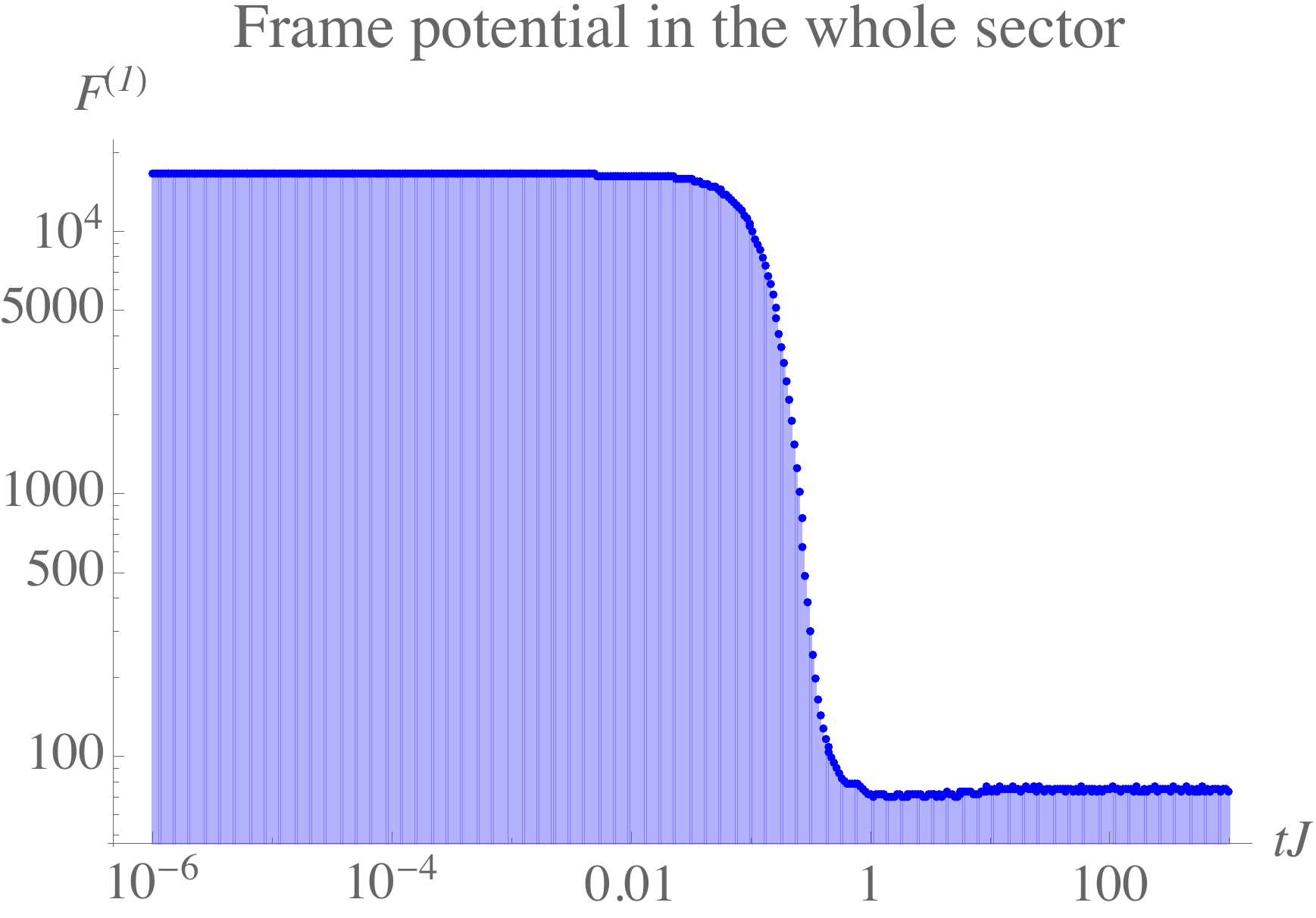}
  \caption{\label{fpwhole} Frame potential $F^{(1)}$ in the whole sector for the $N=14$ complex SYK model. We use 1500 random realizations.}
\end{figure}
\begin{figure}[htbp]
  \centering
  \includegraphics[width=0.6\textwidth]{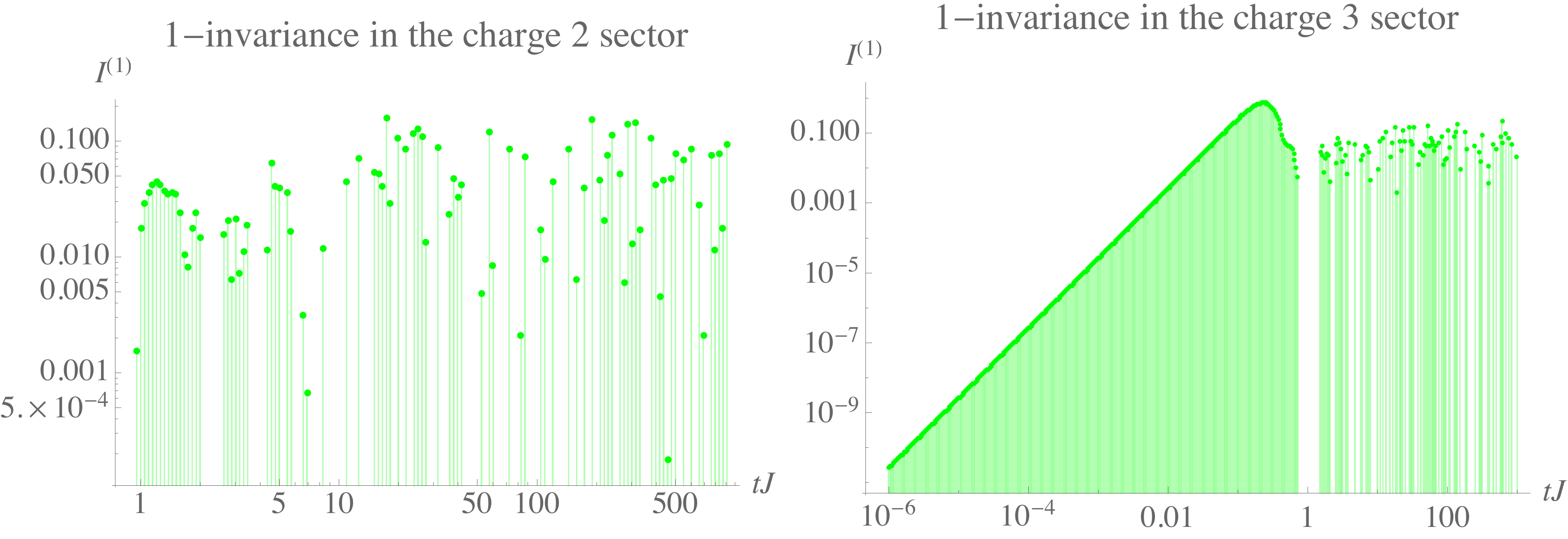}
  \includegraphics[width=1.0\textwidth]{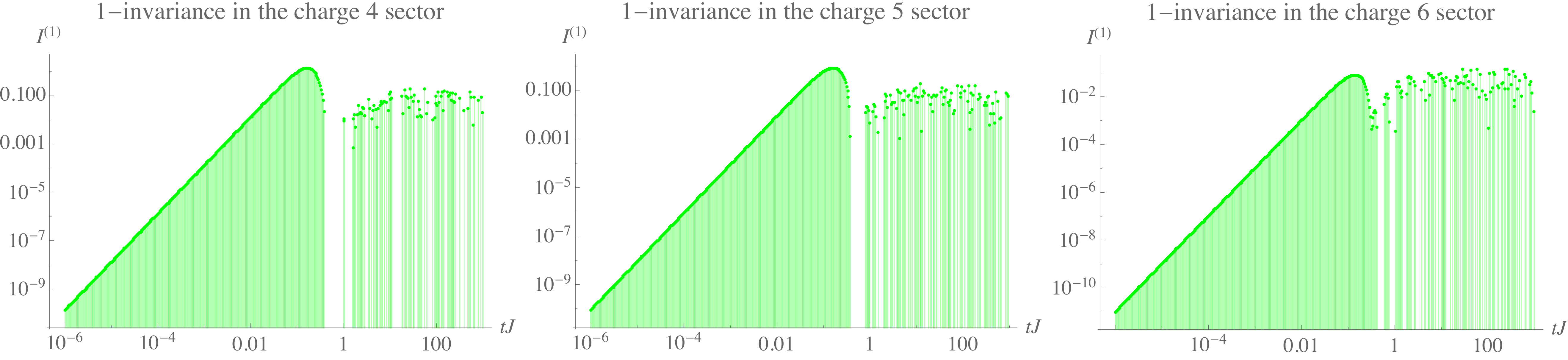}
  \caption{\label{1invsecs} 1-invariance $I^{(1)}$ in different sectors for the $N=14$ complex SYK model. We use 1500 random realizations.}
\end{figure}
\begin{figure}[htbp]
  \centering
  \includegraphics[width=0.6\textwidth]{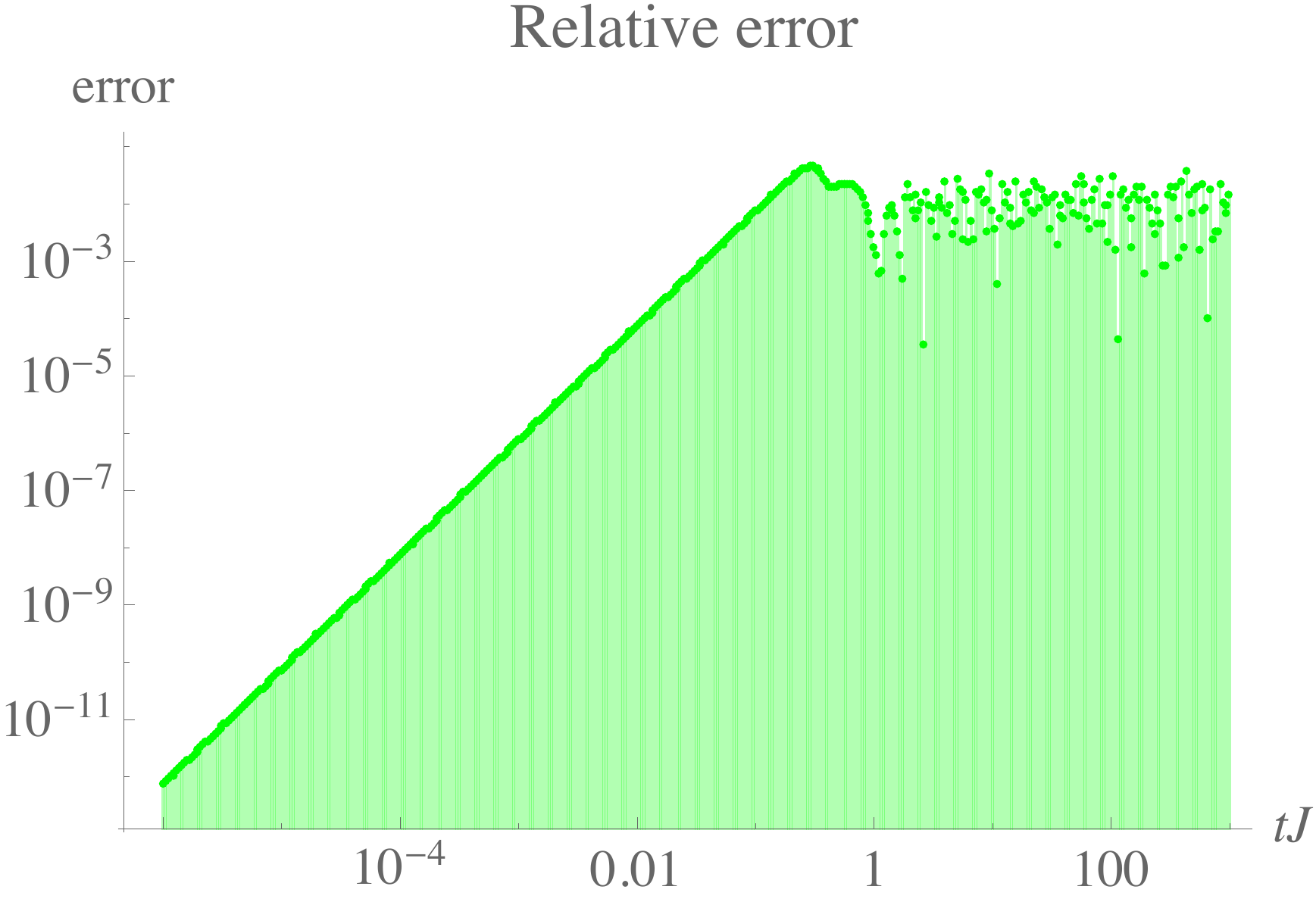}
  \caption{\label{reF1} The relative error comparing two sides of the formula \ref{F1secs} for the $N=14$ complex SYK model. The relative error is always smaller than 0.1.}
\end{figure}

One another feature we could obtain is the time dependence. From \ref{reR2} and \ref{reF1} we see the deviation keeps growing until the system gets sufficiently scrambled, which implies that during scrambling, the correlation between different charge sectors gets amplified. We also observe from \ref{1invsecs} that 1-invariance keeps growing until scrambling, which means that in each sector, during scrambling, we get more deviation from exact invariance under the Haar transformation, although measured in frame potential, the system gets more closed to the Haar ensemble $\mathcal{H}$ following \ref{fpsecs}.

\section{Codes}\label{CO}
This section is written for quantum error correction interpretation of random unitaries. In this part, we will describe the theory of quantum error correction for the Haar randomness and its U(1) extension. A standard application of the above theory is the Hayden-Preskill experiment. For the U(1)-symmetric Haar randomness, we show that it is consistent with the upper bound on the fidelities for covariant codes. Finally, we will address some possible implications for black hole thought experiments.
\subsection{Quantum error correction for the Haar randomness}
The phenomenon of scrambling is widely used in the construction of quantum error correction codes, which could be understood as a connection between quantum error correction and chaos. 

As a warmup example, we consider the following construction. Consider the system $A$ and $B$, we put $A$ with state basis $\ket{\alpha_a}$ with $1\le a \le d_A$, and $B$ with a fixed state $\ket{B}$ ($d_A$ and $d_B$ are dimensions of $A$ and $B$). We define the code subspace,
\begin{align}
{\cal C} = \left\{ \beta_a\equiv {U\left| {{\alpha _a},B} \right\rangle :a = 1,2, \ldots {d_A}} \right\}~.
\end{align}
Here, $U$ is a sample from the Haar random ensemble with dimension $d_A+d_B$. We understand $U$ as the encoding map. Thus, the dimension of the code subspace $\mathcal{C}$ is the same as the dimension of $A$. 

In order to show the ability of quantum error correction of the above code, we try to do some heuristic computations about the Knill-Laflamme condition, and we take the average of it. We consider
\begin{align}
\left\langle {{\beta _a}} \right|O\left| {{\beta _b}} \right\rangle  = \left\langle {{\alpha _a},B} \right|{U^\dag }OU\left| {{\alpha _b},B} \right\rangle~,
\end{align}
where $O$ is a generic operator. We could compute the Haar average
\begin{align}
&\int_{\cal H} {dU} \left\langle {{\beta _a}} \right|O\left| {{\beta _b}} \right\rangle= \int_{\cal H} {dU} \left\langle {{\alpha _a},B} \right|{U^\dag }OU\left| {{\alpha _b},B} \right\rangle  \nonumber\\
&= \int_{\cal H} {dU} U_i^{\dag a}O_j^iU_b^j= \frac{1}{L}\delta _b^a\delta _i^jO_j^i = \delta _b^a\left\langle O \right\rangle ~,
\end{align}
which means that this condition is satisfied for an arbitrary operator. This is impossible in the usual definition of exact quantum error correction code without averaging, since code parameters should satisfy some bounds (for instance, the quantum singleton bound or the hamming bound). 

Since it is a random average, we might also consider the variance of the Knill-Laflamme condition. Similarly, we compute 
\begin{align}
&\int_{\cal H} {dU} {\left| {\left\langle {{\beta _a}} \right|O\left| {{\beta _b}} \right\rangle } \right|^2} - {\left| {\int_{\cal H} {dU} \left\langle {{\beta _a}} \right|O\left| {{\beta _b}} \right\rangle } \right|^2}\nonumber\\
&= \int_{\cal H} {dU} U_i^{\dag a}O_j^iU_b^jU_{i'}^{\dag b}O_{j'}^{\dag i'}U_a^{j'} - \delta _b^a{\left| {\left\langle O \right\rangle } \right|^2}\nonumber\\
&= \frac{1}{{{L^2} - 1}}\left( {L{{\left| {\left\langle O \right\rangle } \right|}^2}\delta _b^a(L - 1) + \left\langle {O{O^\dag }} \right\rangle (L - 1)} \right) - \delta _b^a{\left| {\left\langle O \right\rangle } \right|^2}\nonumber\\
&= \frac{{L{{\left| {\left\langle O \right\rangle } \right|}^2}\delta _b^a + \left\langle {O{O^\dag }} \right\rangle }}{{L + 1}} - \delta _b^a{\left| {\left\langle O \right\rangle } \right|^2}\nonumber\\
&=  - \frac{{{{\left| {\left\langle O \right\rangle } \right|}^2}}}{{L + 1}}\delta _b^a + \frac{{\left\langle {O{O^\dag }} \right\rangle }}{{L + 1}}~.
\end{align}
This computation shows that the variance is approaching zero for a large system $L\to \infty$. As a clearer example, we consider the Pauli error. We set $O = P_\mu ^\dag {P_\nu}$. It is easy to show that
\begin{align}
&\int_{\mathcal{H}} d U\left\langle {{\beta _a}|O|{\beta _b}} \right\rangle  = {\delta _{ab}}{\delta _{\mu \nu }}\nonumber\\
&\int_{\mathcal{H}} d U{\left| {\left\langle {{\beta _a}|O|{\beta _b}} \right\rangle } \right|^2} - {\left| {\int_{\mathcal{H}} d U\left\langle {{\beta _a}|O|{\beta _b}} \right\rangle } \right|^2} = \frac{{1 - \delta _b^a{\delta _{\mu \nu }}}}{{L + 1}}~.
\end{align}

Another aspect of showing quantum error correction ability of the Haar randomness, is through the discussion of decoupling. It follows from the following simple facts. Consider a distance $t+1$ code. Say that we have a state $\ket{\psi}$ in the code subspace with dimension $n$ and we trace out $n-t$ qubits,
\begin{align}
{\rho ^{(t)}} = {\rm{T}}{{\rm{r}}_{(n - t)}}\left| \psi  \right\rangle \left\langle \psi  \right|~.
\end{align}
If it is a \emph{non-degenerate} code, which means that the matrix $C_{ab}$ in the Knill-Laflamme condition is not degenerate, for any error operator with weight up to $t$. Then one can show that
\begin{align}
{\rho ^{(t)}} = \frac{{{I_t}}}{{{2^t}}}~.
\end{align}
In fact, non-degeneracy implies that 
\begin{align}
{\rm{Tr}}\left( {{\rho ^{(t)}}O} \right) = 0~,
\end{align}
if $O$ is not identity, since one can expand $\rho^{(t)}$ as
\begin{align}
{\rho ^{(t)}} = \frac{1}{{{2^t}}}{I_t} + \sum\limits_O {{\rho _O}O}~,
\end{align}
where the sum is taken over Paulis. Then, from the orthogonal property of Pauli operators, we have all $\rho_O=0$, which implies that $\rho$ is maximally mixed. This is another evidence supporting the fact that the Haar random unitaries have good error correction properties, following from the decoupling property we have discussed above.

One might note that maybe it is not very suitable to describe approximate, random quantum error correction codes in terms of code parameters like the distance. Thus, people usually rely on the definition of the fidelity of recovery to define the decoding capability properly. This might be particularly important for applications in holography, since the concept of quantum error corrections usually hold approximately due to the leading order, the semi-classical gravitational path integral. The phenomenon of non-zero variance has also been shown in the recent discussions about the black hole information paradox and the baby universe, where the disordered average is shown explicitly in some examples of gravitational path integral and holographic dual (for instance, see some decent discussions \cite{Penington:2019npb,Almheiri:2019psf,Penington:2019kki,Almheiri:2019qdq,Marolf:2020xie}.).

The following theorem will somewhat make the above statement rigorous \cite{recovery,Sutter:2016uxq},
\begin{thm}\label{recover}
Consider a mixed state on the system $YZ$, which is purified by $X$. The whole state $\rho_{XYZ}$ is a pure state. We assume that there are noises only acting on $Z$. Say that there are recovery maps $\mathcal{R}=\mathcal{D} \circ \mathcal{N}$ acting on $\rho_{XY}$, then the supremum of the state fidelity over all possible recoveries (the fidelity of recovery) is bounded by the conditional mutual information
\begin{align}
I{(X:Z|Y)_\rho } \ge  - 2{\log _2}\mathop {\sup }\limits_{\mathcal{R}} F\left( {{\rho _{XYZ}},\mathcal{R}\left( {{\rho _{XY}}} \right)} \right) \equiv  - 2{\log _2}{F_\rho }~,
\end{align}
where the fidelity is defined by
\begin{align}
F(\rho ,\sigma ) = {\left\| {\sqrt \rho  \sqrt \sigma  } \right\|_1}~,
\end{align}
and we have the definition of the conditional mutual information
\begin{align}
I{(X:Z|Y)_\rho } \equiv S({\rho _{XY}}) + S({\rho _{YZ}}) - S({\rho _Y}) - S({\rho _{XYZ}})~,
\end{align}
where $S$ is the von Neumann entropy.
\end{thm}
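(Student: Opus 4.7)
The plan is to build on the Fawzi--Renner / Sutter--Berta--Tomamichel machinery, which establishes precisely this kind of bound via the Petz recovery map combined with the data-processing inequality for the sandwiched R\'enyi relative entropy. The heuristic is that conditional mutual information measures how far $\rho_{XYZ}$ is from being a quantum Markov chain $X-Y-Z$, and a Markov chain is exactly a state that can be reconstructed from $\rho_{XY}$ by acting only on $Y$; so a small CMI should imply good approximate recoverability.

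First I would recast the conditional mutual information as a relative entropy with a natural recovery-map reference. Using $I(X:Z|Y)_\rho = S(\rho_{XY}) + S(\rho_{YZ}) - S(\rho_Y) - S(\rho_{XYZ})$ together with the Petz map
\begin{align}
\mathcal{P}_{Y\to YZ}(\tau) = \rho_{YZ}^{1/2}\bigl(\rho_Y^{-1/2}\,\tau\,\rho_Y^{-1/2}\otimes I_Z\bigr)\rho_{YZ}^{1/2},
\end{align}
which satisfies $\mathcal{P}_{Y\to YZ}(\rho_Y)=\rho_{YZ}$, one can write $I(X:Z|Y)_\rho$ as the relative-entropy gap produced by the data-processing inequality for $\operatorname{Tr}_Z$, namely $D(\rho_{XYZ}\,\|\,\mathcal{P}_{Y\to YZ}(\rho_{XY})) \le I(X:Z|Y)_\rho$.

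Second I would invoke the Fawzi--Renner inequality to upgrade this to a fidelity bound. The argument applies the data-processing inequality for the sandwiched R\'enyi relative entropy $\widetilde{D}_\alpha$ to the partial trace $\operatorname{Tr}_Z$, then analytically continues through $\alpha = 1/2$, where $\widetilde{D}_{1/2}(\rho\,\|\,\sigma) = -2\log_2 F(\rho,\sigma)$. Averaging over a one-parameter family of rotated Petz maps $\mathcal{P}^{(t)}_{Y\to YZ}(\tau) = \rho_{YZ}^{(1+it)/2}(\rho_Y^{-(1+it)/2}\,\tau\,\rho_Y^{-(1-it)/2}\otimes I_Z)\rho_{YZ}^{(1-it)/2}$ against a suitable probability measure yields a concrete channel $\mathcal{R}_{Y\to YZ}$ for which
\begin{align}
I(X:Z|Y)_\rho \ge -2\log_2 F\!\left(\rho_{XYZ},\,\mathcal{R}_{Y\to YZ}(\rho_{XY})\right).
\end{align}

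Third, since the theorem statement takes the supremum over all recoveries $\mathcal{R}=\mathcal{D}\circ\mathcal{N}$, the existence of a single such $\mathcal{R}_{Y\to YZ}$ completes the bound:
\begin{align}
-2\log_2 \sup_{\mathcal{R}} F(\rho_{XYZ},\mathcal{R}(\rho_{XY})) \;\le\; -2\log_2 F(\rho_{XYZ},\mathcal{R}_{Y\to YZ}(\rho_{XY})) \;\le\; I(X:Z|Y)_\rho.
\end{align}
The main obstacle is the sharp factor of $2$: a weaker constant can be extracted from a quantum Pinsker-type inequality applied directly to $D(\rho_{XYZ}\,\|\,\mathcal{P}_{Y\to YZ}(\rho_{XY}))$, but getting the tight prefactor requires the operator-interpolation argument (Stein--Hirschman-type three-line theorem) that underlies the analytic continuation of $\widetilde{D}_\alpha$. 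Since the paper only cites the result, I would present the three-step reduction above and defer the interpolation lemma to the references \cite{recovery,Sutter:2016uxq}.
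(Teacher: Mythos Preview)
The paper does not give its own proof of this theorem; it simply states it as a known result with citations to the Fawzi--Renner and Sutter--Berta--Tomamichel papers, which is precisely the machinery you invoke, so your proposal is already more detailed than what the paper itself offers.

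One technical caveat on your sketch: the inequality you write in your first step, $D(\rho_{XYZ}\,\|\,\mathcal{P}_{Y\to YZ}(\rho_{XY})) \le I(X:Z|Y)_\rho$ with the \emph{unrotated} Petz map, is not true in general --- explicit counterexamples are known. The Fawzi--Renner argument does not pass through this bound; it works directly with the sandwiched R\'enyi relative entropy $\widetilde{D}_{1/2}$ and the rotated Petz family, exactly as in your second step. Since your step two is self-contained and correct, the overall conclusion stands, but step one as written should be presented only as heuristic motivation, not as an inequality you can rely on (and hence the ``weaker constant via Pinsker'' remark at the end is also not available in the form you suggest).
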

One such recovery map with high fidelity is called the (twirled) Petz map \cite{petz}. The condition where the conditional mutual information is zero, is called the quantum Markov condition for states. For chaotic randomness, like the Haar distribution, the state is highly chaotic and thus Markov, ensuring a potential high fidelity of recoveries. The above condition is also used in the context of holography, see, for instance, \cite{Pastawski:2016ggn,Pastawski:2016qrs,Lewkowycz:2019xse}.

We will show later that the above condition is exactly the decoupling condition in the Hayden-Preskill experiment. 
\subsection{Error correction in the Hayden-Preskill experiment}
The above picture is closely related to the Hayden-Preskill experiment. We will do the Haar case with a more detailed analysis here. The Hayden-Preskill experiment is used to model a black hole and discuss information transfer during Hawking radiation, which is described in Figure \ref{fig1}. In this diagram, Alice $A$ has a small amount of information (a state), while $B$ is a black hole that is connected (maximally entangled) with Bob's quantum computer $\bar{B}$. $\bar{A}$ is a reference system, forming a Bell pair with $A$. After time evolution, $C$ becomes the remaining black hole, and $D$ is the Hawking radiation. 

One could identify the above construction in the Theorem \ref{recover}. After the encoding $U$, the reference system $X$ is identified with $\bar{A}$, and we also set $D\bar{B}$ as $Y$ and $C$ as $Z$. The noise is understood as the erasure of $C$. Thus, this procedure could be understood exactly as the error correction process above. Alice's diary is understood as codewords, and random unitary $U$ is understood as an encoding map, while the error is understood as the erasure. The fact that random unitary could serve as a good error correction code, ensures that one could correct the error (erasure noise) and recover the original codewords (Alice's diary). Note that in this example, the codewords are understood as the states of Bell pairs within $A\bar{A}$, which has the Hilbert space dimension $d_A$. In such construction, the codewords are fixed, and we could choose the states in $A\bar{A}$ to be the computational basis, and we choose the random encoding $U$ as a random matrix with respect to this basis. Since we are taking the average over a given random state, Theorem \ref{recover} could apply\footnote{We might notice that the decoupling condition in the Hayden-Preskill experiment here is using the 2-Renyi entropy, while we are bounding the fidelity of recovery using the von Neumann entropy. When decoupling happens, the state is nearly maximally mixed. Thus we could ignore such a difference. However, we might also consider some general bounds from the 2-Renyi conditional mutual information towards the von Neumann conditional mutual information. (see some related works \cite{Renyi,Fer,Dupuis:2015vxa}, or the appendices of \cite{Yoshida:2018vly}).}.

After applying the unitary operator, the state is given by
\begin{align}
\left| \Psi  \right\rangle  = \frac{1}{{\sqrt {{d_A}{d_B}} }}U_{cd}^{ab}\left| {a,b,c,d} \right\rangle~,
\end{align}
where we use $a,b$ to denote the basis from $\bar{A}$ and $\bar{B}$, while $c,d$ to denote the basis from $C$ and $D$. So we compute the reduced density matrix
\begin{align}
&{\rho _{\bar{A}C}} = \frac{1}{{{d_A}{d_B}}}U_{cd}^{ab}U_{\tilde ab}^{\dag \tilde cd}\left| {a,c} \right\rangle \langle \tilde a,\tilde c|~,\nonumber\\
&\frac{I}{{{d_A}}} \otimes {\rho _C} = \frac{1}{{d_A^2{d_B}}}U_{cd}^{ab}U_{ab}^{\dag \tilde cd}|\tilde a,c\rangle \langle \tilde a,\tilde c|~.
\end{align}
We define
\begin{align}
{\rho _{\bar AC}} - \frac{I}{{{d_A}}} \otimes {\rho _C} = \Delta {\rho _{\bar AC}}~.
\end{align}
Then we obtain
\begin{align}
&{\rm{Tr}}\left( {\Delta \rho _{\bar AC}^2} \right) = {\rm{Tr}}\left( {\rho _{\bar AC}^2} \right) - 2{\rm{Tr}}\left( {{\rho _{\bar AC}}\left( {\frac{I}{{{d_A}}} \otimes {\rho _C}} \right)} \right) + {\rm{Tr}}\left( {{{\left( {\frac{I}{{{d_A}}} \otimes {\rho _C}} \right)}^2}} \right)\nonumber\\
&= {\rm{Tr}}\left( {\rho _{\bar AC}^2} \right) - \frac{1}{{{d_A}}}{\rm{Tr}}\left( {\rho _C^2} \right)~.
\end{align}
The Haar integral formula allows us to estimate the trace distance as
\begin{align}
&{\rm{Tr}}\left( {\rho _{\bar AC}^2} \right) = \frac{1}{{d_A^2d_B^2}}U_{cd}^{ab}U_{\tilde ab}^{\dag \tilde cd}U_{\tilde c\tilde d}^{\tilde a\tilde b}U_{a\tilde b}^{\dag c\tilde d}~,\nonumber\\
&{\rm{Tr}}\left( {\rho _C^2} \right) = \frac{1}{{d_A^2d_B^2}}U_{cd}^{ab}U_{ab}^{\dag \tilde cd}U_{\tilde cd'}^{a'b'}U_{a'b'}^{\dag cd'}~,\nonumber\\
&\int {dU} {\rm{Tr}}\left( {\rho _{\bar AC}^2} \right) \approx \frac{1}{{{d_A}{d_C}}} + \frac{1}{{{d_B}{d_D}}}~,\nonumber\\
&\frac{1}{{{d_A}}}\int {dU} {\rm{Tr}}\left( {\rho _C^2} \right) \approx \frac{1}{{{d_A}{d_C}}} + \frac{1}{{d_A^2{d_B}{d_D}}}~,
\end{align}
where we drop the higher-order terms in the 2-design formula. Note that the leading terms in the last two formulas are the same. Thus we arrive at
\begin{align}
\int {dU} \left\| {\Delta {\rho _{\bar AC}}} \right\|_1^2 \le {d_A}{d_C}\int {dU} \left\| {\Delta {\rho _{\bar AC}}} \right\|_2^2 \le \frac{{{d_A}{d_C}}}{{{d_B}{d_D}}}~.
\end{align}
This formula says that for a small input state $d_D \gg d_A$, the total state is quickly decoupled. We could also use the 2-Renyi relative entropy and the 2-Renyi conditional mutual information
\begin{align}
&{I^{(2)}}(\bar A:C|\bar BD) = S_{\bar A\bar BD}^{(2)} + S_{\bar BCD}^{(2)} - S_{\bar BD}^{(2)}\nonumber\\
&= S_C^{(2)} + S_{\bar A}^{(2)} - S_{\bar AC}^{(2)}\nonumber\\
&= {\log _2}{\mathop{\rm Tr}\nolimits} \left( {\rho _{\bar AC}^2} \right) - {\log _2}{\mathop{\rm Tr}\nolimits} \left( {\rho _{\bar A}^2} \right) - {\log _2}{\mathop{\rm Tr}\nolimits} \left( {\rho _C^2} \right)\nonumber\\
&\approx {\log _2}\frac{{d_A^3{d_B} + {d_A}{d_B}d_D^2}}{{{d_A}{d_B} + {d_A}{d_B}d_D^2}}~.
\end{align}
If the 2-Renyi conditional mutual information is small enough, namely,
\begin{align}
d_D \gg d_A~,
\end{align}
we will arrive at an efficient decoupling.

The requirement where the conditional mutual information is small, implies that we have a high fidelity of recovery based on the Theorem \ref{recover}. Namely, one can reconstruct the Bell pair between $A$ and $\bar{A}$ with high fidelity. Beyond the Petz map, one of the explicit decoding algorithm is recently constructed by Yoshida and Kitaev \cite{Yoshida:2017non}. Thus we see again that the possibility of quantum error correction in this process, is based on the scrambling property of the Haar random unitary. 

\subsection{Error correction and U(1)-symmetric Haar randomness}
Now we could address the extension of the story about error correction and the U(1)-symmetric Haar randomness.

We could do similar heuristic calculations as we have done before. We consider the states before encoding, $\left| {{\alpha _a},B} \right\rangle $, to be charge eigenstates. We fix the total charge $m$ to be $m_A+m_B$, where $m_A \le n_A= \log_2 d_A$ is the fixed charge for $A$, and similar definitions apply for $B$. Now, the dimension of the code subspace is 
\begin{align}
\tilde{d}_A= \left( \begin{array}{l}{n_A}\\m_A\end{array} \right)~.
\end{align}
We consider the Knill-Laflamme condition
\begin{align}
\int_{{ \oplus _p}{{\cal H}_p}} d U\left\langle {{\beta _a}|O|{\beta _b}} \right\rangle  = \int_{{ \oplus _p}{{\cal H}_p}} d UU_i^{\dag a}U_b^jO_j^i~.
\end{align}
Since $a$ and $b$ have the same charge, we have
\begin{align}
\int_{{ \oplus _p}{{\cal H}_p}} d U\left\langle {{\beta _a}|O|{\beta _b}} \right\rangle  = \frac{1}{{{d_m}}}\delta _b^a\delta _i^jO_j^i = \delta _b^a{\langle {O_m}\rangle _m}~.
\end{align}
Moreover, the variance is given by
\begin{align}
&\int_{{ \oplus _p}{{\cal H}_p}} d U{\left| {\left\langle {{\beta _a}|O|{\beta _b}} \right\rangle } \right|^2} - {\left| {\int_{{ \oplus _p}{{\cal H}_p}} d U\left\langle {{\beta _a}|O|{\beta _b}} \right\rangle } \right|^2}\nonumber\\
&=  - \frac{{|{{\langle {O_m}\rangle }_m}{|^2}}}{{{d_m} + 1}}\delta _b^a + \frac{{{{\left\langle {{O_m}O_m^\dag } \right\rangle }_m}}}{{{d_m} + 1}}~.
\end{align}
The above calculations imply a similar conclusion with the Haar randomness: we will have good code property for large $d_m$. 
\subsection{U(1)-symmetric Hayden-Preskill experiment: hint for weak gravity?}
We could perform similar computations in the U(1)-symmetric Hayden-Preskill experiment. One consider the input state in the Hayden-Preskill experiment, $A$ and $B$ (with number of qubits $n_A$ and $n_B$, Hilbert space dimensions $d_A$ and $d_B$) to have fixed charge $m_A$ and $m_B$. After a U(1)-symmetric evolution, we get systems $C$ and $D$ with Hilbert space dimensions $d_C$ and $d_D$. We assume total charge $m=m_A+m_B$. 

In this setup, since we set the explicit charge for the system $A$ and $B$, the actual dimensions we should consider in the system $A$ and $B$ should be the dimensions of their charge eigenspaces. For convenience, we denote 
\begin{align}
\tilde{d}_A= \left( \begin{array}{l}{n_A}\\m_A\end{array} \right)~,~~~~~\tilde{d}_B= \left( \begin{array}{l}{n_B}\\m_B\end{array} \right)~.
\end{align}
Now, combining with the tools we have before, one could show that in the current setup
\begin{align}
&\int d U{\mathop{\rm Tr}\nolimits} \left( {\rho _{\bar AC}^2} \right) \approx \frac{{G\left( {{n_C},{n_D},m} \right)}}{{d_q^2{{\tilde d}_A}}} + \frac{{G\left( {{n_D},{n_C},m} \right)}}{{d_q^2{{\tilde d}_B}}}~,\nonumber\\
&\frac{1}{{{{\tilde d}_A}}}\int d U{\mathop{\rm Tr}\nolimits} \left( {\rho _C^2} \right) \approx \frac{{G\left( {{n_C},{n_D},m} \right)}}{{d_q^2{{\tilde d}_A}}} + \frac{{G\left( {{n_D},{n_C},m} \right)}}{{d_q^2\tilde d_A^2{{\tilde d}_B}}}~.
\end{align}
Part of the above calculations has been done in \cite{Yoshida:2018ybz}, which is in the special case where $n_C \ge m \ge n_D$. However, in our case, we still wish to assume $n_C\ge n_D$. But we wish to keep the charge $m$ to be more general, in order to inspire discussions about quantum gravity conjectures. 

Now we discuss the decoupling condition in this system. One could compute the conditional mutual information
\begin{align}
&{I^{(2)}}(\bar A:C|\bar BD) = {\log _2}{\mathop{\rm Tr}\nolimits} \left( {\rho _{\bar AC}^2} \right) - {\log _2}{\mathop{\rm Tr}\nolimits} \left( {\rho _{\bar A}^2} \right) - {\log _2}{\mathop{\rm Tr}\nolimits} \left( {\rho _C^2} \right)\nonumber\\
&\approx  - {\log _2}\left( {1 - \frac{{{{\tilde d}_A}G\left( {{n_D},{n_C},m} \right)}}{{{{\tilde d}_B}G\left( {{n_C},{n_D},m} \right)}}} \right)~.
\end{align}
So the decoupling condition is
\begin{align}
\frac{{{{\tilde d}_A}G\left( {{n_D},{n_C},m} \right)}}{{{{\tilde d}_B}G\left( {{n_C},{n_D},m} \right)}} \ll 1~.
\end{align}
Namely, we have,
\begin{align}
\frac{{G\left( {{n_C},{n_D},m} \right)}}{{{{\tilde d}_A}}} \gg \frac{{G\left( {{n_D},{n_C},m} \right)}}{{{{\tilde d}_B}}}~.
\end{align}
The decoupling condition in the case $n_C \ge m \ge n_D$, is the same as the criterion given in \cite{Yoshida:2018ybz}. We will follow the same path but instead make a more general analysis.
\begin{itemize}
\item In the case ${n_C} \ge m \ge {n_D}$, we have
\begin{align}
&G({n_C},{n_D},m) = \sum\limits_{f = m - {n_D}}^m {d_f^{\left( {{n_C}} \right)}} {\left( {d_{m - f}^{\left( {{n_D}} \right)}} \right)^2} = \sum\limits_{f = 0}^{{n_D}} {d_{m - f}^{\left( {{n_C}} \right)}} {\left( {d_f^{\left( {{n_D}} \right)}} \right)^2}~,\nonumber\\
&G({n_D},{n_C},m) = \sum\limits_{f = 0}^{{n_D}} {{{\left( {d_{m - f}^{\left( {{n_C}} \right)}} \right)}^2}d_f^{\left( {{n_D}} \right)}}~.
\end{align}
The approach of \cite{Yoshida:2018ybz} is to show that for each $f$ in the above sum, we wish to demand 
\begin{align}
\frac{{d_{m - f}^{\left( {{n_C}} \right)}{{\left( {d_f^{\left( {{n_D}} \right)}} \right)}^2}}}{{{{\tilde d}_A}}} \gg \frac{{{{\left( {d_{m - f}^{\left( {{n_C}} \right)}} \right)}^2}d_f^{\left( {{n_D}} \right)}}}{{{{\tilde d}_B}}} \Rightarrow \frac{{d_f^{\left( {{n_D}} \right)}}}{{{{\tilde d}_A}}} \gg \frac{{d_{m - f}^{\left( {{n_C}} \right)}}}{{{{\tilde d}_B}}}~.
\end{align}
In the case where $f\ge m_A$, the right-hand side is smaller than 1. For those $f$s, we demand
\begin{align}
d_f^{\left( {{n_D}} \right)} \gg {{\tilde d}_A} \Rightarrow \left( {\begin{array}{*{20}{c}}
{{n_D}}\\
f
\end{array}} \right) \gg \left( {\begin{array}{*{20}{c}}
{{n_A}}\\
{{m_A}}
\end{array}} \right)~,
\end{align}
which is correct when $n_D\gg  n_A$. Although for small $f$, the above inequality may not hold, one could argue that there are relatively small contributions from those $f$s, since for large $n_C$, it forms nearly a binomial distribution and the contribution $f\sim \mathcal{O}(n_D)$ should be dominant. Thus, the conclusion is that, in this case, $n_D\gg n_A$ will generically ensure the decoupling. 
\item We have noticed that the above computation could completely extend to the case where we have $m \ge {n_C} \ge {n_D}$, since 
\begin{align}
&G({n_C},{n_D},m) = \sum\limits_{f = m - {n_D}}^{{n_C}} {d_f^{\left( {{n_C}} \right)}} {\left( {d_{m - f}^{\left( {{n_D}} \right)}} \right)^2}~,\nonumber\\
&G({n_D},{n_C},m) = \sum\limits_{f = m - {n_C}}^{{n_D}} {d_f^{\left( {{n_D}} \right)}} {\left( {d_{m - f}^{\left( {{n_C}} \right)}} \right)^2} = \sum\limits_{f = m - {n_D}}^{{n_C}} {d_{m - f}^{\left( {{n_D}} \right)}} {\left( {d_f^{\left( {{n_C}} \right)}} \right)^2}~.
\end{align}
Thus, we will still win to show the inequality in each term and the $f\sim \mathcal{O}(n_D)$ dominance in the sum.
\item The problem appears in the case of small charge, ${n_C} \ge {n_D} \ge m$. We have
\begin{align}
&G({n_C},{n_D},m) = \sum\limits_{f = 0}^m {d_f^{\left( {{n_C}} \right)}} {\left( {d_{m - f}^{\left( {{n_D}} \right)}} \right)^2}~,\nonumber\\
&G({n_D},{n_C},m) = \sum\limits_{f = 0}^m {d_f^{\left( {{n_D}} \right)}} {\left( {d_{m - f}^{\left( {{n_C}} \right)}} \right)^2}~.
\end{align}
Since the sum over $f$ is restricted to the range between 0 and $m$, it has to be dominated by small $f$ pieces if $m$ is small enough. In this case, the condition $n_D\gg n_A$ may not ensure the decoupling. We will give a simple example. Consider $m=2$ and $m_A=m_B=1$. We have
\begin{align}
&\frac{{G({n_C},{n_D},m)}}{{{{\tilde d}_A}}} = \frac{1}{{{n_A}}}\left( {\frac{{n_D^2{{({n_D} - 1)}^2}}}{4} + {n_C}n_D^2 + \frac{{{n_C}({n_C} - 1)}}{2}} \right)~,\nonumber\\
&\frac{{G({n_D},{n_C},m)}}{{{{\tilde d}_B}}} = \frac{1}{{{n_B}}}\left( {\frac{{n_C^2{{({n_C} - 1)}^2}}}{4} + {n_D}n_C^2 + \frac{{{n_D}({n_D} - 1)}}{2}} \right)~.
\end{align}
For large system, asymptotically we have,
\begin{align}
&\frac{{G\left( {{n_C},{n_D},m} \right)}}{{{{\tilde d}_A}}} \sim \frac{{n_D^4}}{{{n_A}}}~,\nonumber\\
&\frac{{G\left( {{n_D},{n_C},m} \right)}}{{{{\tilde d}_B}}} \sim \frac{{n_C^4}}{{{n_B}}}~.
\end{align}
Thus, as long as
\begin{align}
\frac{{n_D^4}}{{{n_A}}} \le \frac{{n_C^4}}{{{n_B}}}~,
\end{align}
we cannot arrive at the previous decoupling even if $n_D\gg n_A$. 
\end{itemize}
The decoupling condition beyond the usual $n_D \gg n_A$ criterion, in the case of the small charge, is seemingly intuitive physically. If the system has a fixed small amount of charge, the code subspace has stronger restrictions. Thus, the decaying process is hard to proceed to arrive at the approximate Markov condition. 

At this point, we think it is safe to make the following remark,
\begin{remark}\label{wgc}
Assuming the decoupling condition $n_D \gg n_A$ generically, there exists a lower bound on the charge $m$.
\end{remark} 
Moreover, if we assume the dominance of binomial distribution, we could arrive at the condition when it is safe to make $m\sim \mathcal{O}(n_D)$. If so, the previous proof for larger charges could still apply, and we arrive at a safe decoupling at $n_D \gg n_A$. We wish to interpret the above remark as a hint of weak gravity conjecture, although it is still far from an explicit statement about the real life of quantum gravity and black holes. We will make further comments at the end of this section.

\subsection{Bounds from the approximate Eastin-Knill theorem}
Here we discuss the connection between this work and the recently-proved approximate Eastin-Knill theorem \cite{Faist:2019ahr}. 

As indicated from previous discussions, symmetry could provide a significant constraint on the structure of quantum error correction codes. When the symmetry operators \emph{commute} with the code subspace, we say that the code is \emph{covariant}. To be more precise, we consider the encoding from the logical system $S_\text{logical}$ to the physical system $S_\text{physical}$. The physical system is made by $D$ subsystems $\Lambda_1,\cdots,\Lambda_D$. We assume that there is a charge operator $T_\text{logical}$ acting on the logical system, while the charge operator $T_{\text{physical}}$ acting on the physical system. Furthermore, we assume that the symmetry acts on the physical system transversely. Namely, we have
\begin{align}
{T_{{\rm{physical}}}} = \sum\limits_{i = 1}^D {{T_i}} ~,
\end{align}
and the condition of covariance requires that the symmetry operator should commute with the encoding.

When taking a look at the structure of the U(1)-symmetric Hayden-Preskill experiment, we could immediately realize that it is an approximate covariant code with U(1) symmetry. The symmetry operator is the charge
\begin{align}
T_\text{physical}=Q=\sum_{i=1}^{D} \frac{1+Z_{i}}{2}~,
\end{align}
which is definitely transverse. When we split the system by $A$ and $B$, we restrict their charges to be $m_A$ and $m_B$. The encoding map commutes with the symmetry operator, ensuring the covariance of the code.

A celebrated result is obtained by Eastin and Knill \cite{EK}, claiming that if the symmetry is continuous, then the exact covariant code does not exist. Recently, there are detailed discussions about extending this idea to approximate quantum error correction \cite{Faist:2019ahr}. There are universal constraints written in terms of fidelity from properties of symmetries. 

\begin{thm}[Approximate Eastin-Knill theorem]
Say that the symmetry is continuous. Define the worst-case entanglement fidelity,
\begin{align}
&f_{{\rm{worst}}}^2 ={\max _{\cal R}} {\min _{|\phi \rangle  \in \mathcal{C}}}\langle \phi |\mathcal{R}(|\phi \rangle \langle \phi |)|\phi \rangle ~,\nonumber\\
&\epsilon_{{\rm{worst}}}^2=1-f_{{\rm{worst}}}^2 ~,
\end{align}
for given noise and encoding, where $\mathcal{C}$ is the code subspace. Then we have
\begin{align}
{\epsilon _{{\rm{worst }}}} \ge \frac{{\Delta {T_{{\rm{logical}}}}}}{{2D{{\max }_i}\Delta {T_i}}}~,
\end{align}
where ${\Delta {T_{{\rm{logical}}}}}$ represents the difference between maximal and minimal values of charge operator eigenvalues in the logical system. $\Delta T_i$ is the difference between maximal and minimal values of charge operator eigenvalues for system $i$.
\end{thm}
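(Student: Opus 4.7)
The plan is to invoke the covariance of the encoding to transport the logical symmetry action onto the physical Hilbert space, and then to exploit the locality of $T_{\text{physical}}=\sum_i T_i$ in order to bound the recoverability of coherent superpositions of logical charge eigenstates under the erasure of a single subsystem. Concretely, I would first pick states $|\bar 0\rangle,|\bar 1\rangle\in\mathcal{C}$ attaining the extremal eigenvalues of $T_{\text{logical}}$ with eigenvalue gap $\Delta T_{\text{logical}}$, and form the probe $|\bar +\rangle=(|\bar 0\rangle+|\bar 1\rangle)/\sqrt{2}$. Applying the logical phase, the state $|\bar +(\theta)\rangle=e^{i\theta T_{\text{logical}}}|\bar +\rangle$ becomes orthogonal to $|\bar +\rangle$ when $\theta=\pi/\Delta T_{\text{logical}}$, so these two code states are perfectly distinguishable by a measurement inside $\mathcal{C}$, and any honest recovery map must preserve this distinguishability.

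The main step is to compare the physical images of $|\bar +\rangle$ and $|\bar +(\theta)\rangle$ after the noise. By covariance, $|\bar +(\theta)\rangle_{\text{phys}}=e^{i\theta T_{\text{physical}}}|\bar +\rangle_{\text{phys}}=\prod_i e^{i\theta T_i}|\bar +\rangle_{\text{phys}}$. If the noise is the erasure of a single subsystem $i^\star$, the factor $e^{i\theta T_{i^\star}}$ is discarded together with $i^\star$, so the two noisy reduced states differ only by the residual action $\prod_{j\neq i^\star}e^{i\theta T_j}$. Since the discarded factor deviates from the identity by at most $|\theta|\,\Delta T_{i^\star}/2\le |\theta|\max_i\Delta T_i/2$ in operator norm, the trace distance of the two noisy images is bounded by the same quantity up to a constant. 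A Fuchs--van de Graaf style inequality then converts this closeness into a lower bound on the infidelity of any recovery, since a perfect recovery would produce two perfectly orthogonal outputs. Choosing $\theta=\pi/\Delta T_{\text{logical}}$ and taking the adversarial $i^\star$, together with the factor of $D$ arising because the logical phase is distributed across $D$ sites (so at least one site must carry an $O(1/D)$ share of the accumulated angle), produces the claimed bound $\epsilon_{\text{worst}}\ge\Delta T_{\text{logical}}/(2D\max_i\Delta T_i)$.

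The main obstacle will be the quantitative bookkeeping that extracts the correct factor of $D$ rather than $1$ from the erasure argument: one must either average the local-norm bound over all $D$ possible erasures or invoke a pigeonhole-type argument showing that at least one site accumulates a $1/D$ fraction of the charge-induced phase, and track this factor through the conversion from operator norm to trace distance to fidelity. A secondary technical point is that the extremal code states $|\bar 0\rangle,|\bar 1\rangle$ need not be exact eigenstates of $T_{\text{logical}}$; in that case one replaces them by the pair in $\mathcal{C}$ attaining the extreme expectation values of $T_{\text{logical}}$, at the price of subleading error terms. Finally, one should verify that the worst-case definition of $\epsilon_{\text{worst}}$ (maximum over recoveries, minimum over code states) is the correct quantity to feed into the Fuchs--van de Graaf step: because $|\bar +\rangle\in\mathcal{C}$, the minimum over code inputs is at most the infidelity evaluated on $|\bar +\rangle$, which is precisely what the chain of inequalities above bounds from below.
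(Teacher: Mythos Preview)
The paper does not actually prove this theorem: it is stated as a result imported from \cite{Faist:2019ahr}, and immediately after the statement the text reads ``Since the theorem has already been proven, it implies that \ldots''.  So there is no in-paper proof to compare your proposal against.

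That said, your sketch is in the right spirit but takes a more roundabout route than the argument in the cited reference, and the very place you flag as ``the main obstacle'' (extracting the factor $D$) is where your approach becomes awkward.  The cleaner line, and the one actually used in \cite{Faist:2019ahr}, does not go through a phase-rotated probe state and Fuchs--van de Graaf.  Instead one uses the information--disturbance tradeoff (equivalently, the complementary-channel/decoupling criterion): if the code is recoverable with worst-case infidelity $\epsilon_{\text{worst}}$ after erasure of subsystem $i$, then the reduced state on $i$ can depend on the logical input only up to $O(\epsilon_{\text{worst}})$ in trace distance.  Applying this to the two extremal logical charge eigenstates gives
\[
\bigl|\langle T_i\rangle_{\bar 0}-\langle T_i\rangle_{\bar 1}\bigr|\;\le\;2\,\epsilon_{\text{worst}}\,\Delta T_i\,,
\]
and now the factor $D$ appears automatically when you sum over $i=1,\dots,D$ and invoke covariance, $\sum_i\langle T_i\rangle=\langle T_{\text{logical}}\rangle$, yielding $\Delta T_{\text{logical}}\le 2D\,\epsilon_{\text{worst}}\max_i\Delta T_i$.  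Your phase-rotation picture is morally equivalent but makes the $D$-bookkeeping and the constant $2$ harder to see; in particular your ``pigeonhole'' suggestion would only give existence of one bad site, not the summed bound that produces the clean inequality.
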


Since the theorem has already been proven, it implies that in the U(1)-symmetric Hayden-Preskill experiment, there should exist the above universal constraint even if decoupling happens. Thus, we have the following statement,
\begin{remark}
The approximate Eastin-Knill theorem set bounds on the fidelity of the U(1)-symmetric Hayden-Preskill experiment even when decoupling happens.
\end{remark}

A direct computation about the above fact goes as the following. One can show that \cite{Faist:2019ahr}, the fidelity defined by the inner product, is the same as the fidelity defined by the trace distance on the density matrix. Furthermore, the worst-case fidelity searches for the worst performance in the code subspace, while in the U(1)-symmetric Hayden-Preskill, we are using a specific state. Furthermore, we know from Theorem \ref{recover} that conditional mutual information is bounded by the fidelity. Combining all the above facts together, we demand the consistency condition between the decoupling of the U(1)-symmetric Hayden-Preskill and the approximate Eastin-Knill theorem:
\begin{align}
f_{{\rm{worst }}}^2 \le 1 - \frac{{m_A^2}}{{4{D^2}}} \Rightarrow {\log _2}\left( {1 - \frac{{m_A^2}}{{4{D^2}}}} \right) \gtrapprox {\log _2}{\mathop{\rm Tr}\nolimits} \left( {\rho _{\bar A}^2} \right) + {\log _2}{\mathop{\rm Tr}\nolimits} \left( {\rho _C^2} \right) - {\log _2}{\mathop{\rm Tr}\nolimits} \left( {\rho _{\bar AC}^2} \right)~.
\end{align}
The bound is following from the fact that, in such a system, we have $\Delta T_\text{logical}=m_A$ and $\Delta T_i=1$. When decoupling happens, we have
\begin{align}
1 - \frac{{m_A^2}}{{4{D^2}}} \gtrapprox 1 - \frac{{\left( {\tilde d_A^2 - 1} \right)G\left( {{n_D},{n_C},m} \right)}}{{{{\tilde d}_A}{{\tilde d}_B}G\left( {{n_C},{n_D},m} \right)}}~,
\end{align}
which is, approximately
\begin{align}
\frac{{m_A^2}}{{4{D^2}}} \lessapprox \frac{{{{\tilde d}_A}G\left( {{n_D},{n_C},m} \right)}}{{{{\tilde d}_B}G\left( {{n_C},{n_D},m} \right)}}~.
\end{align}
Although we do not have a rigorous verification for the above combinatorial inequality, some heuristic interpretations might be given. The right-hand side is seeking for a bound on fidelity by looking at permutations using the rule of the Haar randomness and performing some counting. The left-hand side is looking at a much weaker constraint by counting the number of qubits in the code subspace and the whole space directly. Thus heuristically, the specific construction based on the Haar randomness should be weaker than the universal constraints. In the limit where $n_D \gg n_A$ and sufficiently large charge, where we expect decoupling should happen, both sides vanish. Thus, the above inequality will provide constraints on the limitation of decouplings that might happen in the U(1)-charged Hayden-Preskill experiment. 

\subsection{Comments on quantum gravity conjectures}
In this part, we wish to present the study of the relationship between quantum information computations on the Hayden-Preskill experiment with the related quantum gravity conjectures. More precisely, we will discuss the existence and properties of global and gauge symmetries in quantum gravity. 

The above computations are far from real quantum gravitational systems like black holes. For instance, since we directly put a random unitary to represent black hole dynamics, there is no precise definition of the locality. Thus, it is hard to distinguish global and gauge symmetries, allowing us the opportunity to discuss both of them at the same time. However, we are still expecting that we could get some hints from toy models, following the spirit of several important works about toy models in quantum gravity (for instance, \cite{Hayden:2007cs}). 

People have long-time suspicion in the past that there is no precise notion of global symmetry in theories with consistent quantum gravity. An important argument is from the black hole no-hair theorem, stating that black holes are parametrized only by their mass, charge, and angular momentum. Since this charge means the local charge corresponding to gauge symmetries, there is no room for global charges to existing. Namely, suppose a precise notion of global symmetry exists, there is a huge amount of states which our descriptions about black holes cannot capture, causing large degeneracies and a thus large amount of entropy. If we don't wish it to happen, we should forbid the notion of global symmetry in the quantum gravitational system.  

Recently, people make significant progress about global symmetries in quantum gravity, based on holography and quantum information theory. \cite{Harlow:2018jwu,Harlow:2018tng} presents a holographic physical proof of no-global-symmetry conjecture based on the following two steps: First, they use the tools from holography, so-called entanglement wedge reconstruction, to argue that bulk global symmetries should act on the boundary transversally in each subregion. Second, they argue that these symmetries acting on the boundaries are the logical operators that preserve the code subspace. From quantum information theory, they conclude that those symmetries must be the logical identity, justifying the no-global-symmetry conjecture in the bulk. 

The proof made by \cite{Harlow:2018jwu,Harlow:2018tng}, especially the first step, is closely related to the proof of Eastin-Knill theorem stating that there is no covariant code for continuous symmetries. Moreover, one could also construct a holographic proof of the Eastin-Knill theorem (see \cite{Harlowtalk,Faist:2019ahr}). This follows from the fact that the holographic dictionary could be interpreted as a quantum error correction code \cite{Dong:2016eik,Almheiri:2014lwa}, while covariant code means that the global symmetry exists in the bulk effective field theory, namely, the code subspace of the boundary Hilbert space.

Here, in the context of the Hayden-Preskill experiment, the existence of the (approximate) Eastin-Knill theorem has another interpretation: it could set a lower bound on the fidelity of recovery during charged black hole evaporates. The structure of charges provides crucial constraints on the code subspace, forbid the black hole system to decouple sufficiently, and arrive at a chaotic, Markov state. 

Another side of the story comes from the claim about the (weak-form) weak gravity conjecture. While the no-global-symmetry theorem suggests a fundamental property about global symmetry in the bulk spacetime, weak gravity conjecture is a claim about gauge symmetry. The slogan of the conjecture is very simple: gravity seems to be the weakest force in any theory that is consistent. More precisely, consider a quantum gravity theory associated with gauge symmetries, there always exists a state such that the charge-to-mass ratio is larger than some constants. Considering the existence of the Reissner-Nordstrom metric in the semi-classical description of gravity, we could set this constant to be $1/M_\text{Planck}$ in the natural unit. 

We wish to argue that the existence of a lower bound on the charge in order to make the decoupling condition $d_D\gg d_A$ universal might be related to some heuristic arguments of weak gravity. In fact, it is pretty consistent with one of the earliest arguments that are presented in \cite{ArkaniHamed:2006dz}: suppose that weak gravity conjecture is false, then there are many states that have relatively small charges. Considering that during the black hole decay process, the charge is conserved while mass is decreasing, then there exists at least one decaying product with a growing charge-to-mass ratio. Then we arrive at the conclusion that there are many stable states which completely cannot decay, which is not favored for a consistent quantum gravitational theory. This situation is pretty similar to what we have in the U(1)-symmetric Hayden-Preskill experiment. If the charge we have is sufficiently small enough, black hole quantum states are hard to decouple even in the case where $d_D\gg d_A$, since small charge restrict the allowed code subspace. If we believe that the black hole decay process will end up with a highly chaotic state with a sufficiently large amount of decoupling, we have to make the initial state sufficiently charged. Thus the U(1)-symmetric Hayden-Preskill experiment could be regarded as a toy example about the decay of a charged black hole, supporting the statement of weak gravity conjecture. Furthermore, if we believe that this experiment happens in the 1+1 dimensions, we might regard the size of the qubit system as the mass. Thus the condition $m \ge \mathcal{O}\left(n_{D}\right)$ suggests a bound relating charge and mass, sharpening the analogy between this experiment and weak gravity conjecture.

Finally, we briefly address issues about gauge/global symmetries and holography. Here our Hayden-Preskill experiment is at least manifestly, built in the bulk. We are not able to directly build gauge groups using random unitaries in this work. However, in the discussions about chaotic systems, we would imagine that the decoupling process could appear in the boundary. People believe that roughly speaking, gauge symmetries in the bulk are dual to global symmetries in the boundary. Thus we could, in principle, using constructions like complex SYK model to probe gauge symmetries in the bulk. We look forward to future studies about concrete models with gauge/global symmetry duality in the future.

\section{Discrete symmetries}\label{DS}
Although the main part of this paper is focusing on continuous symmetry, we wish to address a closely related issue, discrete symmetry. We claim that unlike continuous symmetries, discrete symmetries are more similar to systems without any symmetry: many properties are not far from the full Haar randomness. 

The distribution of the Haar randomness on the unitary group could be generalized easily to the unitary group with extra discrete constraints. For a $D$-qubit system, extra constraints induced by symmetry $\mathcal{S}$ will end up with a uniform distribution on the quotient space: $\text{U}(2^D)/\mathcal{S}$. One of the earliest works about symmetries in random systems is due to Dyson \cite{dys} about the three-fold classification of random matrix theory. In his classification, we take $\mathcal{S}=\text{O}(2^D)$, inducing the circular orthogonal ensemble (COE), or we could also take $\mathcal{S}=\text{Sp}(2\times 2^D)$, inducing the circular symplectic ensemble (CSE). these two extended classes could be understood as random systems with time-reversal invariance $T$. COE corresponds to $T^2=1$ while CSE corresponds to $T^2=-1$. A more detailed classification, so-called Altland-Zirnbauer \cite{Altland:1997zz} classification, involves ten classes in total involving at most two antiunitary operators, is widely used in the study of condensed matter physics, for instance, ten-fold classification of topological insulators \cite{Ryu:2010zza,topokitaev}.

Unlike continuous symmetries, discrete symmetries often reflect topological properties in the many-body system. In random systems in general, some chaotic properties, for instance, level distribution or spectral form factor, will show universal behaviors associated with corresponding discrete symmetries. A three(ten)-fold of random systems could naturally induce a three(ten)-fold classification of quantum circuits. One could generalize those Weingarten functions easily through group theory. Some earlier works have been done in \cite{random,CS}, and those topics are recently summarized in \cite{thesis,Hunter-Jones:2018otn,Cotler:2019egt,Liu:2018hlr} in terms of OTOCs, random quantum circuits, and chaotic systems in general. 

In the context of discrete symmetries, Weingarten calculations are very similar to the Haar average without symmetry. For instance, we list some Weingarten functions for COE for large $L$:
\begin{align}
&{\rm{Wg}}(1) = \frac{1}{{L + 1}}~,\nonumber\\
&{\rm{Wg}}(1,1) = \frac{{L + 2}}{{L(L + 1)(L + 3)}}~,\nonumber\\
&{\rm{Wg}}(2) =  - \frac{1}{{L(L + 1)(L + 3)}}~.
\end{align}
Comparing with the results with the whole random unitaries, we see that asymptotically, extra symmetries cannot affect the scaling on $L$. In fact, one could check that it is indeed a generic feature of discrete symmetries. With the help of earlier works (for instance \cite{random}), we could easily check the following claim

\begin{remark}
Discrete topological symmetries cannot change the decoupling properties of random unitaries in the Dyson and Altland-Zirnbauer classifications. More precisely, the decoupling fidelities have the same asymptotic scaling as the random unitary of the whole unitary group, both in the contexts of Page theorem and Hayden-Preskill experiment in the large system. 
\end{remark}

The above generic feature could also be reflected in the language of quantum error correction, showing the striking differences between continuous and discrete symmetries. Intuitively, from the scrambling point of view, since Weingarten functions have the same scaling behavior, there are no extra limitations for a code arriving at the end of decoupling. In fact, unlike the Eastin-Knill theorem, discrete symmetries are allowed in quantum information theory, in the context of covariant codes.

\begin{thm}
There exist covariant codes with discrete symmetries. 
\end{thm}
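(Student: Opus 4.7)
The plan is to prove existence by explicit construction, exhibiting a code whose encoding intertwines a transversal physical action of a finite group with a non-trivial logical action, and which still satisfies the Knill-Laflamme condition from earlier in the paper. The simplest witness suffices, so I would take the three-qubit repetition code with $G=\mathbb{Z}_{2}$: set $E\ket{0}_{L}=\ket{000}$ and $E\ket{1}_{L}=\ket{111}$, let the non-trivial element act logically as $X$ and physically as $X\otimes X\otimes X$. Covariance is immediate because both sides of the intertwiner relation $W_{g}E = E V_{g}$ permute the two codewords identically. The Knill-Laflamme condition for the bit-flip error set $\{I,X_{1},X_{2},X_{3}\}$ is verified by direct inspection, establishing that this really is a non-trivial error-correction code and not a degenerate case.

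Second, I would emphasize how this generalizes, so that the theorem is not read as a statement about a single toy example. Any stabilizer code admitting a transversal implementation of a non-trivial logical gate produces a covariant code for the finite group generated by that gate: the Steane code gives covariance under the single-qubit Clifford group, the 15-qubit Reed-Muller code gives covariance under the $\mathbb{Z}_{8}$ group generated by a transversal $T$ gate, and the Shor code gives covariance under the logical Pauli group. In each case the transversal structure $W_{g}=\bigotimes_{i=1}^{D} W_{g}^{(i)}$ required by the covariance definition is manifest from the stabilizer construction, so the only verification needed is that the encoding commutes with the symmetry action, which follows from the transversal implementability of the logical gate.

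The conceptual point I would flag explicitly, to contrast with the Eastin-Knill obstruction, is that for a finite group the representation $g\mapsto W_{g}$ needs to be defined only at isolated points of the unitary group. One therefore never has to produce a continuously differentiable local action with a non-zero infinitesimal generator, which is precisely the hypothesis that drives the Eastin-Knill bound in the continuous case. The main ``obstacle'' is thus not in the construction — which is routine — but in ensuring that the example is \emph{exactly} covariant (rather than only approximately), that the logical symmetry action is non-trivial (so the statement is not vacuous under the trivial representation), and that the code corrects a non-trivial error set; all three are manifest in the repetition-code example and straightforward in the stabilizer generalizations.
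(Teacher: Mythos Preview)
Your proposal is correct: an existence statement is established by exhibiting any example, and the three-qubit repetition code with the transversal $X^{\otimes 3}$ action of $\mathbb{Z}_{2}$ is a perfectly valid witness, with covariance and the Knill--Laflamme condition for the single bit-flip errors verified explicitly. Your generalizations via transversal logical gates in stabilizer codes (Steane, Reed--Muller, Shor) are also standard and correct, and your closing remark isolating exactly which hypothesis of Eastin--Knill fails in the discrete case is a useful conceptual addition.

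The paper takes the same overall route---proof by example---but with a different witness and much less detail. Rather than the repetition code, it points to the three-\emph{qutrit} code arising in the AdS/CFT error-correction literature, observes that the logical $X$ and $Z$ operators there are transversally implemented and generate a discrete symmetry, and then defers further examples and proofs to the reference of Hayden et al. Your construction is more elementary and fully self-contained (you actually check the Knill--Laflamme condition rather than citing it), while the paper's choice of the qutrit code is motivated by its holographic context and keeps the discussion tied to the AdS/CFT thread running through Section~\ref{DS}. Either example suffices; yours is cleaner as a stand-alone proof, and the paper's is more thematically integrated.
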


In fact, from the earliest understanding about quantum error correction in AdS/CFT, the qutrit code discussed by \cite{Dong:2016eik}, is an example of covariant code. In fact, there exist logical operators $X_\text{logical}$ and $Z_\text{logical}$ in the code subspace that could be viewed as global discrete symmetries. Some other examples and proofs are summarized in \cite{Hayden:2017jjm}. 

Note that such a situation is different from quantum gravity and holography. People believe that the no-global-symmetry conjecture is correct even for discrete symmetries in a consistent theory of quantum gravity. Thus, toy models constructed from covariant codes with discrete symmetries cannot capture features about global symmetries for AdS/CFT codes. So how it works from holographic statements such that it proves a stronger version of the Eastin-Knill theorem? Here, we wish to take the point of view from \cite{Faist:2019ahr}. In the two-step proof by Harlow and Ooguri \cite{Harlow:2018jwu,Harlow:2018tng}, the second step, where symmetries are logical operators acting on each boundary, may not be preserved for generic codes that are used for fault-tolerant quantum computation. This is from the fact that we expect in the boundary quantum field theory, the code subspace is from the low energy spectrum of the corresponding CFT. Roughly speaking, symmetry operators acting on the boundary are required not to change the energy of the states too much, justifying that they are logical operators. 

In the recent discussions about quantum gravity, people make extensive studies on the Jackiw-Teitelboim gravity in two dimensions, as the dual gravitational theory of the SYK model. There are recently many discussions about discrete symmetries, and also continuous symmetries in such holographic context (see \cite{Liu:2019niv,Sachdev:2019bjn,Stanford:2019vob,Kapec:2019ecr,Iliesiu:2020qvm}). It is interesting and important to explore symmetries in those concrete models, and their scrambling and quantum error correction properties during quantum information processing. 

\section{Outlook}\label{CON}
In this section, we will point out some potential applications and possible future research directions along the line of this work.

\subsection{Gauge and global symmetries}
As we discuss in the main text, we could either build the Hayden-Preskill experiment directly with symmetries, or our stories about scrambling happen in the boundary. For the former, it will be interesting, in the future, to build concrete models about the Hayden-Preskill experiment, or in general, quantum error correction codes with gauge symmetry. In the context of AdS/CFT code, considerable progress has been made along the line of \cite{Pastawski:2015qua,Harlow:2015lma,Donnelly:2016qqt}. For the latter, it might be interesting to consider the holographic Hayden-Preskill experiment following the setup of \cite{Almheiri:2019psf} with global symmetry in the boundary. The problem setup discussed in the main text is far from reality. Thus, more realistic models, especially with a concrete definition of gauge symmetries, will definitely be helpful for the understanding of black hole evaporation associated with symmetries in its low energy description. Moreover, the implementation of gauge symmetries may not only be helpful for theoretical studies of tensor networks, quantum gravity, and AdS/CFT, but also for machine learning and neural networks. For instance, see \cite{learning,learning2}.

\subsection{Black hole thought experiments from toy models}
It will be helpful for considering more details in the U(1)-symmetric Hayden-Preskill experiment discussed in Section \ref{CO}. For instance, what is the precise sense of the approximate Eastin-Knill theorem here in the high energy physics sense, without using AdS/CFT? How to connect the discussions about lower bound on the charge, to the evaporation experiments in the statement of weak gravity conjecture? Is it possible to address the entropic arguments used in \cite{Cheung:2018cwt,Cheung:2019cwi} in the context of U(1)-symmetric Hayden-Preskill experiment? The last suggestion might require a generalization from Bell pair to thermofield double in the Hayden-Preskill-type experiment. It might also be interesting to discuss the connection between quantum gravity conjectures and the recovery construction, beyond the Petz map or Yoshida-Kitaev decoding \cite{Yoshida:2018ybz}, moreover, in the context of traversable wormholes \cite{Gao:2016bin,Maldacena:2017axo}. 

Recently, significant progress has been made along the line of the Hayden-Preskill experiment, black hole evaporation, and the information paradox. With the help of tools from the gravitational path integral, holography, and entanglement, people reproduce the Page curve, a standard result of black hole unitarity from random unitary toy models \cite{Penington:2019npb,Almheiri:2019psf,Penington:2019kki,Almheiri:2019qdq,Marolf:2020xie}. It might be interesting to address those calculations about black hole evaporation with global or gauge symmetries into account, in generic semiclassical descriptions following \cite{Penington:2019npb} or in explicit models like Jackiw-Teitelboim gravity or the SYK model. Addressing those calculations in a holographic context will be helpful for interpretations of symmetries in quantum gravity and quantum information processing in black hole dynamics.

\subsection{Experimental platforms and large-scale bootstrap}
Nowadays, there are rising interests from formal constructions of black hole thought experiment to quantum simulation in the analog or digital, real platform in code-atomic and condensed matter physics \cite{Brown:2019hmk}. These studies might be helpful for the study of quantum information science itself, providing clear targets and motivations for benchmarking cold-atomic devices or near-term quantum devices. Moreover, it might also be helpful towards a deeper understanding of black hole dynamics and holography in the near-term or long-term, and clarifying conceptual problems about computability and stimulability of our universe, namely, the quantum Church-Turing Thesis \cite{Bouland:2019pvu,Susskind:2020kti,Kim:2020cds}.

Charged systems in quantum many-body physics could provide fruitful platforms for exploring the propagation of quantum information in the quantum materials. For instance, significant progress has been made about thermoelectric transport in the complex SYK theoretically and experimentally \cite{Kruchkov:2019idx}. Here, we wish to emphasize recent developments of conformal bootstrap for exactly solving CFTs with global symmetry. In the context of the four-point function, global symmetries will decompose crossing equations in different sectors based on their representation theory, causing generically different critical exponents for corresponding CFTs. With novel analytic and numerical tools developed recently \cite{Poland:2018epd}, people could generate information about global symmetries automatically \cite{Go:2019lke} for a large class of Lie group, and solve bootstrap equations in a relatively large scale \cite{Landry:2019qug,Chester:2019ifh}. A particularly successful example recently is O(2) symmetric CFT in three dimensions \cite{Chester:2019ifh} (very similar to U(1)), which is widely applied in condensed matter and cold-atomic physics. Those studies about CFTs will potentially be very helpful for providing robust data for AdS/CFT and theoretical/experimental studies of quantum materials. 

\subsection{Quantum simulation}
Here we say a few words about quantum simulation for charged systems. It is definitely interesting to construct further toy models in quantum circuits to simulate some specific charged black hole evaporation processes either in holography or in the bulk. A particularly interesting example is \cite{Horowitz:2016ezu}, where arguments about weak gravity conjecture could help prevent the appearance of the naked singularity. It might be interesting to address possible quantum information interpretations or check what happens for such processes in the context of holography.

Another interesting connection is about quantum circuits and hydrodynamics. Opposite to discussions mostly in the main text, people often treat U(1)-symmetric random circuits as a specific random circuit model for energy conservation. One could assign possible macroscopic hydrodynamical variables to quantum circuits, to study emergent phenomena of the microscopic Haar randomness. This study could be helpful for understandings of quantum circuits themselves (for instance, the efficiency of approaching $k$-designs \cite{Nahum:2017yvy,un}, or emergent phenomena like replica wormholes in open quantum systems \cite{Piroli:2020dlx}), but also be helpful for hydrodynamics itself. Can we simulate classical, emergent, novel hydrodynamical effects in quantum circuits? What is the meaning of concepts, like non-Newtonian fluid or Reynolds number in quantum circuits? Thanks to holography, it is not completely irrelevant to quantum gravity \cite{Bredberg:2011jq}. 

\subsection{Other suggestions}
Here we collect some other suggestions for future research:
\begin{itemize}
\item Finite temperature. It might be interesting to generalize calculations in this paper to finite temperature, for instance, finite temperature version of spectral form factors, frame potentials or OTOCs, or Hayden-Preskill experiment with finite temperature (thermofield double). The Boltzmann distribution makes the calculation non-trivial, but more realistic to claim a connection between toy models and finite temperature black holes.
\item More symmetries and exotic quantum matter. Here we only discuss U(1) symmetry as an example of continuous symmetries. However, it will be interesting to generalize it to more general, higher symmetries. Recently, people discuss several topics about the anomaly, global symmetries, field theory dualities, connecting the study of QCD to topological materials (for instance, see \cite{Gaiotto:2017yup}). It might be interesting to discuss the theory of quantum information in such systems with higher, more complicated symmetries (for instance, \cite{Kapec:2019ecr}). 
\end{itemize}

\section*{Acknowledgement}
I thank my advisors Cliff Cheung, David Simmons-Duffin, and John Preskill, for their numerous supports and discussions during the course of the project, especially through my hard time. I also thank John Preskill and Pengfei Zhang for their related discussions and collaborations in progress. I thank Victor Albert, Fernando Brand$\tilde{\text{a}}$o, Masanori Hanada, Patrick Hayden, Alexei Kitaev, Richard Kueng, Aitor Lewkowycz, Hirosi Ooguri, Geoffrey Penington, Xiaoliang Qi, Subir Sachdev, Burak Sahinoglu, Eva Silverstein, Douglas Stanford, Brian Swingle, Eugene Tang, Gonzalo Torroba, Tian Wang, Yi-Zhuang You and Sisi Zhou for related discussions since 2017, Mark Wilde and Beni Yoshida for helpful correspondence on the draft. I also thank Sterl Phinney for offering me the opportunity to teach hydrodynamics at Caltech (ph136b 2018-2019 winter term), inspiring my interests in charged quantum circuits. Finally, I wish to express a special acknowledgment to Robert Brandenberger for his recent concern. I am also honored to attend the MURI team discussions about quantum error correction code and quantum gravity as one of quantumists from Preskill's group at Caltech. 
\\
\\
I am supported in part by the Institute for Quantum Information and Matter (IQIM), an NSF Physics Frontiers Center (NSF Grant PHY-1125565) with support from the Gordon and Betty Moore Foundation (GBMF-2644), by the Walter Burke Institute for Theoretical Physics, and by Sandia Quantum Optimization \& Learning \& Simulation, DOE Award \#DE-NA0003525.

\appendix 
\section{Engineering the Haar integrals}\label{gra}
There are several programming packages for computing the Haar integrals systematically (for instance, \cite{package1,package2}). Here, to be self-complete, we will introduce some technical engineering details about the \texttt{RTNI} package appearing recently \cite{Fukuda:2019pzs}, which cover many technical computations about the Haar integrals in this paper.

As discussed in the main text, one could usually deal with problems about the Haar integral in the following form,
\begin{align}
&\int_\mathcal{H} {U_{{j_1}}^{{i_1}} \ldots U_{{j_p}}^{{i_p}}U_{{j_1}'}^{\dagger,{i_1}'} \ldots U_{{j_p}'}^{{\dagger,i_p}'}dU}~, \nonumber\\
&= \sum\limits_{\alpha ,\beta  \in {S_p}} {\delta _{{j_{\alpha (1)}}'}^{{i_1}} \ldots \delta _{{j_{\alpha (p)}}'}^{{i_p}}\delta _{{j_1}}^{{i_{\beta (1)}}'} \ldots \delta _{{j_p}}^{{i_{\beta (p)}}'}{\rm{Wg}}({\alpha ^{ - 1}}\beta )} ~,
\end{align}
where $\alpha,\beta$ are elements of permutation group $S_p$ over $1,2,\cdots,p$. In many cases, especially when $U$ appears many times in the integral, it is hard to sum over so many $\delta$ algebras by hand. Thus, it will be much more efficient to consider computer algebra. The software \cite{Fukuda:2019pzs} is especially useful for dealing with the following types of integral
\begin{align}
{X_1}{U_1}{X_2}{U_3} \cdots {X_n}{U_n}~,
\end{align}
or 
\begin{align}
\operatorname{Tr}\left(X_{1} U_{1} X_{2} U_{3} \cdots X_{n} U_{n}\right)~,
\end{align}
in the compact form, where
\begin{align}
{U_j} \in \left\{ {U,{U^\dag },{U^T},{U^*}} \right\}~.
\end{align}
In this paper, we will only use the case for $U^\dagger$. 

One could use the function \texttt{MultinomialexpectationvalueHaar} to obtain analytic expressions for correlators in different types. For instance, we consider computing
\begin{align}
\texttt{MultinomialexpectationvalueHaar[d, \{1, 2, 1, 2\}, \{X1, Y1, X2, Y2\}, False]}~,
\end{align}
which means 
\begin{align}
\int {dU} \left( {{X_1}U{Y_1}{U^\dag }{X_2}U{Y_2}{U^\dag }} \right)~.
\end{align}
Here, \texttt{d} is the dimension of total Hilbert space, $\texttt{\{X1, Y1, X2, Y2\}}$ means the operators appearing in the integral, $\{1, 2, 1, 2\}$ is the index list ($j$s) with the set ${U_j} \in \left\{ {U,{U^\dag },{U^T},{U^*}} \right\}$, in order, of $U$s appearing in the integral, and $\texttt{False}$ means computation without the trace (otherwise it will be \texttt{True}). The output is as expected,
\begin{align}
\frac{{(\texttt{X1.X2(dTr[Y1]Tr[Y2] - Tr[Y2.Y1])} + \texttt{X1Tr[X2]( - Tr[Y1]Tr[Y2] + dTr[Y2.Y1]))}}}{\texttt{(d( - 1 + d*d))}}~.
\end{align}
Furthermore, the package has nice graph representations provided associated with integrals, consistent with quantum circuit graphs appearing in the literature. For more details, see the introduction of the package itself in \cite{Fukuda:2019pzs} or corresponding help documents. For theoretical illustrations, there is a useful paper here, especially for computing the Haar integrals for small moments \cite{Rich}.

\section{Form factors specifying diagonal and off-diagonal terms}\label{alter}
This is a technical note on the alternative representation of spectral form factors. We could specify the type of sums in the spectral form factor by isolating diagonal and off-diagonal terms. We introduce the notation
\begin{align}
&P_1=R_1~,\nonumber\\
&P_{2}^{{{\cal E}_p}}({d_p}) = \sum\limits_{a \ne b} {\int {D\lambda } {e^{i({\lambda _{p,a}} - {\lambda _{p,b}})}}}~,\nonumber\\
&P_{21}^{{{\cal E}_p}}({d_p}) = \sum\limits_{a \ne b} {\int {D\lambda } {e^{i({\lambda _{p,a}} + {\lambda _{p,b}})}}}~,\nonumber\\
&P_{22}^{{{\cal E}_p}}({d_p}) = \sum\limits_{a \ne b} {\int {D\lambda } {e^{i({2\lambda _{p,a}} - {\lambda _{p,b}})}}} \nonumber\\
&P_3^{{{\cal E}_p}}({d_p}) = \sum\limits_{a \ne b \ne c} {\int {D\lambda } {e^{i({\lambda _{p,a}} + {\lambda _{p,b}} - {\lambda _{p,c}})}}}~,\nonumber\\
&P_{31}^{{{\cal E}_p}}({d_p}) = \sum\limits_{a \ne b\ne c} {\int {D\lambda } {e^{i({2\lambda _{p,a}} - {\lambda _{p,b}}-{\lambda _{p,c}})}}}~,\nonumber\\
&P_4^{{{\cal E}_p}}({d_p}) = \sum\limits_{a \ne b \ne c \ne d} {\int {D\lambda } {e^{i({\lambda _{p,a}} + {\lambda _{p,b}} - {\lambda _{p,c}} - {\lambda _{p,d}})}}}~.
\end{align}
The form factors could also be represented in terms of $P$s for the U(1)-symmetric system. For instance, we have 
\begin{align}
R_2^{{ \oplus _p}{\mathcal{E}_p}}(L) =L+ \sum_{p } {P_2^{{\mathcal{E}_p}}({d_p})}  + \sum_{p \ne q} {P_1^{{\mathcal{E}_p}}\left( {{d_p}} \right)P_1^{{\mathcal{E}_q}*}\left( {{d_q}} \right)}~,
\end{align}
and 
\begin{align}
&R_4^{{ \oplus _p}{{\cal E}_p}}(L) = \sum\limits_{p \ne q \ne u \ne v} {P_1^{{{\cal E}_p}}({d_p})P_1^{{{\cal E}_q}}({d_q})P_1^{{{\cal E}_u}*}({d_u})P_1^{{{\cal E}_v}*}({d_v})} \nonumber\\
&+ 4{\rm{Re}}\sum\limits_{p \ne q \ne u} {P_2^{{{\cal E}_p}}({d_p})P_1^{{{\cal E}_q}}({d_q})P_1^{*{{\cal E}_u}}({d_u})}  + 2{\rm{Re}}\sum\limits_{p \ne q \ne u} {P_{21}^{{{\cal E}_p}}({d_p})P_1^{*{{\cal E}_q}}({d_q})P_1^{*{{\cal E}_u}}({d_u})} \nonumber\\
&+ 2{\rm{Re}}\sum\limits_{p \ne q \ne u} {P_1^{{\cal E}_p^2}({d_p})P_1^{{{\cal E}_q}*}({d_q})P_1^{{{\cal E}_u}*}({d_u})} \nonumber\\
&+ 4{\rm{Re}}\sum\limits_{p \ne q} {P_3^{{{\cal E}_p}}({d_p})P_1^{*{{\cal E}_q}}({d_q})}  + 2{\rm{Re}}\sum\limits_{p \ne q} {P_2^{{{\cal E}_p}}({d_p})P_2^{{{\cal E}_q}}({d_q})} \nonumber\\
&+ {\rm{Re}}\sum\limits_{p \ne q} {P_{21}^{{{\cal E}_p}}({d_p})P_{21}^{*{{\cal E}_q}}({d_q})}  + 4{\rm{Re}}\sum\limits_{p \ne q} {P_{22}^{{{\cal E}_p}}({d_p})P_1^{{{\cal E}_q}*}({d_q})} \nonumber\\
&+ 2{\rm{Re}}\sum\limits_{p \ne q} {P_1^{{\cal E}_p^2}({d_p})P_{21}^{{{\cal E}_q}*}({d_q})}  + 4(L - 1)\sum\limits_{p \ne q} {P_1^{{{\cal E}_p}}\left( {{d_p}} \right)P_1^{{{\cal E}_q}*}\left( {{d_q}} \right)}  + \sum\limits_{p \ne q} {P_1^{{\cal E}_p^2}\left( {{d_p}} \right)P_1^{{\cal E}_q^2*}\left( {{d_q}} \right)} \nonumber\\
&+ \sum\limits_p {P_4^{{{\cal E}_p}}({d_p})}  + 2{\rm{Re}}\sum\limits_p {P_{31}^{{{\cal E}_p}}({d_p})}  + 4(L - 1)\sum\limits_p {P_2^{{{\cal E}_p}}({d_p})}  + \sum\limits_p {P_2^{{\cal E}_p^2}({d_p})} \nonumber\\
&+ 2{L^2} - L~.
\end{align}

\bibliographystyle{utphys}
\bibliography{Biblio}

\end{document}